\theoremstyle{plain}
\newtheorem{theorem}{Theorem}[section]
\newtheorem{lemma}[theorem]{Lemma}
\newtheorem{proposition}[theorem]{Proposition}
\newtheorem{corollary}[theorem]{Corollary}
\theoremstyle{definition}
\newtheorem{definition}[theorem]{Definition}
\theoremstyle{remark}
\newtheorem{remark}[theorem]{Remark}
\def\m{\boldsymbol{m}}
\def\s{\boldsymbol{s}}
\def\t{\boldsymbol{t}}
\def\M{\mathbf{M}}
\def\Nm{\mathbf{N}} 
\newcommand{\Bs}{\mathcal{B}}
\newcommand{\Fs}{\mathcal{F}}
\newcommand{\Ws}{\mathcal{W}}
\def\C{\mathbb{C}}
\def\Z{\mathbb{Z}}
\def\N{\mathbb{N}}
\def\R{\mathbb{R}}
\def\d{~\text{d}}
\def\supp{\textnormal{supp}}
\def\({\left(}
\def\){\right)}
\def\eps{\epsilon}
\def\La{\Lambda}
\def\om{\omega}
\def\Om{\Omega}
\def\al{\alpha}
\def\bs{\boldsymbol}
\newcommand{\ind}[1]{\left\langle #1 \right\rangle}
\newcommand{\indi}[1]{\langle #1 \rangle} 
\newcommand{\floor}[1]{\left\lfloor #1 \right\rfloor}
\newcommand{\ceil}[1]{\left\lceil #1 \right\rceil}
\DeclareMathOperator{\Supp}{supp} 
\DeclareMathOperator*{\minimize}{minimize}
\def\Bwh{B_{\textnormal{wh}}}
\def\Bwave{B_{\textnormal{wave}}}
\def\Bsa{B_{\textnormal{sa}}}
\def\Bsp{B_{\textnormal{sp}}}
\def\bsa{b^{\textnormal{sa}}}
\def\bsp{b^{\textnormal{sp}}}
\def\Sboms{S_{\bom,\s}}
\def\zetas{\zeta_{\s,\bom}}
\def\Llog{r \cdot \log(2\tilde{m})\cdot \log(2N)\cdot\log^2(2s) + \log(\eps^{-1})}
\def\Lanul{\Lambda_{\nu, j, \textnormal{left}}}
\def\Lanum{\Lambda_{\nu, j, \textnormal{mid}}}
\def\Lanur{\Lambda_{\nu, j, \textnormal{right}}}
\def\bom{\boldsymbol{\omega}}
\title{Uniform recovery in infinite-dimensional compressed sensing and
applications to structured binary sampling}
\author{Ben Adcock \footnotemark[1]  
\and
Vegard Antun \footnotemark[2] \footnotemark[4]
\and Anders C. Hansen \footnotemark[3] \footnotemark[2]
}
\begin{document}

\maketitle

\footnotetext[1]{Simon Fraser University, Canada}
\footnotetext[2]{University of Oslo, Norway}
\footnotetext[3]{University of Cambridge, United Kingdom}
\footnotetext[4]{Corresponding author (vegarant@math.uio.no)}

\begin{abstract}
Infinite-dimensional compressed sensing deals with the recovery of analog
signals (functions) from linear measurements, often in the form of integral
transforms such as the Fourier transform.  This framework is well-suited to
many real-world inverse problems, which are typically modelled in
infinite-dimensional spaces, and where the application of finite-dimensional
approaches can lead to noticeable artefacts.  Another typical feature of such
problems is that the signals are not only sparse in some dictionary, but
possess a so-called local sparsity in levels structure.  Consequently, the
sampling scheme should be designed so as to exploit this additional structure.
In this paper, we introduce a series of uniform recovery guarantees for
infinite-dimensional compressed sensing based on sparsity in levels and
so-called multilevel random subsampling.  By using a weighted
$\ell^1$-regularizer we derive measurement conditions that are sharp up to log
factors, in the sense they agree with those of certain oracle estimators.
These guarantees also apply in finite dimensions, and improve existing results
for unweighted $\ell^1$-regularization.  To illustrate our results, we consider
the problem of binary sampling with the Walsh transform using orthogonal
wavelets.  Binary sampling is an important mechanism for certain imaging
modalities.  Through carefully estimating the local coherence between the Walsh
and wavelet bases, we derive the first known recovery guarantees for this
problem.
\end{abstract}

\paragraph*{Keywords:} Infinite-dimensional compressed sensing, uniform
recovery, Walsh sampling, wavelet recovery, sparsity in levels, local coherence

\paragraph*{Mathematics Subject Classification (2010):} 94A20, 42C40, 42C10, 15B52 

\section{Introduction}\label{s:introduction}

Compressive sensing (CS), introduced by Candes, Romberg
\& Tao in \cite{Candes06} and Donoho in \cite{Donoho06}, has been an area of substantial research during the last
decade.  The key assumption, which lays
the foundation for this field of research, is that a sparse vector
$x \in \C^M$ can be recovered from an underdetermined system of linear
equations, using, for instance, convex optimization algorithms \cite{Eldar12, Foucart13}.

Imaging has been one of the most successful areas of application of CS.  However, in this area, the sparsity assumption is typically too general.  Examples include
all applications using Fourier samples -- 
such as Magnetic Resonance Imaging (MRI) \cite{Larson11, Lustig07, Lustig08},
surface scattering \cite{Jones16atom}, Computerized Tomography (CT) and
electron microscopy -- as well as applications using binary sampling, e.g.\ fluorescence microscopy \cite{Studer12}, 
lensless imaging \cite{Zomet06} and numerous other optical imaging modalities \cite{CSCodedAp,GehmBradyCSOpt,WillettCSOptTut}. 
Natural images, when sparsified via a wavelet (or more generally, $X$-let)
transform, are not only sparse, but have specific sparsity structure
\cite{Adcock17,AsymptoticCS}.
For wavelets, which will be our sparsifying transform in
this paper, natural images have coefficients where most of the large
entries are concentrated at the coarse scales, and progressively fewer at
the fine scales (termed \textit{asymptotic sparsity} in \cite{Adcock17}).

In the presence of structured sparsity, it is natural to ask how best to promote this additional structure.  In \cite{Adcock17} it was proposed to do this via the sampling operator.  Wavelets partition Fourier space into dyadic bands corresponding to distinct scales.  Hence, by choosing Fourier samples in these bands corresponding to the local sparsities, one obtains as structured sampling scheme -- a so-called \textit{multilevel sampling} scheme -- which promotes the asymptotic sparsity structure.  The practical benefits of such schemes have been demonstrated in \cite{AsymptoticCS} for various different imaging modalities, including MRI, Nuclear Magnetic Resonance (NMR) spectroscopy, fluorescence microscopy and Helium Atom Scattering.  Theoretical analysis has been presented in \cite{Adcock17} (nonuniform recovery) and \cite{Bastounis17,Li17} (uniform recovery in the finite-dimensional setting).

\subsection{Main results}
This paper has two main objectives.  First, we generalize
existing uniform recovery guarantees \cite{Bastounis17,Li17} from the
finite-dimensional to the infinite-dimensional setting.  This extension is
important for practical imaging.  Although much of the compressive imaging
literature considers the recovery of discrete images (i.e.\ finite-dimensional
arrays) from discrete measurements (e.g.\ the discrete Fourier transform),
modalities such as MRI, NMR and others are naturally analog, and hence better
modelled over the continuum (i.e.\ functions, and the continuous Fourier
transform).  Indeed, as we will see in Section \ref{subsec:shortcomings}, discretizing such a problem leads to measurement mismatch
\cite{chi11}, and in the case of wavelet recovery, the wavelet crime
\cite[232]{Strang96}, both of which can introduce artefacts in the
reconstruction \cite{Guerquin12}. In this paper, we consider signals as
functions $f \in L^2([0,1))$ and work with continuous integral transforms, thus
avoiding these pitfalls.

In our theoretical analysis, we also improve the uniform recovery guarantee given in previous works \cite{Bastounis17,Li17}.  Unlike previous results, our recovery guarantees are, up to log factors, optimal: specifically, they agree with those of the oracle least-square estimator based on \textit{a priori} knowledge of the support \cite{ABB18}.  We do this by replacing the standard $\ell^1$-minimization decoder by a certain weighted $\ell^1$-minimization decoder; an idea originally proposed in \cite{Traonmilin15}.

Our second objective is to consider binary sampling.  Previous works have addressed the case of (discrete or continuous) Fourier sampling.  Yet many imaging modalities, e.g.\ fluorescence microscopy and lensless imaging, require binary sampling operators.  To do so, we replace the Fourier transform
\[ \Fs f(\om) \coloneqq \int_{[0,1)} f(x)e^{-2\pi \om  x } \d x, ~~~~~
f \in L^{2}([0,1)),\] 
by the binary \emph{Walsh
transform}
\[ \Ws f(n) \coloneqq \int_{[0,1)} f(x) w_{n}(x) \d x, ~~~~~ f \in
L^{2}([0,1)) \]
where $w_{n}\colon [0,1)\to \{+1,-1\}$, $n \in \Z_{+} \coloneqq
\{0,1,\ldots\}$ denote the Walsh functions.
This is a widely used sampling operator in binary imaging 
\cite{Studer12, Zomet06}, and often goes under the name of Hadamard
sampling in the discrete case.  Working with this continuous transform, we
provide analogous guarantees for binary sampling to those for Fourier sampling.
As a side note, we remark that working in the continuous setting also
simplifies the analysis (specifically, the derivation of so-called
\textit{local coherence} estimates) over working directly with the discrete
setup.

We note that in this paper we only consider recovery guarantees for one
dimensional functions. We expect that the setup for higher dimensional function
will deviate slightly from what we present here, and we will save this
discussion for future work.

The outline of the remainder of this paper is as follows.  We commence in Section \ref{sec:fin_dim} by reviewing previous work, and in particular, the existing finite-dimensional theory.  We then introduce an abstract infinite-dimensional model for isometries $U$ acting on $\ell^2(\N)$ in Section \ref{sec:infin_dim}.
Here we will derive sufficient conditions for such 
operators to provide uniform recovery guarantees.  In Section \ref{sec:4} we continue this 
work by finding conditions for which the cross-Gramian $U$ between a 
wavelet and Walsh basis satisfies these conditions. Finally in Section \ref{sec:proof1}, 
\ref{sec:coher} and \ref{sec:proof2} we will present proofs of our main results.

\section{Sparsity in levels in finite dimensions}
\label{sec:fin_dim}

\subsection{Notation}
For $N\in \N$ and $\Om \subseteq \{1,\ldots, N\}$ we let $P_{\Om} \in
\C^{N\times N}$ denote the projection onto the linear span of the associated subset of the canonical basis, i.e. for 
$x\in \C^N$, we have $(P_{\Om}x)_{i} = x_i$ if $i\in \Om$ and $(P_{\Om}x)_{i} = 0$ if 
$i\not\in \Om$. Sometimes, we will abuse this notation slightly by assuming $P_{\Om} \in 
\C^{|\Om|\times N}$, and discard all the zero entries in $P_{\Om}x$. Whether we 
mean $P_{\Om}\in \C^{N\times N}$ or $P_{\Om} \in \C^{|\Om|\times N}$ will be clear 
from the context. If $\Om = \{N_{k-1}+1, \ldots, N_{k}\}$ we simply write 
$P_{N_k}^{N_{k-1}} = P_{\{N_{k-1}+1,\ldots, N_{k}\}}$, and simply $P_{N_k}$ if $N_{k-1}=0$.

We call a vector $x \in \C^N$ $s$-sparse if $|\supp(x)| \leq s$, where $\supp(x) = \{ i : x_i \neq 0 \}$. We write 
$A \lesssim B$ if there exits a constant $C> 0$ independent of all relevant
parameters, so that $A \leq C B$, and similarly for $A \gtrsim B$.

\subsection{Finite model}
Let $V \in \C^{N\times N}$ be a measurement matrix e.g.\ a Fourier of Hadamard matrix,
denoted $V_{\text{Four}}$ and
$V_{\text{Had}}$, respectively, and let $\Om \subset \{1,\ldots N\}$ with $|\Om|=m < N$.
In a typical finite-dimensional CS setup we consider the recovery of a signal $x \in \C^N$ from measurements $y = P_{\Om} V x + e \in \C^m$, where $e \in \C^m$ is a vector of
measurement error. If $x$ is sparse in a discrete wavelet basis, one then
recovers its coefficients by solving the optimization problem 
\begin{equation}
    \label{eq:QCBP_fin}
    \minimize_{z \in \C^N} \|z\|_1 \quad\text{ subject to }\quad
    \|P_{\Om}V\Psi^{-1}z - y\|_2 \leq \eta
\end{equation}
where $\Psi \in \C^{N\times N}$ is a discrete wavelet transform and
$\eta \geq \|e\|_2$ is a noise parameter. Usually one would scale $V
\in \C^{N\times N}$ so that it becomes orthonormal and choose an orthonormal
wavelet basis, so that the matrix $U = V \Psi^{-1} = V \Psi^T$
acts as an isometry on $\C^N$.   

Suppose that $U$ is indeed an isometry.  To obtain a uniform recovery guarantee for the above system, one typically first shows
that the matrix $A = \frac{1}{\sqrt{p}}P_{\Om}U \in
\C^{m\times N}$, with $p=\frac{m}{N}$, satisfies the \emph{Restricted Isometry Property} (RIP) with
high probability.  
\begin{definition}[RIP]
Let $1 \leq s \leq N$ and $A \in \C^{m \times N}$.  The \emph{Restricted Isometry Constant (RIC)} of order $s$ is the smallest $\delta \geq 0$ such that   
\[ (1-\delta)\|x\|_{2}^{2} \leq \|Ax\|_{2}^{2} \leq (1+\delta)\|x\|_{2}^{2} 
\quad \forall x \in \Sigma_{s}, \]
where $\Sigma_s$ denotes the set of $s$-sparse vectors in $\C^N$.  If $0 \leq \delta <1$ we say that $A$ has the \emph{Restricted Isometry Property (RIP)} of order $s$. 
\end{definition}
\begin{theorem}[{\cite[Thm.\ 6.12]{Foucart13}}]
    \label{thm:RIP}
    Suppose the RIC $\delta_{2s}$ of a matrix
$A \in \C^{m\times N}$ satisfies $\delta_{2s} < 4/\sqrt{41}\approx 0.62$. Then 
for any $x\in\C^N$ and $e \in \C^m$ with $\|e\|_2\leq \eta$, any solution
$\hat{x} \in \C^N$ of 
\[ \minimize_{z \in \C^N} \|z\|_1 \quad\text{subject to}\quad
\|z - (Ax+e)\|_{2} \leq \eta \] 
satisfies 
\[
\|x-\hat{x}\|_2 \leq \frac{C}{\sqrt{s}} \sigma_{s}(x)_1 + D \eta
\]
where $C,D> 0$ are constants dependent on $\delta_{2s}$ only and 
$\sigma_{s}(x)_1 = \inf\{\|x-z\|_1 : z \in \Sigma_s \}$.
\end{theorem}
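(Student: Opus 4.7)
The plan is to route through the \emph{robust null space property} (NSP), which is the canonical way of converting a RIP bound into a stable/robust recovery guarantee for $\ell^1$-minimization. I would split the argument into two natural pieces: (i) RIP with $\delta_{2s} < 4/\sqrt{41}$ implies an $\ell^2$-robust NSP of order $s$; (ii) the $\ell^2$-robust NSP implies the stated error bound by a standard cone-condition argument.

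For step (i), the claim to establish is that there exist $\rho \in (0,1)$ and $\tau > 0$, depending only on $\delta_{2s}$, such that
\[ \|v_S\|_2 \leq \frac{\rho}{\sqrt{s}}\|v_{S^c}\|_1 + \tau \|Av\|_2 \qquad \forall v \in \C^N, \ |S|\leq s. \]
The standard approach partitions $S^c$ into blocks $T_1, T_2, \ldots$ of size $s$ in decreasing order of magnitude, applies the RIP lower bound to the $2s$-sparse vector $v_S + v_{T_1}$, and controls the cross-terms $\langle A(v_S+v_{T_1}), Av_{T_j}\rangle$ for $j \geq 2$ via the polarization identity, which gives the bound $\delta_{2s}\|v_S+v_{T_1}\|_2\|v_{T_j}\|_2$ for disjointly supported sparse vectors. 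The telescoping tail estimate $\sum_{j\geq 2}\|v_{T_j}\|_2 \leq \|v_{S^c}\|_1/\sqrt{s}$ then closes the loop.

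For step (ii), let $\hat x$ be any QCBP minimizer and set $v = \hat x - x$. Since $x$ itself is feasible (because $\|Ax-(Ax+e)\|_2 = \|e\|_2 \leq \eta$), minimality yields $\|\hat x\|_1 \leq \|x\|_1$. Taking $S$ to be the support of a best $s$-term approximation of $x$, the usual rearrangement gives the cone-type inequality
\[ \|v_{S^c}\|_1 \leq \|v_S\|_1 + 2\sigma_s(x)_1, \]
while the triangle inequality produces $\|Av\|_2 \leq 2\eta$. Plugging these into the $\ell^2$-robust NSP, using $\|v_S\|_1 \leq \sqrt{s}\|v_S\|_2$, and then leveraging NSP once more on $\|v_{S^c}\|_2$ (via a sorting argument), one extracts the bound
\[ \|v\|_2 \leq \frac{C}{\sqrt{s}}\sigma_s(x)_1 + D\eta. \]

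The main obstacle is step (i) together with the specific sharp threshold $4/\sqrt{41}$. The naive partition-and-Cauchy-Schwarz argument only yields suboptimal constants such as $\delta_{2s} < 1/3$ or $\delta_{2s} < \sqrt{2}-1$. Pushing the admissible range up to $4/\sqrt{41}$ requires the Cai--Zhang-style \emph{sparse representation of a polytope} trick, where one expresses the truncated tail of $v$ as a convex combination of sparse vectors so that the RIP bounds can be applied in an optimally balanced way; squeezing out this refined constant is the technically delicate part, whereas step (ii) is then a mechanical chase through standard inequalities.
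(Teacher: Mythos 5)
The paper does not actually prove this statement: it is imported verbatim as \cite[Thm.\ 6.12]{Foucart13}, so there is no internal proof to compare against, and the right benchmark is the proof in that reference. Your two-step plan --- (i) RIP with $\delta_{2s} < 4/\sqrt{41}$ implies the $\ell^2$-robust null space property of order $s$ with constants depending only on $\delta_{2s}$, and (ii) the robust NSP plus feasibility of $x$, the cone inequality $\|v_{S^c}\|_1 \leq \|v_S\|_1 + 2\sigma_s(x)_1$, and $\|Av\|_2 \leq 2\eta$ yield the error bound --- is exactly the architecture of the proof in Foucart--Rauhut (their Theorem 6.13 feeding into the NSP-based recovery theorem), and step (ii) as you describe it is correct and complete in outline. (You also silently, and correctly, read the paper's constraint $\|z - (Ax+e)\|_2 \leq \eta$ as the evident typo for $\|Az - (Ax+e)\|_2 \leq \eta$; as printed it does not even typecheck, since $z \in \C^N$ and $Ax + e \in \C^m$.) The one place where you diverge from the cited proof is the mechanism behind the sharp constant: Foucart--Rauhut obtain $4/\sqrt{41}$ not from the Cai--Zhang sparse-representation-of-a-polytope argument but from a refined tail estimate for the block decomposition, namely $\sum_{j \geq 1} \|v_{T_j}\|_2 \leq \|v\|_1/\sqrt{s} + (\sqrt{s}/4)\|v\|_{\infty}$ in place of the naive $\sum_{j \geq 2} \|v_{T_j}\|_2 \leq \|v_{S^c}\|_1/\sqrt{s}$; the condition $\delta_{2s} < 4/\sqrt{41}$ then falls out as $\delta/\sqrt{1-\delta^2} < 4/5$. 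The Cai--Zhang polytope technique you invoke is a genuinely different (and later) device that establishes the robust NSP under the weaker hypothesis $\delta_{2s} < 1/\sqrt{2}$, which strictly contains $4/\sqrt{41} \approx 0.6246 < 1/\sqrt{2} \approx 0.7071$; so carrying out your route would prove the stated theorem a fortiori, at the cost of a somewhat heavier convex-decomposition argument, whereas the book's route is tailored to produce precisely the $4/\sqrt{41}$ threshold with elementary block estimates. Either way there is no gap in your proposal, only a misattribution of which refinement the cited constant comes from.
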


For an isometry $U\in \C^{N\times N}$the question of whether or not $P_{\Om} U$ satisfies the RIP is related to the so-called \emph{coherence} of $U$:
\begin{definition}[Coherence] Let $U\in \C^{N\times N}$ be an isometry. The 
\emph{coherence} of $U$ is 
\[ \mu(U) = \max_{i,j =1,\ldots, N} |U_{ij}|^2 \in [N^{-1},1]. \]
\end{definition}
\begin{theorem}[{\cite[Thm.\ 12.32]{Foucart13}}]
    Let $U \in \C^{N\times N}$ be an isometry and let $0< \delta,\eps < 1$. Suppose 
$\Om = \{t_1, \ldots t_m\} \subseteq \{1,\ldots, N\}$ where each $t_k$  is
chosen uniformly and independently at random from the set $\{1,\ldots,N\}$. If 
\[ m \gtrsim \delta^{-2}\cdot s \cdot N \cdot \mu(U)\cdot \(\log(2m)\log(2N)\log^2(2s) + \log(\eps^{-1})\)\]
then with probability $1-\eps$ the matrix $A = \tfrac{1}{\sqrt{p}}P_{\Om}U\in \C^{m\times N}$, with $p=\tfrac{m}{N}$, satisfies the 
RIP of order $s$ with $\delta_s \leq \delta$.
\end{theorem}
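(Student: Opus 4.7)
The plan is to view $A^*A$ as an empirical mean of rank-one operators and reduce the RIP to controlling the supremum of an empirical process indexed by $s$-sparse unit vectors. Writing $u_j \in \C^N$ for the $j$-th row of $U$ as a column vector, the isometry property gives $\sum_{j=1}^N u_j u_j^* = I$, and because the indices $t_k$ are drawn uniformly from $\{1,\ldots,N\}$, one has $A^*A = \tfrac{N}{m}\sum_{k=1}^m u_{t_k} u_{t_k}^*$ with $\E[A^*A] = I$. Letting $D_{s,N}$ denote the set of $s$-sparse vectors in $\C^N$ with Euclidean norm at most one, the restricted isometry constant equals
\[\delta_s \;=\; \sup_{x \in D_{s,N}} \Big| \frac{1}{m} \sum_{k=1}^m \big(N|\ind{u_{t_k}, x}|^2 - \|x\|_2^2\big) \Big|,\]
so it suffices to control this supremum in expectation and with high probability.

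First I would bound $\E[\delta_s]$. A standard symmetrization step introduces independent Rademacher signs $\epsilon_k$ and reduces the task to estimating the conditional Rademacher process $\E\sup_{x \in D_{s,N}} | m^{-1}\sum_k \epsilon_k N|\ind{u_{t_k}, x}|^2 |$. Conditional on the draw $(t_k)$, this process is subgaussian in the empirical pseudo-metric induced by $x \mapsto (N|\ind{u_{t_k}, x}|^2)_k$, and the coherence bound $|U_{ij}|^2 \leq \mu(U)$ lets us compare this metric to the $\ell^2$-metric restricted to the $s$-sparse cone. Dudley's integral inequality then bounds the expectation via covering numbers of $D_{s,N}$, which one estimates in two regimes: a volumetric/dyadic bound at small scales and a Maurey-type empirical-width bound at large scales. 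Iterating the resulting self-bounding inequality yields the sharp estimate
\[\E[\delta_s] \;\lesssim\; \sqrt{\frac{N\mu(U)\, s \log(2m)\log(2N)\log^2(2s)}{m}}.\]
This Dudley / covering-number step is the main obstacle; a direct application of matrix Bernstein would produce an unwanted $s^2$ in place of $s$, and the log factors in the hypothesis are the price paid by the chaining argument. This is precisely the Rudelson--Vershynin argument underpinning Theorem 12.32 of \cite{Foucart13}.

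To upgrade the expectation bound to a tail bound I would invoke Talagrand's concentration inequality for suprema of bounded empirical processes. On $D_{s,N}$, each summand $x \mapsto N|\ind{u_{t_k}, x}|^2 - \|x\|_2^2$ is uniformly bounded by $N\mu(U)s + 1$ with variance of order $N\mu(U) s$, so Talagrand yields
\[\Pr\!\big(\delta_s \geq \E[\delta_s] + \tau\big) \;\leq\; \exp\!\big(-c m \tau^2 /(N\mu(U) s)\big) + \exp\!\big(-c m \tau /(N\mu(U) s)\big).\]
Choosing $\tau$ so that the right-hand side is at most $\eps$ contributes the $\log(\eps^{-1})$ term, and combining with the expectation bound gives $\delta_s \leq \delta$ under exactly the hypothesis on $m$. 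The matrix reformulation, symmetrization and Talagrand tail bound are routine; only the Dudley estimate requires genuine care.
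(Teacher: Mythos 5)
Your proposal is correct and takes essentially the same route as the source: the paper does not reprove this statement but cites \cite[Thm.\ 12.32]{Foucart13}, whose proof is precisely your chain of reformulating $\delta_s$ as the supremum of an empirical process over $s$-sparse vectors, symmetrization, Dudley's inequality with the two-regime (volumetric and Maurey-type) covering-number estimates, and a Talagrand concentration bound supplying the $\log(\eps^{-1})$ term. The same machinery is also what the paper itself adapts, following \cite{Li17}, in its proof of the G-RIPL generalization (Theorem \ref{t:RIPlevels1}), so your sketch matches both the cited argument and the paper's extension of it.
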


(We slightly abuse notation here in that we allow for possible repeats of the
values $t_i$ that make up $\Omega$).
Thus if the coherence $\mu(U)\approx N^{-1}$ we obtain the RIP
of order $s$ using approximately $s$ measurements up to constants and log factors.

There are, however, two problems with this approach.  First, in our setup, where $U = V \Psi^{T}$ is the product of a Fourier or Hadamard matrix and a discrete wavelet transform, the coherence $\mu(U)\approx 1$.  Hence satisfying the RIP requires at least $m \approx N$ measurements.  Second, the RIP asserts recovery for \textit{all} $s$-sparse vectors of wavelet coefficients, and thus does not exploit any additional structure these coefficients possess.  However, as stated, wavelet coefficient are highly structured: large wavelet coefficients tend to cluster at coarse scales, with coefficients at fine scales being increasingly sparse.

Motivated by this, the following structured sparsity model was introduced in \cite{Adcock17}:

\begin{definition}[Sparsity in levels]
Let $\M = [M_1,
\ldots, M_r] \in \N^{r}$, $M_0 = 0$, with $1 \leq M_1 < \cdots < M_r=M$ and let 
$\s = (s_{1},\ldots, s_r) \in \N^r$ with 
$s_{l}\leq M_{l}-M_{l-1}$, for $l=1,\ldots, r$.
We say that the vector $x \in \C^M$ is sparse in levels if 
\[ |\supp{(x)}\cap \{M_{l-1}+1,\ldots, M_l\}| \leq s_{l}\quad\text{for } l =1,\ldots,r.\]
In which case we call $x$, $(\s,\M)$-sparse, where $\s$ and $\M$ are called the local sparsities and sparsity levels, respectively. We denote the set of all $(\s,\M)$-sparse vectors 
by $\Sigma_{\s,\M}$.
\end{definition}

As noted above, randomly subsampling an isometry $U$ is a poor measurement protocol for coherent problems such as Fourier--Wavelets.  Instead, in \cite{Adcock17} it was proposed to sample in the following structured way:

\begin{definition}[Multilevel random subsampling]
Let $\Nm = [N_1, \ldots, N_r] \in \N^{r}$, where $1 \leq N_1 < \cdots < N_r =
N$ and $\m = (m_1,\ldots, m_r) \in \N^r$ with $m_k \leq N_{k}-N_{k-1}$ for
$k=1,\ldots, r$, and $N_0=0$. For each $k=1,\ldots,r$, let $\Omega_{k} = \{ 
N_{k-1}+1,\ldots, N_{k} \}$ if $m_{k} = N_{k}-N_{k-1}$ and if not, let $t_{k,1}, \ldots,
t_{k,m_k}$ be chosen uniformly and independently from the set $\{N_{k-1}+1,
\ldots, N_k\}$, and set $\Om_{k}= \{ t_{k,1}, \ldots, t_{k,m_k} \}$. If
$\Om = \Om_{\Nm,\m} = \Om_{1} \cup \cdots \cup \Om_{r}$ we refer to 
$\Om$ as an $(\Nm, \m)$\emph{-multilevel subsampling scheme}. 
\end{definition}

For this structured model, the following extensions of the RIP was
first introduced in \cite{Bastounis17}. 

\begin{definition}[RIPL]
\label{def:RIPL}
    Let $\s,\M \in \N^r$ be given local sparsities and sparsity levels, respectively. 
For a matrix $A \in \C^{m\times N}$ the \emph{Restricted Isometry Constant
in Levels (RICL)} of order $(\s,\M)$, denoted 
$\delta_{\s,\M} $, is the smallest $\delta \geq 0$ such that 
\[ (1-\delta)\|x\|_{2}^{2} \leq \|Ax\|_{2}^{2} \leq (1+\delta)\|x\|_{2}^{2} 
\quad \forall x \in \Sigma_{\s,\M}. \] 
We say that $A$ has the \emph{Restricted Isometry Property in Levels (RIPL)} if $0 \leq \delta < 1$.
\end{definition}

We shall see that this leads to uniform recovery of all $(\s,\M)$-sparse
vectors, but first we define the \emph{best $(\s,\M)$-term approximation
error} of $x \in \C^{N}$. That is  
\[\sigma_{\s,\M}(x)_p \coloneqq \inf\{ \|x-z\|_{p} : z \in \Sigma_{\s,\M} \}.\]
\begin{theorem}[{\cite[Thm.\  4.4]{Bastounis17}}]
\label{thm:RIPL_imply_uniform}
    Let $\s,\M \in \N^r$ be local sparsities and sparsity levels, respectively.
Let $\alpha_{\s,\M} = \max_{k,l=1,\ldots,r} s_{l}/s_{k}$ and $s = s_1 + \cdots
+ s_r$. Suppose that the RICL $\delta_{2\s,\M}\geq 0$ for the
matrix $A \in \C^{m\times M}$ satisfies 
\begin{equation}
    \label{eq:RIPL_delta}
    \delta_{2\s,\M} < \frac{1}{\sqrt{r(\sqrt{\alpha_{\s,\M}}+\tfrac{1}{4})^2+1}}.
\end{equation}  
Then, for  $x\in \C^{M}$ and $e \in \C^{m}$ with $\|e\|_{2}\leq \eta$, any solution
$\hat{x}$ of 
\[ \minimize_{z \in \C^M} \|z\|_1 \quad\text{subject to}\quad
\|z - (Ax+e)\|_{2} \leq \eta \] 
satisfies
\[
\|x-\hat{x}\|_2 \leq 
(C + C' (r\alpha_{\s, \M})^{1/4}) \frac{\sigma_{s,\M}(x)_1}{\sqrt{s}} 
+ (D + D' (r\alpha_{\s, \M})^{1/4}) \eta
\]
where $C,C', D, D' > 0$ are constants which only dependent on $\delta_{2\s,\M}$.
\end{theorem}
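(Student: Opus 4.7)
The plan is to adapt the standard RIP-to-recovery proof (as in Foucart--Rauhut Theorem 6.12) to the levels setting, extracting a robust null space property in levels from the RIPL bound and then deriving the error estimate from it. Let $\hat{x}$ be a minimizer and set $v = \hat{x}-x$. Feasibility of $x$ and optimality of $\hat{x}$ give the tube condition $\|Av\|_2 \leq 2\eta$ and the $\ell^1$ cone inequality $\|v_{S^c}\|_1 \leq \|v_S\|_1 + 2\sigma_{\s,\M}(x)_1$, where $S \subset \{1,\ldots,M\}$ is a best $(\s,\M)$-term support for $x$ and decomposes as $S = S^{(1)} \cup \cdots \cup S^{(r)}$ with $S^{(l)} \subseteq \{M_{l-1}+1,\ldots,M_l\}$ and $|S^{(l)}| \leq s_l$.

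The main work is the levels analogue of the standard tail-chunking argument. In each level $l$, I would sort the tail $v_{(S^{(l)})^c} \cap \{M_{l-1}+1,\ldots,M_l\}$ by decreasing magnitude and partition it into consecutive chunks $T_{l,1}, T_{l,2}, \ldots$ each of size $s_l$ (the last possibly smaller). By a standard comparison, each chunk satisfies $\|v_{T_{l,j+1}}\|_2 \leq \|v_{T_{l,j}}\|_1/\sqrt{s_l}$. Now form the aggregated sets $T_j = T_{1,j} \cup \cdots \cup T_{r,j}$, which are $(\s,\M)$-sparse with total sparsity $s = s_1+\cdots+s_r$; union $S$ and $T_1$ to form a $(2\s,\M)$-sparse support on which the RIPL gives two-sided control. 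Then
\[
\|v_{S \cup T_1}\|_2^2 \leq \tfrac{1}{1-\delta_{2\s,\M}} \|Av_{S\cup T_1}\|_2^2 \leq \tfrac{1}{1-\delta_{2\s,\M}}\bigl(\|Av\|_2 + \delta_{2\s,\M}\textstyle\sum_{j\geq 2}\|v_{T_j}\|_2\bigr)\|v_{S\cup T_1}\|_2,
\]
by the usual RIP inner-product trick applied block-by-block.

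The delicate step, and the source of the $r$ and $\alpha_{\s,\M}$ factors, is bounding $\sum_{j \geq 2}\|v_{T_j}\|_2$ in terms of $\|v_{S^c}\|_1$ while respecting that different levels carry different chunk sizes $s_l$. I would use $\|v_{T_j}\|_2^2 = \sum_l \|v_{T_{l,j}}\|_2^2$, the level-wise comparison above, and then Cauchy--Schwarz across levels, which picks up a factor of order $\sqrt{r}$; the mismatch between the $\sqrt{s_l}$ normalizations and a uniform $\sqrt{s}$ normalization produces the factor $\sqrt{\alpha_{\s,\M}}$. Combining these with the cone inequality yields a bound of the shape $\|v_{S\cup T_1}\|_2 \leq c(\delta_{2\s,\M},r,\alpha_{\s,\M})(\sigma_{\s,\M}(x)_1/\sqrt{s} + \eta)$, which is contractive precisely when $\delta_{2\s,\M}$ satisfies \eqref{eq:RIPL_delta} (the quantity $r(\sqrt{\alpha_{\s,\M}}+\tfrac14)^2+1$ comes from squaring and summing these contributions).

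Finally, I would pass from the head bound on $\|v_{S\cup T_1}\|_2$ to the full $\ell^2$ bound by $\|v\|_2 \leq \|v_{S\cup T_1}\|_2 + \sum_{j\geq 2}\|v_{T_j}\|_2$, re-using the tail estimate. The hard part, as indicated above, is tracking the constants through the two uses of Cauchy--Schwarz and the level-wise $s_l$-to-$s$ rescaling carefully enough to obtain the stated $(r\alpha_{\s,\M})^{1/4}$ dependence in the error constants rather than something larger; everything else is bookkeeping on top of the classical argument.
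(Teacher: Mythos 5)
Your skeleton---cone condition, tube condition, per-level chunking into blocks of size $s_l$, the RIP inner-product trick on the $(2\s,\M)$-sparse head, and aggregation across levels---is indeed the architecture of the original argument in \cite{Bastounis17} (which this paper cites rather than reproves; Section \ref{sec:proof1} runs the same argument through the robust null space property in levels and proves the weighted analogues). However, two of the specific tools you commit to cannot deliver the constants, and the constants are the actual content of the statement. First, the plain comparison $\|v_{T_{l,j+1}}\|_2 \leq \|v_{T_{l,j}}\|_1/\sqrt{s_l}$ is too weak to produce the threshold \eqref{eq:RIPL_delta}. Observe that for $r=1$, $\alpha_{\s,\M}=1$ the right-hand side of \eqref{eq:RIPL_delta} equals $4/\sqrt{41}$, i.e.\ precisely the constant of Theorem \ref{thm:RIP}, and that constant is only attainable via the refined sorting inequality (for $a_1 \geq \cdots \geq a_s \geq 0$, $\sqrt{\textstyle\sum_i a_i^2} \leq \tfrac{1}{\sqrt{s}}\textstyle\sum_i a_i + \tfrac{\sqrt{s}}{4}(a_1 - a_s)$, Foucart--Rauhut Lemma 6.14) applied to consecutive chunks; this is where the $\tfrac14$ in $(\sqrt{\alpha_{\s,\M}}+\tfrac14)^2$ comes from. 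With the naive chunk bound you would prove a theorem of the same shape but under a strictly smaller RICL threshold---a genuinely weaker statement.

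Second, your final assembly $\|v\|_2 \leq \|v_{S\cup T_1}\|_2 + \sum_{j\geq2}\|v_{T_j}\|_2$, reusing the tail estimate, gives error constants of order $(r\alpha_{\s,\M})^{1/2}$, not the stated $(r\alpha_{\s,\M})^{1/4}$: aggregating the per-level tails costs $1/\sqrt{\min_l s_l} \leq \sqrt{r\alpha_{\s,\M}/s}$, and nothing in your plan recovers the square root of that loss. The fourth root requires an interpolation step, namely the levelwise bound
\begin{equation*}
\|v_{\Theta^c}\|^2_2 \;\leq\; \|v_{\Theta}\|_2 \, \frac{\|v_{\Theta^c}\|_1}{\sqrt{\min_{l} s_l}},
\end{equation*}
followed by Young's inequality $ab \leq \tfrac12 a^2 + \tfrac12 b^2$ to split the geometric mean. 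This is exactly the mechanism in this paper's weighted Lemma \ref{l:rNSPLdistance2}, where the factor $(\Sboms/\zetas)^{1/4}$ arises; setting $\bom = (1,\ldots,1)$ there recovers the unweighted $(r\alpha_{\s,\M})^{1/4}$ of the cited theorem. So your route is the right one in outline, but as proposed it establishes a result with both a worse admissible $\delta_{2\s,\M}$ and worse error constants; the two sharper inequalities above are the missing ideas, not mere bookkeeping.
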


In \cite{Li17} the authors investigated conditions under which a subsampled isometry $U\in
\C^{N\times N}$ satisfies the RIPL. In was shown that the number of samples
required to satisfy the RIPL was related 
to the so-called \emph{local coherence} properties of $U$:
\begin{definition}
\label{def:local_coher_fin}
Let $U\in C^{N \times N}$ be an isometry and $\Nm,\M\in \N^{r}$ be given 
sampling and sparsity levels. The \emph{local coherence} of $U$ is 
\[ \mu_{k,l} = \mu_{k,l}(\Nm,\M) = \{\max |U_{ij}|^2: i = N_{k-1}+1,\ldots, N_{k}, 
j = M_{l-1}+1,\ldots, M_{l}\}. \]      
\end{definition}
\begin{theorem}[{\cite[thm. 3.2]{Li17}}]
    \label{thm:Li_RIPL_mk}
    Let $U \in \C^{N\times N}$ be an isometry. Let $r\in\N$, $0<\delta,\eps < 1$,
and $0\leq r_0 \leq r$. Let $\Om = \Om_{\Nm,\m}$ be an $(\Nm,\m)$-multilevel random 
subsampling scheme. Let $\tilde{m} = m_{r_0+1}+\ldots + m_r$ and $s = s_1 +
\ldots +s_r$. Suppose that the $m_k$s satisfy
\begin{equation}
\label{ParksCanada}
m_{k}= N_{k}-N_{k-1},\quad \text{for }k=1,\ldots,r_0,
\end{equation}
and 
\begin{equation}
    \label{eq:Li_RIPL_mk}
    m_{k}\gtrsim  \delta^{-2}\cdot (N_{k}-N_{k-1})\cdot \(\sum_{l=1}^{r}s_{l}\mu_{k,l}\)
\cdot \(r  \log(2\tilde{m}) \log(2N)\log^2(2s)+\log(\eps^{-1})\)
\end{equation}
for $k=r_0+1,\ldots,r$. Then the
matrix
\begin{equation}
A = \begin{bmatrix} \tfrac{1}{\sqrt{p_1}}P_{\Om_1}U\\ \vdots \\ \tfrac{1}{\sqrt{p_r}}P_{\Om_r}U\end{bmatrix}
\quad\text{where }p_{k} = \frac{m_k}{N_{k}-N_{k-1}}\quad\text{for }k=1,\ldots,r 
\label{def:A_finite}
\end{equation}
satisfies the RIPL of order $(\s,\M)$ with constant $\delta_{\s,\M} \leq \delta$.
\end{theorem}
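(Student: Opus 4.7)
The plan is to control the random quadratic form
$$
Z \coloneqq \sup_{x \in \Sigma_{\s,\M},\ \|x\|_{2}\leq 1} \bigl| \|Ax\|_{2}^{2} - \|x\|_{2}^{2} \bigr|
$$
and show $Z \leq \delta$ with probability at least $1-\eps$. First I would check that $\E[A^{*}A] = I$: for $k \leq r_{0}$ the block is sampled in full, so $\tfrac{1}{p_{k}} P_{\Om_{k}} = P_{N_{k}}^{N_{k-1}}$ deterministically, and for $k > r_{0}$ the random rescaling is an unbiased estimator of the same block projection. Hence $Z = \sup_{x} |\langle (A^{*}A-I)x, x\rangle|$ is the supremum of a centered, independent sum of rank-one random operators, indexed by the block-sparse set $\Sigma_{\s,\M}$, and is amenable to the standard empirical-process machinery.

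I would then follow the Rudelson--Vershynin--Rauhut pipeline, adapted to sparsity in levels as in \cite{Bastounis17}. Concretely: symmetrize to reduce to a Rademacher chaos of the form
$$
\E \sup_{x} \Bigl| \sum_{k,j} \eps_{k,j} \bigl( |\langle u_{t_{k,j}}, x\rangle|^{2} - \E |\langle u_{t_{k,j}}, x\rangle|^{2} \bigr) \Bigr|,
$$
where $u_{i}^{*}$ is the $i$-th row of $U$; bound this expectation via Dudley's entropy integral applied to the pseudometric on $\Sigma_{\s,\M}\cap\{\|x\|_{2}\leq 1\}$ induced by $U$; and finally upgrade to a high-probability statement using Talagrand's concentration inequality for empirical processes, which produces the $\log(\eps^{-1})$ additive summand in \eqref{eq:Li_RIPL_mk}.

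The local coherences $\mu_{k,l}$ enter level-by-level. For each sampling level $k > r_{0}$, a Cauchy--Schwarz argument gives the uniform bound $|\langle u_{t_{k,j}}, x\rangle|^{2} \leq \bigl(\sum_{l} s_{l}\mu_{k,l}\bigr)\|x\|_{2}^{2}$ for $(\s,\M)$-sparse $x$, which supplies the $\ell^{\infty}$ constant in Talagrand's inequality, while essentially the same quantity controls the variance proxy that feeds into Dudley's integral. Balancing both against $m_{k}$ produces precisely \eqref{eq:Li_RIPL_mk}: the prefactor $N_{k}-N_{k-1}$ is $1/p_{k}$, the factor $\sum_{l} s_{l} \mu_{k,l}$ is the local variance in level $k$, and the $r$ inside the logarithm is paid for by chaining over all $r$ sparsity blocks.

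The main obstacle will be a sharp metric-entropy estimate for the block-sparse unit ball. Running Maurey's empirical method directly with the global coherence $\mu(U)$, as in the unstructured RIP argument behind \cite[Thm.\ 12.32]{Foucart13}, loses a factor of $r$ or more, because it cannot distinguish the contribution of each sparsity level. The remedy is to decompose $x = \sum_{l=1}^{r} P_{M_{l}}^{M_{l-1}} x$ and apply Maurey's lemma separately in each sparsity level $l$, using the local coherence $\mu_{k,l}$ there; reassembling these level-wise covers yields a covering-number bound whose dominant scale is $\sum_{l} s_{l} \mu_{k,l}$ rather than the crude $s\cdot \mu(U)$. Once this block-wise covering estimate is in hand, substituting it into the chaining and concentration steps above delivers the claimed RIPL constant under the sample count \eqref{eq:Li_RIPL_mk}.
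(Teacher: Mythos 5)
Your proposal is correct and follows essentially the same route as the paper: Theorem \ref{thm:Li_RIPL_mk} is quoted from \cite{Li17}, and the paper's own proof of its generalization, Theorem \ref{t:RIPlevels1} (which reduces to this statement when $G = I$), uses exactly your pipeline---writing $A^*A - \mathbb{E}(A^*A)$ as a centered sum of independent rank-one terms $X_{k,i}X_{k,i}^*$, then symmetrization, Dudley's entropy integral with level-wise covering estimates, and Talagrand's concentration inequality for the $\log(\eps^{-1})$ tail term. Your key ingredients, the Cauchy--Schwarz bound $|\langle u_i, x \rangle|^2 \leq \bigl(\sum_{l} s_l \mu_{k,l}\bigr)\|x\|_2^2$ for rows in level $k$ and the level-by-level Maurey covering argument replacing the global-coherence estimate, are precisely those used in \cite{Li17}, so there is no substantive difference (only a minor slip: the factor $r$ in \eqref{eq:Li_RIPL_mk} multiplies the log factors rather than sitting inside a logarithm).
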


This theorem characterizes the number of local measurements $m_k$ needed
to ensure uniform recovery explicitly in terms of local sparsities $s_k$ and
local coherences $\mu_{k,l}$.  In particular, if the local coherences are
suitably well-behaved, then recovery may still be possible from highly
subsampled measurements, even though the global coherence may be high (see
next).  Note that the condition \eqref{ParksCanada}, whereby the first $r_0$
sampling levels are saturated, models practical imaging scenarios where the low
Fourier frequencies are typically fully sampled.

To illustrate this theorem, in \cite{Adcock16} the authors consider the
one-dimensional discrete Fourier sampling problem with sparsity in Haar
wavelets.  For the Haar wavelet basis we choose an ordering where the
first level $\{ M_{0}+1, M_1 \} = \{ 1,2\}$ consists of the scaling function
and mother wavelet and the subsequent levels are chosen so that $\{M_{l-1}+1,
\ldots, M_{l}\} = \{2^{l-1}+1, \ldots, 2^{l}\}$ consists of the wavelets at
scale $l-1$. This gives the sparsity levels
\[ \M = [2^{1}, 2^{2}, \ldots, 2^{r} ], \]
where $r = \log_2(N)$ (assumed to be an integer).
Next we define the entries in the Fourier matrix $V_{\text{Four}} \in \C^{N\times N}$
as 
\[  (V_{\text{Four}})_{\om = -N/2+1, ~j=1}^{N/2,~ N}  = \frac{1}{\sqrt{N}} 
\exp (2\pi i(j-1)\om/N ), \]
where we have started the ordering of the rows with negative indices 
for convenience. We define the sampling levels for the frequencies $\om$
in dyadic bands with $W_1  = \{0,1\}$ and 
\[ W_{k+1} = \{ -2^k+1, \ldots, -2^{k-1} \} \cup \{ 2^{k-1}+1, \ldots, 2^{k} \}, 
\quad k = 1,\ldots,r-1. \]
Notice that for a suitable reordering of the rows of $V_{\text{Four}}$ these
bands corresponds to the sampling levels $\Nm = [2^1,2^2,\ldots, 2^r]$. 
\begin{theorem}[{\cite[Cor.\ 3.3]{Li17}}]
    Let $N = 2^{r}$ for some $r \geq 1$ and let $U = V_{\text{Four}}
\Psi^{-1}\in \C^{N\times N}$, where $\Psi$ is the Haar wavelet
matrix. Let $0 < \delta, \eps < 1$ and let $\Nm = \M = [2^1, \ldots, 2^r]$. Let
$m = m_1 + \cdots m_r$ and $s = s_1 + \cdots s_r$. For each $k=1,\ldots, r$
suppose we draw $m_k$ Fourier samples from band $W_k$ randomly and
independently, where 
\[ m_k \gtrsim \delta^{-2} \cdot \bigg(\sum_{l=1}^{r} 2^{-|k-l|}s_l \bigg)
\(r \log(2 m)\log(2N)\log^2(2s)  + \log(\eps^{-1})\).\] Then with probability
at least $1-\eps$ the matrix \eqref{def:A_finite} satisfies the RIPL with
constant $\delta_{s,\M} \leq \delta$. 
\end{theorem}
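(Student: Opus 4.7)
The plan is to apply Theorem~\ref{thm:Li_RIPL_mk} to the isometry $U = V_{\text{Four}} \Psi^{-1}$ with $r_0 = 0$ and sampling/sparsity levels $\Nm = \M = [2^1, \ldots, 2^r]$. Once sufficiently sharp bounds on the local coherences $\mu_{k,l}(\Nm, \M)$ of this matrix are in hand, the rest reduces to substituting them into the measurement condition~\eqref{eq:Li_RIPL_mk} and collecting constants.

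The key estimate I would establish is
\[
\mu_{k,l}(\Nm, \M) \lesssim 2^{-\max(k,1)} \cdot 2^{-|k-l|}, \qquad 1 \leq k, l \leq r.
\]
To derive it, I would exploit the fact that for each $j$ in the $l$-th wavelet level, the $j$-th column of $\Psi^{-1}$ is a (periodized) shifted and dyadically-dilated Haar wavelet $\psi_{l-1, p}$, so $U_{\omega, j}$ equals, up to the normalization $N^{-1/2}$, the discrete Fourier transform of that wavelet at frequency $\omega$. Using the closed form of $\hat\psi$ together with the elementary bound $|\hat\psi(\xi)|^2 \lesssim \min(\xi^2, \xi^{-2})$ and the dilation identity $\hat\psi_{l-1, p}(\omega) = 2^{-(l-1)/2} e^{-2\pi i p \omega/2^{l-1}} \hat\psi(\omega/2^{l-1})$, one obtains the stated decay: the vanishing mean of $\psi$ (so $\hat\psi(\xi) \sim \xi$ near $0$) handles the case $k < l$, while the $|\xi|^{-1}$-decay handles $k > l$. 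The sampling bands $W_k$ were constructed precisely so that each $W_k$ corresponds, after reordering of rows, to the $k$-th sampling level, so this is exactly the quantity that appears in Theorem~\ref{thm:Li_RIPL_mk}.

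Given this coherence bound, the conclusion follows quickly: since $N_k - N_{k-1} \leq 2^k$,
\[
(N_k - N_{k-1}) \sum_{l=1}^{r} s_l \mu_{k,l} \lesssim \sum_{l=1}^{r} 2^{-|k-l|} s_l,
\]
and the logarithmic factor in~\eqref{eq:Li_RIPL_mk} already matches the one in the statement with $\tilde m = m$. The main obstacle is the coherence estimate itself: the Haar wavelet's Fourier transform decays only like $|\xi|^{-1}$, which is exactly the rate one needs to produce the geometric decay in $|k-l|$ after accounting for the $N^{-1/2}$ normalization, with no margin to spare. One must track the normalization, the position shift $p$, and the discrete-versus-continuous passage uniformly in $\omega$ over each dyadic band $W_k$ and in $p$ over each wavelet level, which is where the genuine technical work lies.
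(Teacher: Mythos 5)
Your proposal is correct and follows essentially the same route as the paper, which states this result as \cite[Cor.~3.3]{Li17}: one bounds the local coherences of the discrete Fourier--Haar matrix by $\mu_{k,l} \lesssim 2^{-k}\, 2^{-|k-l|}$ and substitutes this into the measurement condition \eqref{eq:Li_RIPL_mk} of Theorem~\ref{thm:Li_RIPL_mk} with $r_0 = 0$, exactly as you outline. The only minor difference is that the cited proof obtains the coherence bound by an exact discrete computation (geometric sums for the DFT of the discrete Haar wavelets), whereas you argue via the decay $|\hat\psi(\xi)| \lesssim \min\left(|\xi|, |\xi|^{-1}\right)$ of the continuous transform, so the discrete-versus-continuous bookkeeping you flag as the main technical obstacle is circumvented in the original argument.
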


Here, for convenience, we have taken $r_0 = 0$; see \cite{Li17} for further
discussion on this point.

\subsection{Shortcomings}
\label{subsec:shortcomings}

These results have two primary shortcomings, which we now discuss in further detail.
The key issue is that they are limited to finite dimensions.  As noted in Section
\ref{s:introduction}, applying finite-dimensional recovery procedures to analog
problems can result in artefacts.  For simplicity, let $N = 2^p$.
We have argued that analog signals should be modelled as elements in
$L^2([0,1))$, rather than $\C^N$.  Yet, above we have tried to use discrete
tools for recovering the signal $f \in L^2([0,1))$ by replacing $\Ws
f$ and $\Fs f$ with $V_{\text{Had}}$ and $V_{\text{Four}}$, respectively. Next
we argue that this construction leads to both \emph{measurement mismatch} and the
\emph{wavelet crime}.

Let $\chi_{[a,b)}$ denote step functions on the interval $[a,b)$ and set
$\Delta_{k,p} = [k2^{-p},(k+1)2^{-p})$.  We see that replacing $\Ws f$ with
$V_{\text{Had}} \in \C^{N\times N}$ is equivalent to replacing $f$ by e.g.
$\tilde{f} = \sum_{k=0}^{N-1} c_{k}\chi_{\Delta_{k,r}}$ for some $c \in
\C^{N}$, since $\Ws \tilde{f} = V_{\text{Had}}c$.  Clearly, $\Ws \tilde{f}$
will be a poor approximation to $\Ws f$. We refer to this as measurement
mismatch.

Next let $\phi^0, \phi^1$ denote a scaling function and wavelet,
respectively,  and set $\phi_{j,k}^s = 2^{j/2}\phi^s(2^j\cdot -k)$ for $s
\in\{0,1\}$.
By construction the solution $\hat{x}$ of (2.1) will be the coefficients of a function 
$\hat{f}$ written in a basis consisting of both wavelets and scaling functions. 
Equivalently we can represent $\hat{f}$ in the basis $\{\phi_{j,k}^0 \}_{k=0}^{N-1}$
using the coefficients $c = \Psi^{-1}\hat{x} \in \C^{N}$. The wavelet crime
is whenever we let $c$, represent pointwise samples of $f$ i.e. $c_k = f(k/N)$.

What does this mean for reconstruction?  To illustrate the issue we provide a similar example to the first numerical simulation in \cite{Adcock15gscs}, showing how finite-dimensional compressed sensing fails to recover even a function that is 1-sparse (meaning it has only one non-zero coefficient) in its wavelet decomposition. Indeed, in Figure
\ref{fig:crimes} we consider the problem of recovering a function $f$ from samples of the continuous Walsh transform. In particular, we choose $f(t) = \phi_{4,4}(t)$, where $\phi$ is
the Daubechies scaling function, corresponding to the wavelet with four vanishing
moments. Figure
\ref{fig:crimes} shows the poor performance of CS using the discrete finite-dimensional setup when
applied to a continuous problem.  Conversely, the infinite-dimensional CS
approach, which we develop in the next sections, gives a much higher fidelity
reconstruction from exactly the same samples as used in the finite-dimensional case. In fact, the infinite-dimensional CS reconstruction recovers $f$ perfectly up to numerical errors occurring from solving the optimization problem. We also observe the slightly paradoxical phenomenon in the finite-dimensional case: more samples do not improve performance. This is due to the fact that the finite-dimensional CS solution with full sampling coincides with the truncated Walsh series (direct inversion) approximation. This approximation is clearly highly suboptimal, as demonstrated in Figure \ref{fig:crimes}.

We note in passing that the above crimes stem from too early a
discretization of the inverse problem. Our infinite-dimensional CS approach
replaces $V_{\text{Had}} \Psi^{-1}$ by a finite section of the an
isometry $U \in \Bs(\ell^2(\N))$ representing change of basis between the continuous Fourier or Walsh transform
and wavelet basis.

On a related note, even if one were to ignore the above issues, estimating the local coherences $\mu_{k,l}$ in the discrete setting for anything but the Haar wavelet becomes extremely complicated.  Conversely, by moving to the continuous setting, these estimates become much easier to derive.  We do this later in the paper for arbitrary Daubechies' wavelets with the Walsh transform.

\begin{figure}[t]
\vspace{-1cm}
    \begin{tabular}{m{0.46\textwidth}m{0.46\textwidth}}
    \qquad  \,\, Infinite-dimensional CS (16 samples)&  \qquad  \,\,Truncated Walsh series (32 samples)\\
     \includegraphics[width=0.5\textwidth]{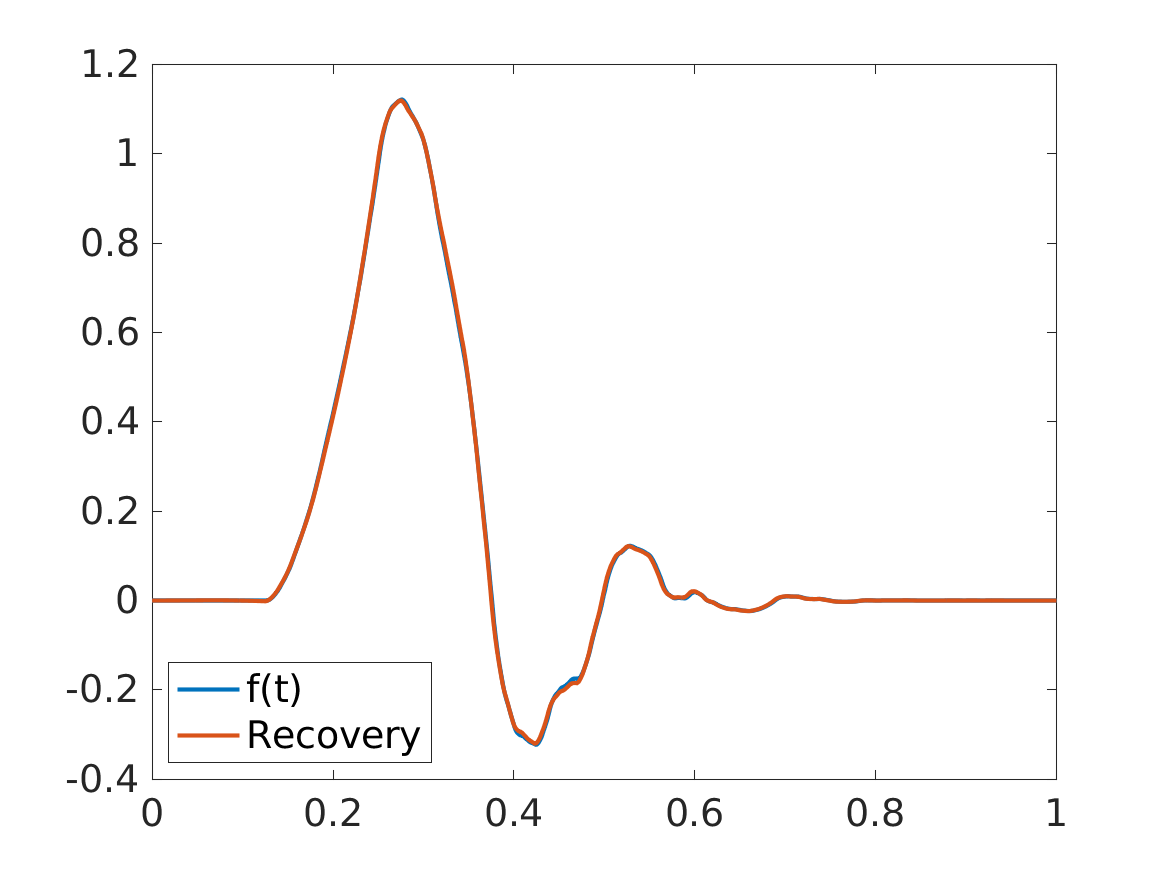} 
    &  \includegraphics[width=0.5\textwidth]{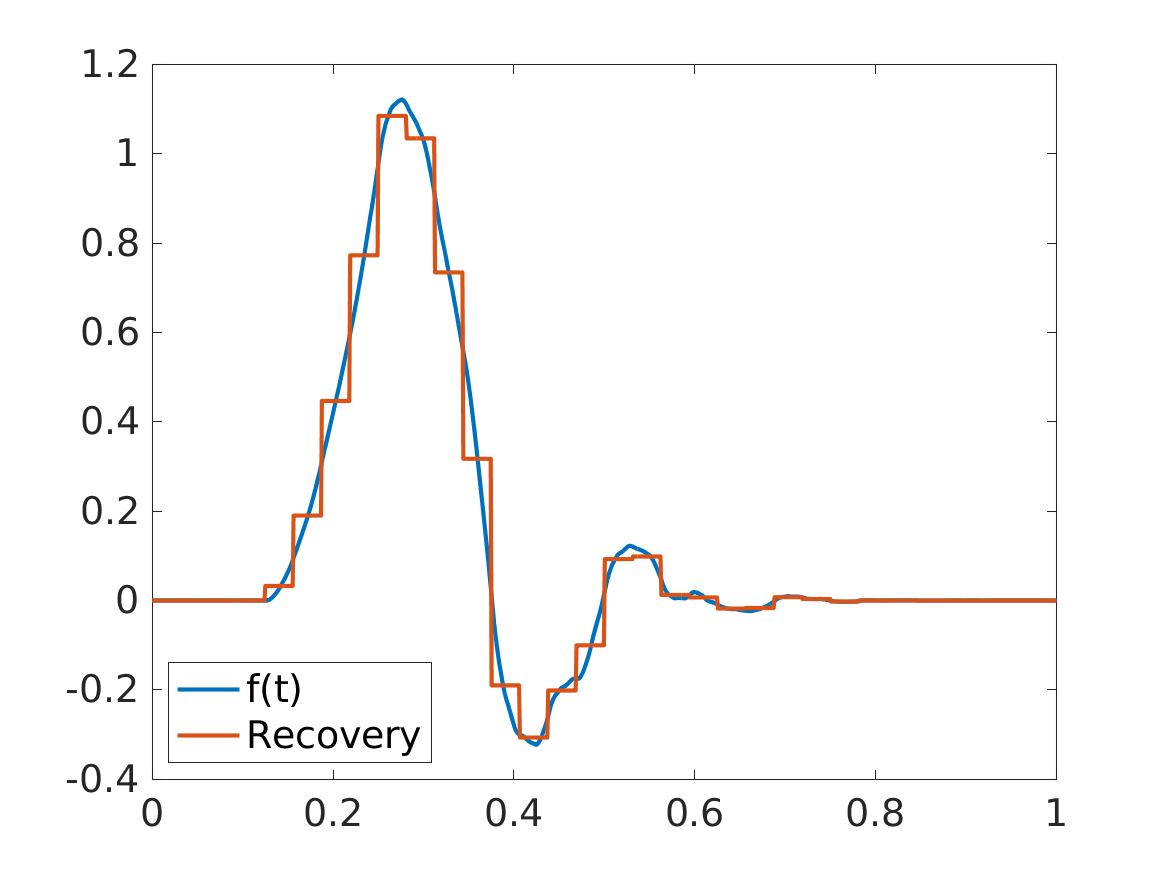}  \\
    \qquad  \,\, Finite-dimensional CS (16 samples)&  \qquad  \,\,Finite-dimensional CS (32 samples)\\
   \includegraphics[width=0.5\textwidth]{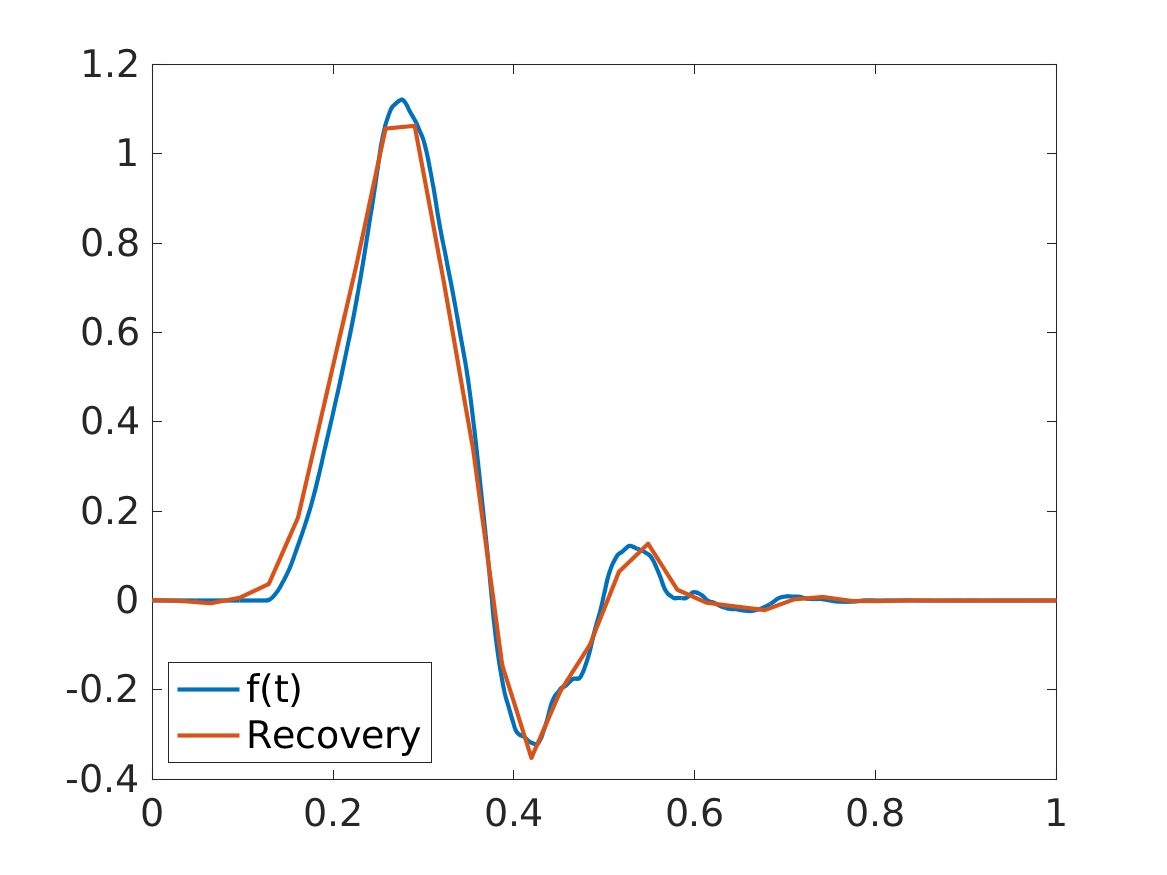} 
    & \includegraphics[width=0.5\textwidth]{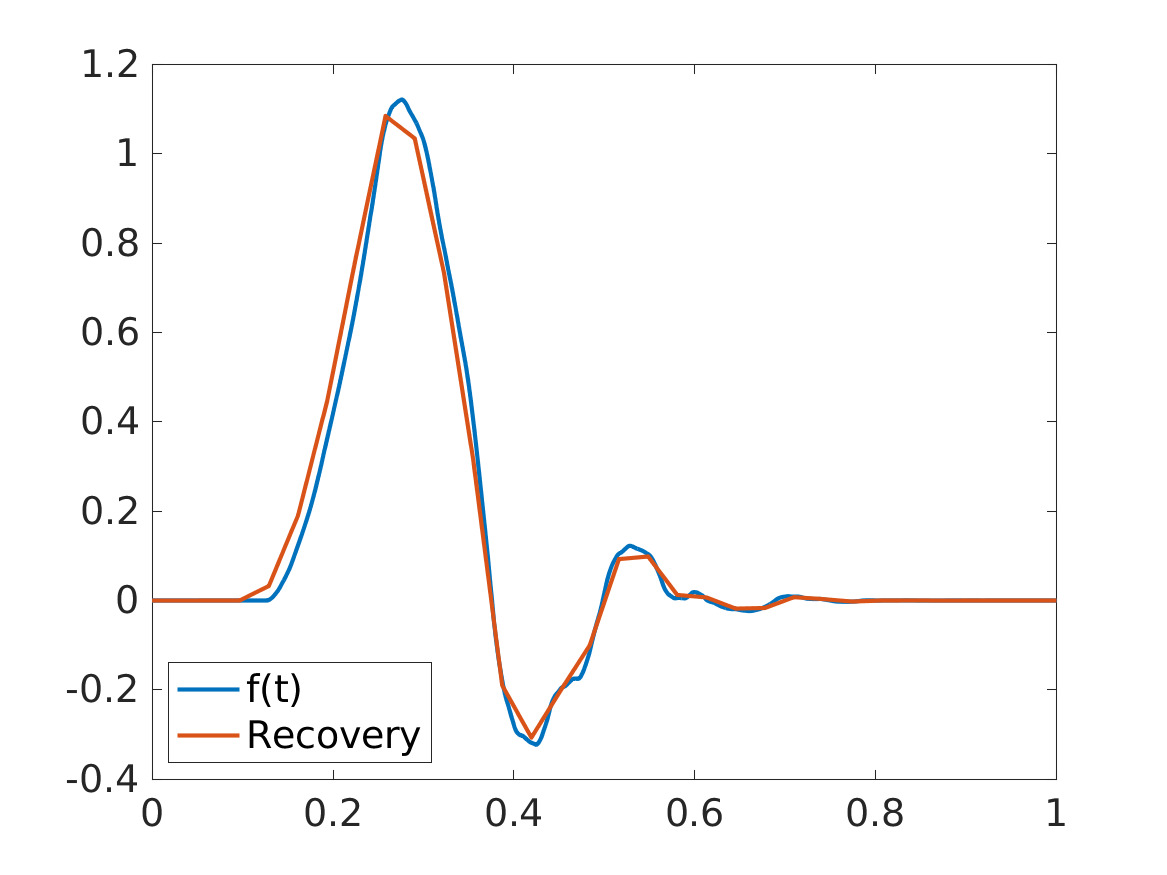} 
\end{tabular}
\caption{Reconstructions (using Walsh samples) of $f(t) = \phi_{4,4}(t)$, where $\phi$ is
the Daubechies scaling function, corresponding the wavelets with four vanishing
moments. Upper
left: Reconstruction from the first $16$ Walsh samples using an infinite-dimensional CS model. Upper right: Truncated Walsh series based on the first $32$ Walsh samples.  Lower left: Reconstruction from the first $16$ Walsh samples using the finite-dimensional ($32 \times 32$) CS model.  Lower right: Reconstruction from the first $32$ Walsh samples using the finite-dimensional ($32 \times 32$) CS model. In theory, the right images should be the same, however, numerical errors in the optimisation cause the difference.  
}
\label{fig:crimes}
\end{figure}

The second shortcoming relates to Theorem \ref{thm:RIPL_imply_uniform}. It says that we can guarantee recovery of
all sparse signals provided the matrix $A \in \C^{m \times M}$ satisfies the
RIPL with constant
\[ \delta_{2\s,\M} < \frac{1}{\sqrt{r(\sqrt{\alpha_{\s,\M}}+\tfrac{1}{4})^2+1}}. \]
Here $r$ is the number of levels and $\alpha_{\s,\M} = \max_{k,l = 1,\ldots,r}
s_{l}/s_k$ is the sparsity ratio. Inserting the above inequality into Theorem
\ref{thm:Li_RIPL_mk} gives a sampling condition of the form
\[ m_k \gtrsim r\cdot\alpha_{\s,\M} \cdot (N_{k}-N_{k-1})\cdot \(\sum_{l=1}^{r} \mu_{k,l}s_l \)\cdot L\] 
where $L$ is the log factors. This means that the sparsity ratio $\alpha_{\s,\M}$
will affect the sampling condition in all sampling levels. Thus for signals where
we expect the local sparsities to vary greatly from level to level (e.g.\ wavelets) this will lead to a unreasonably 
high number of samples. 

To overcome this problem, using an idea from \cite{Traonmilin15}, we replace
the $\ell^1$-regularizer in the optimization problem \eqref{eq:QCBP_fin} with a
weighted $\ell^1$-regularizer.  For a suitable choice of weights, this removes
the factor of $\alpha_{\s,\M}$ in the various measurement conditions.  As we
show, these guarantees are optimal up to constants and log factors.

\section{Extensions to infinite dimensions}
\label{sec:infin_dim}
\subsection{Setup}
\label{subsec:setup}
We will continue with the notation we introduced above, extended to
infinite dimensions. That is, we assume that the signal $f$ is an element of
$L^2([0,1))$. We still let  $P_{\Om}$ denote the projection onto the canonical
basis, but we now let it be an element in either
$ \Bs(\ell^{2}(\N))$ or $\Bs(\ell^{2}(\N), \C^{|\Om|})$. Similarly we call a
vector $x \in \ell^{2}(\N)$ $(\s,\M)$-sparse if $P_{M}x$ is $(\s,\M)$-sparse
and $P_{M}^{\perp} x = 0$. Here $M=M_r$ and we refer to it as the
\emph{sparsity bandwidth} of $x$. For an isometry $U \in \Bs(\ell^2(\N))$ we
define the coherence of $U$ as $\mu(U) = \sup \{ |U_{ij}|^2 : i,j\in \N \}$.

Next we describe the setup for a general sampling basis $\Bsa =
\allowbreak \{\bsa_{1}, \bsa_{2}, \bsa_{3}, \ldots,\}$ and a sparsifying basis
$\Bsp=\allowbreak \{\bsp_{1}, \bsp_{2}, \bsp_{3},\ldots,\}$, both assumed to be
orthonormal bases of $L^2([0,1))$. In Section \ref{sec:4}, we will specialize this so that $\Bsa$ is
the Walsh sampling basis and $\Bsp$ is a wavelet sparsifying basis. This will
enable us to derive concrete recovery guarantees for $f$. The setup below is,
however, completely general.  

For the two bases $\Bsa$ and $\Bsp$ we can represent $f$ using the coefficients $y
= \{\ind{f,\bsa_{n}}\}_{n\in \N}$ and $x = \{\ind{f,\bsp_{n}}\}_{n\in \N}$,
respectively. To change the representation from $\Bsa$ to $\Bsp$ we define the
following matrix. 
\begin{definition}
    Let $\Bsa = \allowbreak \{\bsa_{1}, \bsa_{2}, \bsa_{3}, \ldots,\}$ and
$\Bsp=\allowbreak \{\bsp_{1}, \bsp_{2}, \bsp_{3},\ldots,\}$ be orthonormal bases for
$L^2([0,1))$. The \emph{change of basis matrix} $U \in \Bs(\ell^2(\N))$
between $\Bsa$ and $\Bsp$ is the infinite matrix with entries
\[ U_{ij} = \ind{\bsp_{j}, \bsa_{i}} \] 
We will denote this matrix by $U = [\Bsa, \Bsp]$.
\end{definition}
Notice in particular that since $\Bsa$ and $\Bsp$ are orthonormal,
$U=[\Bsa,\Bsp]$ is an isometry on $\ell^2(\N)$ and we can write $y = Ux$.

Next let  $\Om = \Om_{\m,\Nm}$
be a given multilevel random sampling scheme with $|\Om| = m$.  We refer to $N
= N_r$ as the \emph{sampling bandwidth} of $\Omega$ (as discussed later, this
will be chosen in terms of sampling bandwidth to ensure stable truncation of
$U$).  Now define the matrix
\begin{equation}
\label{eq:H_mat}
 H \coloneqq 
\begin{bmatrix} 
    1/\sqrt{p_1} P_{\Om_1} U \\
    1/\sqrt{p_2} P_{\Om_2} U \\
    \ldots \\
    1/\sqrt{p_r} P_{\Om_r} U 
\end{bmatrix} \in \C^{m \times \infty}, \quad 
 \text{ where }\quad p_k = m_k/(N_k - N_{k-1})
\end{equation}
and we use the slightly unusual notation $\mathbb{C}^{m \times \infty}$ for the operators
$\mathcal{B}(\ell^{2}(\mathbb{N}), \mathbb{C}^{m})$.
Due to the scaling factors $1/\sqrt{p_k}$ we consider scaled noisy
measurements 
\begin{equation}
    \label{eq:y_tilde}
    \tilde{y} = D P_{\Om} y + e \in \C^m 
\end{equation}
where $D$ is a diagonal matrix with the corresponding scaling factors found in
$H$ along the diagonal and $e$ is the measurement noise.

Suppose that $x$ is approximately $(\s,\M)$-sparse with sparsity bandwidth $M$.
It is tempting to form the finite matrix $A = HP_{M} \in \C^{m \times M}$ and
solve the minimization problem
\[ 
\minimize \|z\|_1 
\quad \textnormal{ subject to } \quad 
\|Az - \tilde{y}\|_{2} \leq \eta.  
\]
However, note that the truncation of $H$ to $A$ introduces an additional
truncation error $H P_{M}^{\perp}x$.  Indeed,
\bes{
Ax- \tilde{y} = HP_{M}^{\perp}x+e,
}
and this poses a problem since for the above decoder we require $\eta \geq \|HP_{M}^{\perp}x+e\|_2$ in order for $P_M x$ to be a feasible point.  For some applications we might have a rough 
estimate of $\|e\|_2$, but any estimate of $\|HP_{M}^{\perp}x\|_2$ would
require a priori knowledge of $x$, the signal we are trying to recover.  This
is generally impossible. (We note in passing that there is some recent work
\cite{Brugiapaglia17} which derives CS recovery guarantees in the absence of feasibility of
the target vector $P_M x$, but the application of this work to the sparse in
levels model is not clear).

To overcome this issue, we will introduce a \emph{data fidelity parameter} $K
\geq M$ and assume we know $\|e\|_{2}$ so that we can let $\eta > \|e\|_2$.
Then there will always exits a $K' \geq M$ such that $P_{K}x$ lies in the
feasible set $\{ z \in \C^{K} :\| Az -\tilde{y}\|_2 \leq \eta\}$ corresponding to the augmented matrix 
\begin{equation}
\label{A_multilevel}
A = H P_{K}
\end{equation}
for all $K \geq K'$.  In practice (for the general case) it will also be impossible determine a
sufficient value for $K$, but for fixed $\eta > \|e\|_{2}$ there will always
exist such a $K$. It should, however, be noted that there are special cases, 
such as Walsh sampling and wavelet recovery, where sufficient values for $K$ are known; see Remark \ref{r:K_issue_fudge}.

This aside, as previously mentioned, we also now modify the optimization
problem to include weights.  Specifically, let $\M,\s \in \N^r$ be given
sparsity levels and local sparsities respectively.  For positive weights $\bom
= (\om_1,\ldots, \om_{r+1})$ we define
\[
\|x\|_{1,\bom} \coloneqq \sum_{l=1}^{r+1} \om_{l} \|P_{M_l}^{M_{l-1}}x\|_1 ,
\]
with $M_{r+1}=K$ for $x \in \C^{K}$.  Notice that this weighted regularizer assigns constant weights on each sparsity level.  With this in hand, our recovery procedure is 
\[ 
\minimize \|z\|_{1,\bom} 
\quad \textnormal{ subject to } \quad 
\|Az - \tilde{y}\|_{2} \leq \eta,
\]
with $A$ as in \eqref{A_multilevel} and $\eta \geq \| A x - \tilde{y} \|_2$.

\subsection{The balancing property}

We now discuss the relation between the sampling and sparsity bandwidths $N$ and $M$.  From 
generalized sampling theory \cite{Adcock15gscs} 
we know that we must choose $N \geq M$ to obtain a
stable mapping between the first $N$ sampling basis functions and the first $M$
sparsity basis functions. The degree of stability for this solution will depend
of the so-called \textit{balancing property}:
\begin{definition}
    Let $U \colon \ell^2(\N) \to \ell^2(\N)$ be an isometry. Let $0 < \theta <
1$ and $N \geq M \geq 1$. Then $U$ has the \emph{balancing property} with
constant $\theta$ if \[ \|P_{M}U^*P_N U P_M - P_M\|_{2} \leq 1 -\theta.  \]
\end{definition}

Note that the balancing property may not hold for any $N \geq M$.
However, it always holds for sufficiently large $N$ (for fixed $M$).  Indeed,
$P_M U^* P_N U P_M \rightarrow P_M U^* U P_M \equiv P_M$ in the operator norm,
hence the balancing property holds with $\theta$ arbitrarily close to $1$ for
large enough $N$.

Below we shall see that this property will also affect our recovery guarantees,
but it will be camouflaged as the quantity $\|G^{-1}\|_{2}$, where $G =
\sqrt{P_MU^*P_NU_PM}$. This gives the following relation.
\begin{lemma}
\label{lem:bal_prop}
    Let $U \in \Bs(\ell^2(\N))$ be an isometry satisfying the balancing
property of order $0 < \theta < 1$ for $M, N \in \N$. Let $G =
\sqrt{P_MU^*P_NUP_M}$ be self-adoint and nonnegative definite. Then
$G$ is invertible and
\begin{equation}
 \|G^{-1}\|_{2} \leq 1/\sqrt{\theta}  
\end{equation}
\end{lemma}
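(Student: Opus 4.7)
The plan is to reduce everything to a statement about the self-adjoint operator $B \coloneqq P_M U^* P_N U P_M$, whose (principal) square root is $G$. Since $B$ factors as $P_M B P_M$, it annihilates the orthogonal complement of the range of $P_M$, so the meaningful content of the lemma is the invertibility of the restriction of $G$ to $\mathrm{range}(P_M)$ (equivalently, its $M \times M$ matrix representation). I would begin the proof by pointing this out and from then on regard $B$ and $G$ as operators on the finite-dimensional space $\mathrm{range}(P_M)$.

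Next, I would exploit the balancing property to bound $B$ from below on this subspace. For any $x \in \mathrm{range}(P_M)$, the identity $P_M x = x$ together with self-adjointness gives
\[
\|Gx\|_2^{2} \;=\; \langle G^{2} x, x\rangle \;=\; \langle B x, x\rangle \;=\; \langle x, x\rangle + \langle (B - P_M) x, x\rangle.
\]
Using Cauchy--Schwarz and the balancing hypothesis $\|B - P_M\|_2 \leq 1 - \theta$, the second term is bounded in absolute value by $(1-\theta)\|x\|_2^{2}$. Therefore
\[
\|Gx\|_2^{2} \;\geq\; \|x\|_2^{2} - (1-\theta)\|x\|_2^{2} \;=\; \theta\,\|x\|_2^{2}.
\]

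This lower bound immediately yields both claims. It shows $G$ is bounded below by $\sqrt{\theta}$ on $\mathrm{range}(P_M)$, and since $G$ is self-adjoint and nonnegative on this finite-dimensional space, being bounded below forces it to be invertible there. The bound on the inverse follows by substituting $x = G^{-1} y$ into the inequality, giving $\|y\|_2 \geq \sqrt{\theta}\,\|G^{-1}y\|_2$ for every $y$ in the range of $G$ (which coincides with $\mathrm{range}(P_M)$), so $\|G^{-1}\|_2 \leq 1/\sqrt{\theta}$.

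There is no real obstacle here; the argument is a one-line application of the spectral theorem for self-adjoint operators once the balancing inequality is rewritten as a lower bound on $\langle B x, x\rangle$. The only point requiring mild care is the issue of domain: strictly speaking $G$ is not invertible as an operator on $\ell^{2}(\N)$, and the statement must be interpreted on the finite-dimensional subspace $\mathrm{range}(P_M)$ (or, equivalently, the conclusion is that the restriction $P_M G P_M$ has a bounded pseudoinverse on that subspace with the stated norm). I would make this interpretation explicit at the outset so the short spectral computation above completes the proof cleanly.
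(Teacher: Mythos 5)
Your proof is correct, and it takes a mildly but genuinely different route from the paper's. You bound the quadratic form directly: for $x$ in $\mathrm{range}(P_M)$, writing $B = P_M U^* P_N U P_M$, you get $\|Gx\|_2^2 = \langle Bx, x\rangle \geq \|x\|_2^2 - \|B - P_M\|_2\|x\|_2^2 \geq \theta\|x\|_2^2$ by Cauchy--Schwarz, and coercivity of a self-adjoint nonnegative operator on a finite-dimensional space gives invertibility with $\|G^{-1}\|_2 \leq 1/\sqrt{\theta}$. The paper instead first shows that the balancing property is \emph{equivalent} to the singular value bound $\sigma_M(P_N U P_M) \geq \sqrt{\theta}$: since $U$ is an isometry, $P_M - B$ is nonnegative definite, so the operator norm appearing in the balancing property equals the supremum of the signed quadratic form, namely $1 - \inf_{\|x\|_2=1} \|P_N U P_M x\|_2^2$; the identity $\sigma_M(G) = \sigma_M(P_N U P_M)$ then finishes the argument. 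Your Cauchy--Schwarz step needs only the one-sided implication and never uses that $U$ is an isometry --- in your form the lemma holds for any operator with $\|P_M U^* P_N U P_M - P_M\|_2 \leq 1-\theta$ --- a small gain in generality and economy. What the paper's formulation buys is the equivalence \eqref{eq:bal_prop2} itself, which is reused later in the proof of Proposition \ref{lem:Terhaar_1d} to translate the balancing property into a statement about subspace angles and the stable sampling rate, so the singular-value phrasing is not idle. Finally, your explicit handling of the domain issue (that $G$ is invertible only as an operator on $\mathrm{range}(P_M)$, i.e.\ as an $M \times M$ matrix, not on all of $\ell^2(\N)$) makes precise a point the paper leaves implicit; incidentally, your computation also sidesteps a typo in the paper's displayed chain, where the squares on the norms are dropped (the quadratic form yields $\|P_M x\|_2^2 - \|P_N U P_M x\|_2^2$).
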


\subsection{$\bs{G}$-adjusted Restricted Isometry Property in Levels (G-RIPL)}
\label{subsec:GRIPL}

Our theoretical analysis requires a RIP-type property for the matrix $H P_M$.  However, as implied in the previous discussion, the finite matrix $P_N U P_M \in \C^{N \times M}$ (from which $A P_M$ is constructed) is not an isometry for any $N \geq M$.  In particular, unlike in finite dimensions $\mathbb{E}(P_M H^* H P_M ) = P_M U^* P_N U P_M = G^2$ is not the identity.  In order to handle this situation, we introduce the following generalization of the RIP:

\begin{definition}[G-RIPL]
\label{def:GRIP}
Let $A \in \mathbb{C}^{m \times M}$,  $G \in \mathbb{C}^{M \times M}$ be invertible, $\mb{M}
= (M_1,\ldots,M_r)$ be sparsity levels and $\mb{s} = (s_1,\ldots,s_r)$ be local
sparsities.  The $\mb{s}^{\rth}$ \textit{$G$-adjusted Restricted Isometry Constant in Levels (G-RICL)} $\delta_{\mb{s},\mb{M}}$ is the smallest $\delta \geq 0$
such that
\begin{equation*}
(1-\delta) \| G x \|^{2}_{2} \leq \| A x \|^{2}_{2} \leq (1+\delta) \| G x \|^{2}_{2},\quad
\forall x \in \Sigma_{\mb{s},\mb{M}}.
\end{equation*}
If $0 < \delta_{\mb{s},\mb{M}} < 1$ we say that the matrix $A$ satisfies the
\emph{$G$-adjusted Restricted Isometry Property in Levels (G-RIPL)} of order
$(\mb{s},\mb{M})$. 
\end{definition}
The G-RIPL is of course completely general and can be stated for any
$G$. However, in the following we will let $G = \sqrt{P_M U^* P_N U P_M}$ and
show that the matrix $A = HP_{K}$ (or equivalently, $H P_M$ -- note that
$\Sigma_{\s,\M}$ consists of vectors $z$ with $P^{\perp}_M z = 0$) satisfies
the G-RIPL for this particular $G$. 

First, however, we show that 
the G-RIPL implies uniform recovery.  For this, we introduce the following notation:
\[\Sboms \coloneqq \sum_{l=1}^{r} \om^{2}_{l}s_l 
\quad \text{ and }\quad \zetas = \min_{l\in\{1,\ldots,r\}} \om_{l}^{2}s_l.
\]
Notice in particular that for the choice $\bom = (1,\ldots, 1,\om_{r+1})$ we
have $\Sboms = s_1 +\ldots +s_r$ and for the choice $\bom = (s^{-1/2}_{1},
\ldots, s^{-1/2}_{r}, \om_{r+1})$ we have $\Sboms = r$.
Finally, we let $\kappa(G) = \|G\|_2 \|G^{-1}\|_2$ denote the condition number of $G$.

\begin{theorem}
\label{thm:GRIPL_imply_uniform}
Let $A \in \C^{m\times K}$, $G \in \C^{M\times M}$ with $K \geq M$ and let $\M,\s \in \N^r$
be given sparsity levels and local sparsities, respectively. Let $\bom \in \R^{r+1}$ 
be positive weights. Suppose $AP_{M}$ satisfies
the G-RIPL of order $(\t,\M)$ with constant $\delta_{\t,\M} \leq 1/2$ and

\begin{equation}
    t_{l} = \min \left\{M_{l}-M_{l-1},\, 2 \ceil{ \frac{4\kappa(G)^2
    \Sboms}{\om^{2}_{l}}}\right\} \quad \text{for } l = 1,\ldots, r.
\label{eq:min_tl}
\end{equation}
Let
\[
\om_{r+1}\geq
\sqrt{\Sboms}(\tfrac{1}{3}(1+(\Sboms/\zetas)^{1/4})^{-1} +2\sqrt{2}\|AP_{K}^{M}\|_{1\to2} \|G^{-1}\|_2).
\]
Let $\eta \geq 0$, $x \in
\C^{K}$, $e\in \C^{m}$ with $\|e\|_{2} \leq \eta$ and set $y = Ax + e$. Then any solution $\hat{x}$ of the
optimization problem
\begin{equation}
 \minimize_{z \in \C^{K}} \|z\|_{1,\bom} \quad\text{subject to} \quad 
   \|Az - y\|_2 \leq \eta 
   \label{eq:QCBP_542} 
\end{equation}
satisfies
\begin{align}
\label{eq:error_bound1}
\|x-\hat{x}\|_{1,\bom} &\leq C \sigma_{\s, \M}(x)_{1,\bom} 
                            + D \|G^{-1}\|_{2}\sqrt{\Sboms} \eta \\
\|x-\hat{x}\|_{2} &\leq (1+(\Sboms/\zetas)^{1/4}) \( C \frac{\sigma_{\s,\M}(x)_{1,\bom}}{\sqrt{\Sboms}} + D\|G^{-1}\|_{2}\eta \)
\label{eq:error_bound2}
\end{align}
where $C = 2(2+\sqrt{3})/(2-\sqrt{3})$, $D = 8\sqrt{2}/(2-\sqrt{3})$ and
$\sigma_{\s,\M}(x)_{1,\bom} = \inf  \{ \| x - z \|_{1,\bom} : z \in \Sigma_{\s,\M} \}$.
\end{theorem}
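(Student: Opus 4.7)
The plan is to adapt the classical weighted-$\ell^1$-minimization-plus-RIP argument in levels \cite{Bastounis17,Traonmilin15} to (i) the G-RIPL setting, where the RIP is replaced by its $G$-adjusted version, and (ii) the infinite-dimensional truncation, where the sparsity bandwidth $M$ is smaller than the data fidelity parameter $K$ and the ``overflow'' weight $\om_{r+1}$ is responsible for the indices in $\{M+1,\ldots,K\}$. Let $\hat{x}$ solve \eqref{eq:QCBP_542}, set $h = \hat{x}-x \in \C^K$, and let $S = S_1\cup\cdots\cup S_r$ with $S_l \subseteq \{M_{l-1}+1,\ldots,M_l\}$ and $|S_l|=s_l$ index a best weighted $(\s,\M)$-term approximation to $x$. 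Decompose $h = h_S + h_{M\setminus S} + P_K^M h$; the goal is to bound $\|h\|_{1,\bom}$ and $\|h\|_2$.

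From optimality $\|\hat{x}\|_{1,\bom}\leq\|x\|_{1,\bom}$, splitting each level into $S_l$ and its complement and applying the triangle inequality yields the cone bound $\|h_{M\setminus S}\|_{1,\bom} + \om_{r+1}\|P_K^M h\|_1 \leq \|h_S\|_{1,\bom} + 2\sigma_{\s,\M}(x)_{1,\bom}$. Within each level $l$, sort the entries of $h_{M\setminus S}$ by magnitude and partition them into consecutive blocks of size $t_l - s_l$, producing vectors $h_{(1)},h_{(2)},\ldots$ such that each $h_S + h_{(j)}$ is $(\t,\M)$-sparse. A per-level weighted Cauchy--Schwarz argument bounds $\sum_{j\geq 2}\|h_{(j)}\|_2$ by $\max_l (\om_l\sqrt{t_l-s_l})^{-1}\,\|h_{M\setminus S}\|_{1,\bom}$. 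The choice $t_l \asymp \kappa(G)^2 \Sboms/\om_l^2$ in \eqref{eq:min_tl} is precisely the one that makes this block-sum bound scale like $\kappa(G)^{-1}\,\|h_{M\setminus S}\|_{1,\bom}/\sqrt{\Sboms}$, exactly what is needed to offset the condition-number inflation when passing from $\|G\cdot\|_2$ to $\|\cdot\|_2$ in the next step.

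Applying G-RIPL to $h_S+h_{(1)}\in\Sigma_{\t,\M}$ gives $(1-\delta)\|G(h_S+h_{(1)})\|_2^2 \leq \|A(h_S+h_{(1)})\|_2^2$. Expanding $Ah = A(h_S+h_{(1)}) + \sum_{j\geq 2}Ah_{(j)} + AP_K^M h$, using feasibility $\|Ah\|_2\leq 2\eta$, bounding each cross term $\|Ah_{(j)}\|_2\leq\sqrt{1+\delta}\|G\|_2\|h_{(j)}\|_2$ via G-RIPL, and estimating $\|AP_K^M h\|_2\leq\|AP_K^M\|_{1\to 2}\|P_K^M h\|_1$, one obtains $\|G(h_S+h_{(1)})\|_2 \lesssim \eta + \kappa(G)\cdot(\text{block sum}) + \|AP_K^M\|_{1\to 2}\|P_K^M h\|_1$. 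The tail $\|P_K^M h\|_1$ is controlled via the cone inequality at cost $\om_{r+1}^{-1}$; the lower bound imposed on $\om_{r+1}$ is calibrated precisely so that, after multiplying by $\|G^{-1}\|_2\sqrt{\Sboms}$ (the factor picked up when converting $\|G(\cdot)\|_2$ back to $\|\cdot\|_2$), this tail contribution is absorbed into a fraction of $\|h_S\|_2$. Rearranging, then using $\|h\|_2\leq \|h_S+h_{(1)}\|_2 + \sum_{j\geq 2}\|h_{(j)}\|_2 + \|P_K^M h\|_2$, yields \eqref{eq:error_bound2}; the $\ell^1$ bound \eqref{eq:error_bound1} follows by substituting $\|h_S\|_{1,\bom}\leq\sqrt{\Sboms}\|h_S\|_2$ into the cone inequality.

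The main obstacle is the simultaneous calibration of $t_l$ and $\om_{r+1}$. The sparsity choice \eqref{eq:min_tl} must balance three competing effects at once: the condition number $\kappa(G)$ arising from G-RIPL, the weighted sparsity $\Sboms$ which sets the scale of the block decomposition, and the per-level weight $\om_l$. The overflow weight $\om_{r+1}$ must simultaneously dominate the cone-absorption constant $\tfrac13(1+(\Sboms/\zetas)^{1/4})^{-1}\sqrt{\Sboms}$ (ensuring the tail cannot unbalance the $\ell^1$ cone) and the quantity $2\sqrt{2}\,\|AP_K^M\|_{1\to 2}\|G^{-1}\|_2\sqrt{\Sboms}$ (ensuring the tail cannot blow up through the G-RIPL absorption step). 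The prefactor $(1+(\Sboms/\zetas)^{1/4})$ in \eqref{eq:error_bound2} emerges from converting the $\ell^1$ block estimate into an $\ell^2$ bound using the smallest weighted sparsity $\zetas$, while the explicit constants $2\pm\sqrt{3}$ and $8\sqrt{2}$ follow from a Foucart-style chain under $\delta_{\t,\M}\leq 1/2$, which gives $\sqrt{(1+\delta)/(1-\delta)}\leq\sqrt{3}$.
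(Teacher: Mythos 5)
Your proposal is correct in outline, but it takes a genuinely different route from the paper. You run the classical \emph{direct} RIP-to-recovery argument on the error $h = \hat{x}-x$: cone condition from minimality, level-wise block decomposition with blocks of size $t_l - s_l$ so that $h_S + h_{(1)} \in \Sigma_{\t,\M}$, G-RIPL with cross terms $\|Ah_{(j)}\|_2 \leq \sqrt{1+\delta}\,\|G\|_2\|h_{(j)}\|_2$, and the truncation tail $P_{K}^{M}h$ priced at $\om_{r+1}^{-1}$. The paper instead factorizes through an intermediate property, the weighted robust null space property in levels: Theorem \ref{thm:GRIPLimpliesrNSP} shows the G-RIPL of order $(\t,\M)$ implies the weighted rNSPL with $\rho = \sqrt{3}/2$ and $\gamma = \sqrt{2}\|G^{-1}\|_2$ (using shifted blocks of size $t_l/2$ rather than your $t_l - s_l$); Lemmas \ref{l:rNSPLdistance1w} and \ref{l:rNSPLdistance2} convert the rNSPL into $\ell^{(1,\bom)}$- and $\ell^2$-distance bounds valid for \emph{arbitrary} pairs of vectors; and Theorem \ref{lem:min_om_bound} isolates exactly where the condition on $\om_{r+1}$ enters, namely in bounding $\|AP_M v\|_2 \leq 2\eta + \om_{r+1}^{-1}\|AP_{K}^{M}\|_{1\to 2}\bigl(\|P_{K}^{M}x\|_{1,\bom} + \|P_{K}^{M}\hat{x}\|_{1,\bom}\bigr)$. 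The technical engine -- tuned block sizes, the $\kappa(G)$ bookkeeping, the dual role of $\om_{r+1}$ -- is the same in both, and your calibration has slack (since $\om_l^2 s_l \leq \Sboms$, your choice gives $\om_l^2(t_l - s_l) \geq 7\kappa(G)^2 \Sboms$ versus the paper's $8\kappa(G)^2\Sboms$ for $\om_l^2 t_l/2$), so the stated constants are attainable by essentially the same chain that produces $(1+\rho)/(1-\rho) = (2+\sqrt{3})/(2-\sqrt{3})$. The paper's modularity buys reusability (one rNSPL proof feeds both error bounds, for any feasible pair, without repeating the block argument); your route is shorter but entangles the three calibrations in a single computation and must redo the cone/block estimates for each norm. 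Two details to tighten in a full write-up: levels with $t_l = M_l - M_{l-1}$ contribute no blocks $h_{(j)}$, $j \geq 2$, and must be excluded from the $\max_l$ (the paper's index set $\Delta$ in Theorem \ref{thm:GRIPLimpliesrNSP}); and the tail is not literally absorbed ``into a fraction of $\|h_S\|_2$'' -- as in the paper's \eqref{eq:cond_om2}, the weight condition forces its coefficient below $C/2$ so it merges with the $\sigma_{\s,\M}(x)_{1,\bom}$ and optimality terms, while the $\|h_S\|$-absorption handles only the cone constant $\rho$.
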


Notice that the condition on $\delta$ in the above theorem is
fundamentally different from the condition found in Theorem
\ref{thm:RIPL_imply_uniform}. In the latter one requires 
$\delta_{2\s,\M} < (r(\sqrt{\alpha_{\s,\M}}+\tfrac{1}{4})^2+1)^{-1/2}$ where
$\alpha_{\s,\M} = \max_{k,l=1,\ldots,r} s_k/s_l$ is the sparsity ratio. Thus
for sparsity levels where the local sparsities vary greatly, this bound will be
unreasonably small.

In the above theorem we have removed this sparsity ratio term, by setting
$\delta = 1/2$, and require  $\delta_{\t,\M} \leq \delta$ where $t_l \geq 2
\ceil{4\kappa(G) \Sboms w_{l}^{-2}}$. For the unweighted case this leads to a
condition of the form  
\[ t_l \geq 2\ceil{4\kappa(G)^2 (s_1 + \ldots + s_r)} \]
which could be difficult to fulfill in practice, since each $t_l$ would have to be greater 
than the total sparsity of the signal. However, by considering 
the weights $\bom = (s_{1}^{-1/2},\ldots, s_{r}^{-1/2},\om_{r+1})$ we obtain a
condition of the form 
\[ t_l \geq 2\ceil{4\kappa(G)^2 r s_l}, \]
 where $t_l$ is independent of $s_k$ for $k \neq l$. This
means that we can write the requirement as $\delta_{2\ceil{4\kappa(G)^2 r \s}, \M} \leq
1/2$, and ignore any dependence between the $\s$-values, as was the problem in
Theorem \ref{thm:RIPL_imply_uniform}. 

\subsection{Sufficient condition for the G-RIPL}

In Definition \ref{def:local_coher_fin} we defined the local coherence
$\mu_{k,l}$ of an isometry $U \in \C^{N\times N}$. We extend this to isometries
$U \in \Bs(\ell^2(\N))$ in the exact same way
\[ \mu_{k,l} = \mu_{k,l}(\Nm,\M) = \{\max |U_{ij}|^2: i = N_{k-1}+1,\ldots, N_{k}, 
j = M_{l-1}+1,\ldots, M_{l}\}. \]      
This yields the following theorem.

\begin{theorem}
[Subsampled isometries and the G-RIPL]
\label{t:RIPlevels1}
Let $U \in \mathcal{B}(\ell^2(\mathbb{N}))$ be an isometry, and let $\Om =
\Omega_{\mb{N},\mb{m}}$ be an $(\mb{N},\mb{m})$-multilevel sampling scheme with
$r$ levels. Let $\M,\s \in \N^r$ be sparsity levels and local
sparsities, respectively. Let $\eps,\delta \in (0,1)$ and let $0 \leq r_0 \leq
r$, with $\tilde{m}=m_{r_0+1} + \cdots + m_r$. Let $s = s_1 + \cdots + s_r$ and 
$L = \Llog$. 
Suppose $G = \sqrt{P_{M}U^*P_N UP_M}$ is non-singular. If
\begin{equation}
\label{eq:RIPL1_cond1}
m_k = N_k - N_{k-1},\quad k=1,\ldots,r_0,
\end{equation}
and
\begin{equation}
\label{eq:GRIPL_sampdens}
m_k \gtrsim \delta^{-2}  \cdot \|G^{-1}\|^{2}_{2}\cdot (N_k - N_{k-1}) \cdot \bigg (
\sum^{r}_{l=1} \mu_{k,l} \cdot s_l \bigg ) \cdot L,
\end{equation}
for $k=r_0+1,\ldots,r$ then with probability at least $1-\epsilon$, the matrix
\begin{equation}
A = \begin{bmatrix} 1/\sqrt{p_1}P_{\Om_1}UP_{M}\\ \vdots \\ 1/\sqrt{p_r}P_{\Om_r}UP_{M}\end{bmatrix}
\quad\text{where }p_{k} = \frac{m_k}{N_{k}-N_{k-1}}\quad\text{for }k=1,\ldots,r 
\label{def:A_infinite}
\end{equation}
satisfies the G-RIPL of order $(\mb{s},\mb{M})$
with constant $\delta_{\mb{s},\mb{M}} \leq \delta$.
\end{theorem}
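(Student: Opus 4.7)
The plan is to reduce the G-RIPL to a concentration inequality for the random self-adjoint operator $A^*A$ around its mean $G^2$, restricted to vectors that are sparse in levels, and then to adapt the Dudley/chaining argument used in the proof of Theorem~\ref{thm:Li_RIPL_mk} to absorb the extra factor $\|G^{-1}\|_2$ that shows up because $P_N U P_M$ is no longer an isometry.

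First, I would note that for any $x \in \Sigma_{\mathbf{s},\mathbf{M}}$,
\[
\bigl|\|Ax\|_2^2 - \|Gx\|_2^2\bigr| = \bigl|\langle (A^*A - G^2)x,\, x\rangle\bigr|, \qquad \|Gx\|_2^2 \geq \|G^{-1}\|_2^{-2}\|x\|_2^2,
\]
so the G-RIPL with constant $\delta$ follows once I prove
\[
Y := \sup_{x \in \Sigma_{\mathbf{s},\mathbf{M}},\,\|x\|_2 = 1} \bigl|\langle (A^*A - G^2)x,\,x\rangle\bigr| \leq \delta\,\|G^{-1}\|_2^{-2}.
\]
A direct computation, using that the first $r_0$ levels are deterministic and that the remaining levels are i.i.d.~uniform with the rescaling $1/p_k$, gives $\mathbb{E}[A^*A] = P_M U^* P_N U P_M = G^2$. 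Consequently $A^*A - G^2$ is a sum of independent, mean-zero, self-adjoint rank-one perturbations
\[
Z_{k,j} = \tfrac{1}{p_k}\, P_M U^* e_{t_{k,j}} e_{t_{k,j}}^* U P_M - \tfrac{1}{m_k}\, P_M U^* P_{N_k}^{N_{k-1}} U P_M,
\]
indexed by $k \in \{r_0+1,\ldots,r\}$ and $j \in \{1,\ldots,m_k\}$, and $Y$ is the supremum of the associated centred quadratic form over the sparse-in-levels sphere.

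The core of the argument is then a Dudley/generic-chaining estimate that mirrors the finite-dimensional proofs in \cite{Bastounis17,Li17}, fed by two coherence-based inputs. For any $(\mathbf{s},\mathbf{M})$-support $\Delta$ and any $i \in \{N_{k-1}+1,\ldots,N_k\}$, the local coherence bound yields
\[
\|P_\Delta P_M U^* e_i\|_2^2 = \sum_{j \in \Delta} |U_{ij}|^2 \leq \sum_{l=1}^{r} s_l\,\mu_{k,l},
\]
which controls both the uniform norm of each $Z_{k,j}$ on the sparse-in-levels cone and, by a parallel calculation, the variance proxy $\bigl\|\sum_{k,j}\mathbb{E}[Z_{k,j}^2]\bigr\|$. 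Plugging these into a matrix Bernstein/Dudley integral, with the level-wise entropy estimate from \cite{Bastounis17,Li17} (restricted to the random levels so that $m$ is replaced by $\tilde m$), produces a bound of the form $\mathbb{E}[Y] \lesssim \max_k \tfrac{N_k - N_{k-1}}{m_k}\bigl(\sum_l s_l\mu_{k,l}\bigr) L$ together with a sub-Gaussian deviation at level $\epsilon$, where $L$ is exactly the log factor in the statement. Requiring $Y \leq \delta\,\|G^{-1}\|_2^{-2}$ and balancing level by level delivers precisely~\eqref{eq:GRIPL_sampdens}; the factor $\|G^{-1}\|_2^2$ is absorbed into the $\delta^{-2}$ coefficient rather than mixing with the coherence sum.

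The main obstacle is the level-wise chaining step: Dudley's inequality must be applied to $\Sigma_{\mathbf{s},\mathbf{M}}\cap S^{M-1}$ in a way that keeps each level's contribution separated and does not introduce spurious factors of $r$ or $\max_l s_l$, so that the log factor coming out matches $r\log(2\tilde m)\log(2N)\log^2(2s) + \log(\epsilon^{-1})$ and the dependence on the deterministic levels drops out. This level-by-level bookkeeping, together with a final union bound over the per-level concentration events, is what makes the sampling condition sharp in each level; everything else is either expectation bookkeeping or a straightforward adaptation of the finite-dimensional Bernstein/chaining machinery.
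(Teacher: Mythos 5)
Your proposal follows the same overall architecture as the paper's proof: the identity $\mathbb{E}(A^*A) = P_M U^* P_N U P_M = G^2$, the decomposition of $A^*A - G^2$ into independent centred rank-one terms over the non-saturated levels, the local coherence input $\|P_\Delta P_M U^* e_i\|_2^2 \le \sum_{l=1}^r \mu_{k,l} s_l$ for $N_{k-1} < i \le N_k$, and symmetrization/Dudley plus a Talagrand-type tail adapted from \cite{Li17}. However, your opening reduction contains a genuine quantitative gap: taking the supremum over the Euclidean sparse sphere and demanding $Y \le \delta \|G^{-1}\|_2^{-2}$ does \emph{not} deliver \eqref{eq:GRIPL_sampdens}; it delivers the strictly weaker condition with $\|G^{-1}\|_2^{4}$ in place of $\|G^{-1}\|_2^{2}$. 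The reason is that in the chaos estimate the expectation obeys, schematically, $\mathbb{E}[Y] \lesssim \sqrt{K^2 L \, \sigma^2} + K^2 L$, where $K = \sup_z \max_{k,i} |\langle X_{k,i}, z\rangle|$ and $\sigma^2 = \sup_z \mathbb{E}\|Az\|_2^2 = \sup_z \|Gz\|_2^2$. On $\Sigma_{\s,\M}$ intersected with the unit sphere you get $K^2 \lesssim Q \coloneqq \max_k p_k^{-1} \sum_l \mu_{k,l} s_l$ and $\sigma^2 \le \|G\|_2^2 \le 1$, with no compensating gain anywhere; so meeting the shrunken threshold $t = \delta\|G^{-1}\|_2^{-2}$ forces $Q L \lesssim t^2 = \delta^2 \|G^{-1}\|_2^{-4}$, i.e. $m_k \gtrsim \delta^{-2} \|G^{-1}\|_2^4 (N_k - N_{k-1}) \big( \sum_l \mu_{k,l} s_l \big) L$, and the same fourth power appears in the concentration step. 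Relatedly, your displayed bound $\mathbb{E}[Y] \lesssim \max_k \tfrac{N_k - N_{k-1}}{m_k}\big(\sum_l s_l \mu_{k,l}\big) L$ omits the dominant square-root term; it is exactly that term which generates the $\delta^{-2}$, so taken literally your bound would only justify a $\delta^{-1}$ condition --- you cannot obtain both $\delta^{-2}$ and $\|G^{-1}\|_2^{2}$ from the Euclidean normalization.

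The fix, which is what the paper actually does, is to build the $G$-weighting into the normalization set rather than into the threshold: define $\delta_{\s,\M}$ as the supremum of $|\langle (A^*A - G^2) z, z\rangle|$ over $D_{\s,\M,G} = \{ z \in \Sigma_{\s,\M} : \|Gz\|_2 \le 1\}$, which equals the G-RICL by homogeneity, so the lossy lower bound $\|Gx\|_2 \ge \|G^{-1}\|_2^{-1}\|x\|_2$ is never invoked. On $D_{\s,\M,G}$ the variance proxy is exactly $\sup_z \|Gz\|_2^2 = 1$, while only the sup-norm radius inflates: $|\langle X_{k,i}, z\rangle| \le p_k^{-1/2}\big(\sum_l \mu_{k,l}s_l\big)^{1/2}\|z\|_2 \le p_k^{-1/2}\big(\sum_l \mu_{k,l}s_l\big)^{1/2}\|G^{-1}\|_2$, so $K^2 \lesssim Q\|G^{-1}\|_2^2$ and the target is the unscaled $\delta$. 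The requirement $K^2 L \lesssim \delta^2$ then reads $Q \|G^{-1}\|_2^2 \delta^{-2} L \lesssim 1$, which is precisely \eqref{eq:GRIPL_sampdens}, and the Talagrand tail (with variance scale $Q\|G^{-1}\|_2^2$ and threshold $\delta$) produces the $\log(\eps^{-1})$ term in $L$ with the same second power. Everything else in your outline --- the rank-one decomposition $Z_{k,j}$, replacing $m$ by $\tilde m$ in the entropy bookkeeping on the random levels, the coherence estimate, and deferring the level-separated chaining to \cite{Bastounis17,Li17} --- agrees with the paper's proof, so the repair is local: change the normalization set and the rest of your argument goes through essentially verbatim.
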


\subsection{Overall recovery guarantee}

Theorem \ref{thm:GRIPL_imply_uniform} and Theorem \ref{t:RIPlevels1} yield the next results.

\begin{corollary}
\label{cor:overall}
Let $U \in \mathcal{B}(\ell^2(\mathbb{N}))$ be an isometry, and let $\Om =
\Omega_{\mb{N},\mb{m}}$ be an $(\mb{N},\mb{m})$-multilevel sampling scheme with
$r$ levels. Let $\M,\s \in \N^r$ be sparsity levels and local sparsities,
respectively, and let $\bom = [s_{1}^{-1/2}, \ldots, s_{r}^{-1/2}, \om_{r+1}]$
be weights. Let $\eps,\delta \in (0,1)$ and $0 \leq r_0 \leq r$. Let
$m=m_1+\ldots+m_r$, $\tilde{m}=m_{r_0+1} + \cdots + m_r$, $s = s_1 + \cdots +
s_r$, and $L = \Llog$. Let $H \in \C^{m\times \infty}$ be as in \eqref{eq:H_mat}
and set $A = HP_{K}$. Let $x \in \ell^{2}(\N)$, $e_1 \in \C^{m}$ and $\eta > 0$.
Set $e = HP_{K}^{\perp} x + e_1$ and $\tilde{y} = A x + e$. 
Suppose 
\begin{enumerate}[(i)]
    \item we choose $M$ and $N$ so that $U$ satisfies the balancing property of order $0<\theta<1$,
    \item we choose $\eta \geq \|e_1\|$ and $K$ so that $\|HP_{K}^{\perp}x\|_{2} \leq \eta'$,  
    \item the weight $\om_{r+1}$ satisfies 
    \[ \om_{r+1}\geq \sqrt{r}\(\frac{1}{3(1+r^{1/4})} 
                   +2\sqrt{\frac{2}{\theta}}\|AP_{K}^{M}\|_{1 \to 2} \), 
 \]
\item the $m_k$'s satisfy $m_{k} = N_{k} - N_{k-1}$ for $k=1, \ldots, r_0$ and 
\begin{equation}
    m_{k} \gtrsim \theta^{-2} \cdot r \cdot (N_{k}-N_{k-1}) \cdot \bigg ( \sum^{r}_{l=1} \mu_{k,l} s_l \bigg) \cdot L \quad \text{ for } k = r_0+1,\ldots,r. 
\end{equation}
\end{enumerate}
Then with probability $1-\eps$ any solution $\hat{x}$ of the optimization problem
\[ \minimize_{z \in \C^{K}} \|z\|_{1,\bom} \quad\text{subject to} \quad 
   \|Az - \tilde{y}\|_2 \leq \eta + \eta'\]
satisfies
\begin{align} \label{eq:err_bd1}
\|P_{K}x-\hat{x}\|_{1,\bom} &\leq C \sigma_{\s, \M}(P_K x)_{1,\bom} 
                            + D \frac{\sqrt{r}}{\sqrt{\theta}}(\eta+\eta') \\
\|P_{K}x-\hat{x}\|_{2} &\leq (1+r^{1/4}) \( C \frac{\sigma_{\s,\M}(P_{K}x)_{1,\bom}}{\sqrt{r}} + D\frac{1}{\sqrt{\theta}}(\eta+\eta') \)
\label{eq:err_bd2}
\end{align}
where $C = 2(2+\sqrt{3})/(2-\sqrt{3})$ and 
$D = 8\sqrt{2}/(2-\sqrt{3})$.
\end{corollary}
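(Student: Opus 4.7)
The plan is to combine Theorem \ref{t:RIPlevels1} and Theorem \ref{thm:GRIPL_imply_uniform}, using Lemma \ref{lem:bal_prop} to convert the balancing property into an operator bound on $G^{-1}$. Set $G = \sqrt{P_M U^* P_N U P_M}$. First I would simplify the weighted sparsity quantities for the prescribed weights $\bom = [s_1^{-1/2}, \ldots, s_r^{-1/2}, \om_{r+1}]$: a direct computation gives $\Sboms = \sum_{l=1}^r \om_l^2 s_l = r$ and $\zetas = \min_l \om_l^2 s_l = 1$, so $\Sboms/\zetas = r$. Next I would record the two scalar bounds that drive everything: by Lemma \ref{lem:bal_prop} the balancing property gives $\|G^{-1}\|_2 \leq 1/\sqrt{\theta}$, and since $U$ is an isometry we have $P_M U^* P_N U P_M \preceq P_M U^* U P_M = P_M$, whence $\|G\|_2 \leq 1$ and therefore $\kappa(G) = \|G\|_2 \|G^{-1}\|_2 \leq 1/\sqrt{\theta}$.

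The heart of the argument is verifying the hypothesis of Theorem \ref{thm:GRIPL_imply_uniform} with constant $\delta_{\t,\M} \leq 1/2$, where the auxiliary sparsities \eqref{eq:min_tl} become $t_l = \min\{M_l - M_{l-1},\, 2\lceil 4\kappa(G)^2 r s_l\rceil\}$. I plan to apply Theorem \ref{t:RIPlevels1} at sparsity vector $\t$ with $\delta = 1/2$, which asks for the sampling condition
\[
m_k \gtrsim \|G^{-1}\|_2^2 \cdot (N_k - N_{k-1}) \cdot \Bigl(\sum_{l=1}^r \mu_{k,l} t_l\Bigr) \cdot L.
\]
Using $t_l \lesssim \kappa(G)^2 r s_l \leq r s_l / \theta$ (with the $M_l - M_{l-1}$ cap harmless, since it only decreases $t_l$) and $\|G^{-1}\|_2^2 \leq 1/\theta$, the right-hand side is bounded by a constant times $\theta^{-2} \cdot r \cdot (N_k - N_{k-1}) \sum_{l} \mu_{k,l} s_l \cdot L$, which is exactly hypothesis (iv). Hence with probability at least $1-\eps$, the matrix $AP_M$ from \eqref{def:A_infinite} satisfies the G-RIPL of order $(\t,\M)$ with constant $\leq 1/2$.

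Before invoking Theorem \ref{thm:GRIPL_imply_uniform}, I would handle the truncation error. Writing $\tilde y = A P_K x + e$ with $e = H P_K^\perp x + e_1$, the triangle inequality and hypothesis (ii) give $\|e\|_2 \leq \eta + \eta'$, so $P_K x$ is feasible for the optimization problem with noise level $\eta + \eta'$. The weight hypothesis (iii) is exactly what the theorem requires after substituting $\Sboms = r$, $\Sboms/\zetas = r$, and $\|G^{-1}\|_2 \leq 1/\sqrt{\theta}$. Finally I would apply Theorem \ref{thm:GRIPL_imply_uniform} to the signal $P_K x \in \C^K$ with total noise level $\eta + \eta'$; substituting $\Sboms = r$, $\|G^{-1}\|_2 \leq 1/\sqrt{\theta}$, and $(\Sboms/\zetas)^{1/4} = r^{1/4}$ into \eqref{eq:error_bound1} and \eqref{eq:error_bound2} yields exactly \eqref{eq:err_bd1} and \eqref{eq:err_bd2}.

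The only step that is more than bookkeeping is the reduction of the G-RIPL sampling bound for $\t$ to the stated bound for $\s$, since one must simultaneously absorb the $\kappa(G)^2$ factor produced by $t_l$ and the $\|G^{-1}\|_2^2$ factor from Theorem \ref{t:RIPlevels1}; both collapse into the single $\theta^{-2}$ in (iv) precisely because $\|G\|_2 \leq 1$. Everything else is substitution.
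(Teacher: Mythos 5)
Your proposal is correct and takes essentially the same route as the paper's own proof: compute $\Sboms = r$, $\zetas = 1$ for the given weights, check feasibility of $P_K x$ at noise level $\eta+\eta'$, use Lemma \ref{lem:bal_prop} together with $\|G\|_2 \leq 1$ to get $\|G^{-1}\|_2 \leq 1/\sqrt{\theta}$ and $\kappa(G) \leq 1/\sqrt{\theta}$, establish the G-RIPL of order $(\t,\M)$ with $\delta_{\t,\M}\leq 1/2$ via Theorem \ref{t:RIPlevels1}, and conclude with Theorem \ref{thm:GRIPL_imply_uniform}. If anything, you are more explicit than the paper on the one nontrivial step—absorbing $t_l \lesssim \kappa(G)^2 r s_l$ and $\|G^{-1}\|_2^2 \leq \theta^{-1}$ into the single $\theta^{-2}$ factor of condition (iv)—which the paper's proof asserts without calculation.
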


Suppose that $x$ is exactly $(\s,\M)$-sparse.  Then the above theorem
guarantees exact recovery of $x$ via weighted $\ell^1$ minimization subject to
the corresponding measurement condition.  We note in passing this measurement
condition is optimal up to log factors, in the sense that it is the same of
that of the oracle estimator based on \textit{a priori} knowledge of
$\supp(x)$.  See \cite{ABB18}.

\section{Recovery guarantees for Walsh sampling with wavelet reconstruction}
\label{sec:4}

Having presented the abstract infinite-dimensional CS framework in full generality, the remainder of the paper is devoted to its application to the case of binary sampling with the Walsh transform with sparsity in orthogonal wavelet bases.  We first describe the setup, before presenting the main recovery guarantees in Sections \ref{subsec:rec_1d} and \ref{subsec:rec_Haar}.

\subsection{Walsh functions}
\label{sec:walshFunc}

For any number $n \in \Z_{+} = \{0,1,2, \ldots\}$ there exits a unique
dyadic expansion \[n = n_1 2^{0} + n_2 2^{1} + \ldots + n_{j}2^{j-1} + \cdots
\] where $n_j \in \{0, 1\}$ for $j \in \N$. Similarly any $x \in [0,1)$ can
be written in its dyadic form as 
\[ x = x_1 2^{-1} + x_2 2^{-2} + \cdots + x_j  2^{-j} \] 
with $x_j \in \{0,1\}$ for all $j \in \N$.  For a dyadic rational number $x$ this
expansion is not unique, as one may use either a finite expansion, or an
infinite expansion where $x_i = 1$ for all $i \geq k$ for some $k \in \N$. In
such cases we always consider the finite expansion. 
In practice this means that
we have removed countably many singletons from $[0,1)$.

\begin{definition}
    \label{def:walsh_func}
    Let $n \in \Z_{+}$ and $x \in [0,1)$. The \emph{Walsh function} 
    $w_n\colon [0,1) \to \{+1, -1\}$ is given by
    \begin{equation}
         w_n(x) \coloneqq (-1)^{\sum_{j=1}^{\infty} (n_j + n_{j+1})x_j}
    \end{equation}
\end{definition}
On the interval $[0,1)$ the Walsh function $w_n$ has $n$ sign changes, 
$n$ is therefore often denoted the \emph{frequency} of $w_n$.
The $2^r$ first Walsh functions gives rise to the entries in the  
sequency ordered Hadamard matrix 
\[(V_{\text{Had}})_{i,j} = w_{i-1}((j-1)/2^{r})\]
where $i,j = 1, \ldots, 2^r$.

\begin{definition}[Walsh basis]
    \label{def:walsh_basis}
    Define the \emph{Walsh basis} as 
    \[ B_{\textnormal{wh}} \coloneqq \{w_n : n \in \Z_+\} \]
    where \textquote{wh} is an abbreviation for Walsh-Hadamard.
\end{definition}

Note that this is an orthonormal basis of $L^2([0,1))$.

%
%
%


\subsection{Wavelet transform}
\label{subsec:wavelets1}
Let $\phi \colon \R \to \R$ and $\psi \colon \R \to \R$ be a orthonormal scaling
function and wavelet \cite{Daubechies92}, respectively, with minimal support,
corresponding to an multiresolution analysis (MRA). Note that this could 
both be the classical \textquote{Daubechies wavelet} with a minimum-phase or 
\textquote{symlets} which are close to being symmetric, but with a larger 
phase \cite[294]{Mallat09}.
Let 
\begin{equation}
\phi_{j,k} (x) \coloneqq 2^{j/2} \phi(2^j x -k ) ~~~ \text{ and }~~~ 
\psi_{j,k} (x) \coloneqq 2^{j/2} \psi (2^j x - k)
\label{eq:db_scale} 
\end{equation}
denote the scaled and translated versions. 

A wavelet $\psi$ is said to have $\nu$ vanishing moments if
\[ \int_{-\infty}^{\infty} x^k \psi(x) \d x = 0 ~~~~~ \text{ for } 0 \leq k < \nu. \]

For for orthogonal wavelets with minimum support, the support depends on the number of vanishing moments.
That is
\begin{equation}
    \Supp (\phi) = \Supp (\psi) = [-\nu+1, \nu].
\end{equation}
While this system constitutes an orthonormal basis of $L^2(\R)$, in our case we require an orthonormal basis of $L^2([0,1))$. There
exists several construction of wavelets on the interval, but we will only
consider periodic extensions and the orthogonal \emph{boundary wavelets} introduced by
Cohen, Daubechies and Vial in \cite{Cohen93}, which preserves the number of
vanishing moments. 
 
For wavelets on the interval we need to 
replace the $2\nu$ wavelets/scaling functions intersecting the boundaries at each
scale, with their corresponding boundary-corrected counterparts. We postpone the formal definition 
of periodic and boundary wavelets until we need it, in the proof sections. But to
simplify the notation let 
\begin{align*}
    \phi_{j,k}^{0} &\coloneqq
    \begin{cases} 
        \phi_{j,k}^{\text{boundary}} & \text{for } k \in \{0,\ldots,\nu-1\} \\
        \phi_{j,k} & \text{for } k \in \{\nu, \ldots, 2^j-\nu-1\}  \\
        \phi_{j,k}^{\text{boundary}} & \text{for } k \in \{2^j -\nu,\ldots, 2^j-1\}  \\
    \end{cases}, \\ 
    \phi_{j,k}^{1} &\coloneqq
    \begin{cases} 
        \psi_{j,k}^{\text{boundary}} & \text{for } k \in \{0,\ldots,\nu-1\} \\
        \psi_{j,k} & \text{for } k \in \{\nu, \ldots, 2^j-\nu-1\}  \\
        \psi_{j,k}^{\text{boundary}} & \text{for } k \in \{2^j -\nu,\ldots, 2^j-1\}  \\
    \end{cases}, 
\end{align*}
where $\phi^{\text{boundary}}_{j,k}$ and $\psi^{\text{boundary}}_{j,k}$ are either 
a periodic wavelet/scaling function or the boundary wavelet/scaling functions
introduced in \cite{Cohen93}. For the former extension we say that
$\phi_{j,k}^{s}$, $s \in \{0,1\}$ \textquote{originate from a \emph{periodic
wavelet}} while for the latter we say that it  \textquote{originate from a
\emph{ boundary wavelet}}.

We will throughout assume $J_0 \in \Z_+$ satisfies $2^{J_0} \geq 2\nu$ for
$\nu\geq 2$ and $J_0 \geq 0$ for $\nu=1$. This will ensure that there exits at
least one $k \in \{0, \ldots, 2^{j}-1\}$ such that $\Supp(\phi_{j,k}) =
\Supp(\psi_{j,k}) \subseteq [0,1)$ for all $j \geq J_0$.
\begin{definition}
    \label{def:Bwave}
    For a fixed number of vanishing moments $\nu$, minimum wavelet decomposition
    $J_0$ and a boundary extension which is either periodic or boundary wavelets, 
    let $\phi_{j,k}^{s}$ be the corresponding wavelets and scaling functions.  
    We define 
    \begin{equation*}
        B_{\textnormal{wave}}^{J_0, \nu}
= \left \{
\phi_{J_0,0}^{0}, \ldots, \phi_{J_0,2^{J_0}-1}^{0}, 
\phi_{J_0,0}^{1}, \ldots, \phi_{J_0,2^{J_0}-1}^{1}, 
\phi_{J_0+1,0}^{1}, \ldots, \phi_{J_0+1,2^{J_0+1}-1}^{1}, 
\ldots
\right\} 
    \end{equation*}
\end{definition}
Both $\Bwh$ and $\Bwave^{J_0,\nu}$ are orthonormal bases for $L^2([0,1))$.


\subsection{Recovery guarantees}
\label{subsec:rec_1d}
From Section \ref{sec:infin_dim} there are four unknown factors depending on $U$ which need to be estimated. These are the local coherences $\mu_{k,l}$, the norm 
$\|HP_{K}^{M}\|_{1\to 2}$ where $H$ is given by \eqref{eq:H_mat}, the
condition number $\kappa(G) = \|G\|_{2}\|G^{-1}\|_{2}$ and the factor
$\|G^{-1}\|_{2}$ found in condition \eqref{eq:GRIPL_sampdens}.

For the two latter factors we have $G =\sqrt{P_{M}U^* P_N UP_M}$. Furthermore
we know that $\|G\|_{2} \leq 1$ since $U$ is an isometry. 
In practice we therefore only need to determine an upper bound
$\|G^{-1}\|_{2}$ and from Lemma \ref{lem:bal_prop} we know that $\|G^{-1}\|_{2}
\leq 1/\sqrt{\theta}$, where $0< \theta < 1$ is the balancing property
constant. In other words, it suffices to determine when the balancing
property holds with a given $\theta$.

The following three propositions estimate these quantities for the case
$U = [\Bwh, \Bwave^{J_0,\nu}]$.  
\begin{proposition}\label{lem:Terhaar_1d}
Let $U = [\Bwh, \Bwave^{J_0,\nu}]$. For each $\theta \in (0,1)$, there exits a constant 
$q_{\theta} \geq 0$, such that whenever $N = 2^{k+q_{\theta}} \geq 2^{k}= M$ then 
$U$ satisfies the balancing property of order $\theta$ for all $k \in \N$. 
\end{proposition}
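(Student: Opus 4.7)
\emph{Step 1: reduction to a tail estimate.} Since $U$ is an isometry, $P_M U^* U P_M = P_M$, hence
$$ P_M - P_M U^* P_N U P_M \;=\; P_M U^* P_N^{\perp} U P_M \;=\; (P_N^{\perp} U P_M)^*(P_N^{\perp} U P_M), $$
whose operator norm equals $\|P_N^{\perp} U P_M\|_2^2$. So the balancing property of order $\theta$ is equivalent to $\|P_N^{\perp} U P_M\|_2^2 \leq 1-\theta$. Identifying the first $M = 2^k$ coordinates with the first $M$ elements of $\Bwave^{J_0,\nu}$ (which, for $k\geq J_0$, span the MRA scaling space $V_k$ at resolution $2^{-k}$), this rephrases as: for every $f\in V_k$,
$$ \sum_{n\geq N}|\langle f,w_n\rangle|^2 \;\leq\; (1-\theta)\|f\|_{L^2}^2. $$

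\emph{Step 2: Walsh--dyadic structure.} The key structural fact is that the span of $\{w_0,\ldots,w_{2^r-1}\}$ in $L^2([0,1))$ equals the $2^r$-dimensional space $\mathcal D_r$ of functions constant on each dyadic interval $[i\,2^{-r},(i+1)\,2^{-r})$; indeed, by Definition \ref{def:walsh_func}, for $n<2^r$ the sum $\sum_{j\geq 1}(n_j+n_{j+1})x_j$ depends only on the first $r$ dyadic digits of $x$. Letting $\Pi_r$ denote the $L^2$-projection onto $\mathcal D_r$ (interval-averaging) and writing $N=2^{k+q}$, the tail bound becomes the purely approximation-theoretic statement
$$ \|f-\Pi_{k+q}f\|_{L^2}^2 \;\leq\; (1-\theta)\,\|f\|_{L^2}^2 \quad \forall f\in V_k. $$

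\emph{Step 3: uniform-in-$k$ approximation estimate.} The heart of the proof is to establish
$$ \sup_{f\in V_k,\ \|f\|_2=1}\ \|f - \Pi_{k+q}f\|_{L^2} \;\leq\; C\cdot 2^{-\alpha q}, $$
with constants $C,\alpha>0$ depending on $\phi,\psi,J_0$ but independent of $k$. I would argue via self-similar rescaling: any $f\in V_k$ writes as $f(x)=2^{k/2}g(2^k x)$ with $g(y)=\sum_i c_i\phi(y-i)$ in a shift-invariant space on $\R$ and $\|g\|_{L^2}=\|c\|_{\ell^2}=\|f\|_{L^2}=1$, and projection onto $\mathcal D_{k+q}$ on $[0,1)$ pulls back to projection onto $\mathcal D_q$ on $\R$. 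Combining the positive H\"older regularity $\alpha>0$ of the (interior and boundary-corrected) scaling functions \cite{Daubechies92,Cohen93} with a Poincar\'e-type estimate on each width-$2^{-q}$ dyadic interval, and using the bounded overlap of the shifts of $\phi$, yields the claim. For Haar ($\nu=1$) the statement is trivial with $q_\theta=0$, since $V_k\subseteq\mathcal D_k$. Choosing $q_\theta$ large enough that $C\cdot 2^{-\alpha q_\theta}\leq\sqrt{1-\theta}$ finishes the proof.

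\emph{Main obstacle.} The crux is obtaining the $k$-independent constant $C$ in Step 3: one must arrange the rescaling so that the H\"older modulus of the mother scaling function passes to an $L^2$ bound on general elements of $V_k$ without picking up a factor growing with $k$, and one must verify that the boundary-corrected scaling functions (periodic or Cohen--Daubechies--Vial) inherit the same regularity as their interior counterparts so that intervals abutting $\{0,1\}$ do not spoil the estimate. Since only $O(\nu)$ boundary functions appear at every scale, this is classical but must be tracked carefully through the argument.
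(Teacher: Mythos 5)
Your proposal is correct in outline, but it takes a genuinely different route from the paper, whose proof is essentially a reduction-plus-citation. The paper rephrases the balancing property, via Lemma \ref{lem:bal_prop} and \eqref{eq:bl_prop_eq}, as a lower bound on the subspace cosine angle between the wavelet space $\mathcal{R}_M$ and the Walsh space $\mathcal{S}_N$ --- i.e.\ as a statement about the stable sampling rate $\Theta(M,\gamma)$ --- and then invokes Theorem 1.1 of \cite{Terhaar17}, which asserts this rate is linear, $\Theta(M,\gamma)\leq S_{\gamma}M$; taking $q_{\theta}=\ceil{\log_2 S_{1/\theta}}$ finishes it, with all analytic content outsourced. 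Your Step 1 is exactly equivalent to the paper's reduction, since $P_M - P_MU^*P_NUP_M=(P_N^{\perp}UP_M)^*(P_N^{\perp}UP_M)$ gives $\|P_N^{\perp}UP_M\|_2^2=1-\sigma_M(P_NUP_M)^2$, matching \eqref{eq:bl_prop_eq}. But from there you prove the content directly: the observation that $\operatorname{span}\{w_0,\ldots,w_{2^{k+q}-1}\}$ is precisely the space of dyadic step functions of width $2^{-(k+q)}$ (correct: these are $2^{k+q}$ orthonormal functions constant on such intervals) converts the balancing property into a piecewise-constant $L^2$-approximation bound on $V_k$, and your rescaling reduces this to a fixed shift-invariant estimate on $\R$, where the H\"older-$\alpha$ regularity of $\phi$ (and of the finitely many, equally regular, periodized or Cohen--Daubechies--Vial boundary functions \cite{Cohen93}) plus bounded overlap of compactly supported shifts gives $\|f-\Pi_{k+q}f\|_2\leq C2^{-\alpha q}\|f\|_2$ uniformly in $k$; this is a standard modulus-of-smoothness bound and is the correct mechanism --- in the same spirit as the decay estimate of Lemma \ref{lem:regularity_bound} that the paper uses elsewhere, and indeed close to how the cited result of \cite{Terhaar17} is itself proved. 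Your route buys self-containedness, validity for all $\nu\geq 1$ (any $\alpha>0$ suffices, so $\nu=2$ with $\alpha\approx 0.55$ is covered, unlike the coherence bounds of Theorem \ref{thm:bound_ip}, which need $\alpha\geq1$), and an explicit quantitative rate $q_{\theta}\approx \alpha^{-1}\log_2\bigl(C/\sqrt{1-\theta}\bigr)$, whereas the paper's route buys brevity. Two points to tidy if you flesh this out: the uniform-in-$k$ constant really does require your function-side argument (or a Schur-type bound), since summing the paper's entrywise coherence estimates in Frobenius norm over the $2^l$ columns per level produces a bound growing with $k$; and for the finitely many $k<J_0$ the first $2^k$ basis functions span only a subspace of $V_{J_0}$, an edge case absorbed by enlarging $q_{\theta}$ by $J_0$.
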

Note that Proposition \ref{lem:Terhaar_1d} is a consequence of Theorem 1.1 in
\cite{Terhaar17}.

\begin{proposition}
\label{lem:1d_coher_bound}
Let $U = [\Bwh, \Bwave^{J_0, \nu}]$ with $\nu\geq 3$ and let 
\begin{equation*}
\label{eq:samp_lev_1d}
    \M = [2^{J_0+1}, \ldots, 2^{J_0 + r}] \text{ and } \Nm = [2^{J_0+1},\ldots,
2^{J_0+r-1}, 2^{J_0+r+q}]  \text{ with } q\geq 0,
\end{equation*}
be sparsity and sampling levels, respectively.
Then the local coherences of $U$ scales like 
\[  \mu_{k,l} \lesssim 2^{-J_0-k} 2^{-|l-k|}.  \]
\end{proposition}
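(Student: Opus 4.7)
The plan is to reduce the local coherence $\mu_{k,l}$ to bounding the individual inner products
$|\langle \phi^s_{j,m}, w_n\rangle|^2$ for appropriate ranges of scales $j$, translations $m$, endpoint indicators $s \in \{0,1\}$, and Walsh frequencies $n$. With the given dyadic choice of $\Nm$ and $\M$, sampling level $k \ge 2$ corresponds to $n \in [2^{J_0+k-1},2^{J_0+k})$, hence $|n|_2 := \lceil \log_2(n+1)\rceil = J_0+k$, while sparsity level $l \ge 2$ corresponds to wavelets $\phi^1_{j,m}$ at scale $j = J_0+l-1$; level $l=1$ mixes scaling functions and the coarsest wavelets at scale $j=J_0$. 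After this reduction I would split into two cases depending on the sign of $l-k$ and handle interior, periodic, and CDV boundary translations analogously.

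For the easier case $l \ge k$ (wavelet no coarser than the Walsh scale), I would use only the trivial $L^1$ bound
\[
|\langle \phi^s_{j,m}, w_n\rangle| \le \|\phi^s_{j,m}\|_{L^1} \, \|w_n\|_{L^\infty} \lesssim 2^{-j/2},
\]
where the implicit constant absorbs $\|\phi\|_\infty$, $\|\psi\|_\infty$ and the support length $2\nu-1$, together with the analogous bounds for the periodic/CDV boundary functions (whose $L^1$ norms still scale as $2^{-j/2}$). Squaring and inserting $j = J_0+l-1$ gives $\mu_{k,l}\lesssim 2^{-(J_0+l)} = 2^{-(J_0+k)}\cdot 2^{-(l-k)}$, as claimed.

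The hard case is $l < k$, where the Walsh function oscillates rapidly on the support of the wavelet and some cancellation must be extracted. I would exploit the multiplicative structure of Walsh functions: for $n \in [2^{a-1},2^a)$ with $a=J_0+k$, write $w_n = r_a \cdot w_{n-2^{a-1}}$, where $r_a$ is the $a$-th Rademacher function and $w_{n-2^{a-1}}$ is constant on each dyadic interval $I$ of length $2^{-(a-1)}$. This yields
\[
\langle \phi^s_{j,m}, w_n\rangle = \sum_I \epsilon_I \int_I \phi^s_{j,m}(x)\, r_a(x)\, dx, \qquad \epsilon_I = \pm 1,
\]
and each integral on the right is, up to a factor $2^{-(a-1)/2}$, the Haar wavelet coefficient of $\phi^s_{j,m}$ at scale $a-1$. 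A Cauchy--Schwarz on the $O(2^{a-1-j})$ nonvanishing summands then produces
\[
|\langle \phi^s_{j,m}, w_n\rangle| \lesssim 2^{-(a-1)/2}\cdot 2^{(a-1-j)/2} \cdot \bigl\| P_{W_{a-1}}\phi^s_{j,m} \bigr\|_{L^2},
\]
where $W_{a-1}$ is the Haar-wavelet subspace at scale $a-1$. The proposition then follows once one establishes the key technical estimate $\|P_{W_{a-1}}\phi^s_{j,m}\|_{L^2}\lesssim 2^{-(a-1-j)/2}$, since this collapses the three factors to $2^{-a/2}\cdot 2^{(a-1-j)/2}\cdot 2^{-(a-1-j)/2}\cdot 2^{a/2-a/2}$ and, upon squaring and expressing $a-j=k-l+1$, yields $2^{-(J_0+k)-(k-l)}$.

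The main obstacle is precisely this last estimate. Unlike in the Fourier setting, Walsh functions do not admit integration-by-parts arguments, and one cannot simply invoke classical Riemann--Lebesgue decay. Instead one must combine the vanishing moments of $\phi^1$ with the Hölder regularity of the Daubechies wavelet on each dyadic patch; the assumption $\nu\ge 3$ enters here both to guarantee enough regularity for this local approximation argument and to ensure the Cohen--Daubechies--Vial boundary construction is available with matching vanishing moments. The scaling-function entries ($s=0$, which occur only in level $l=1$) and the boundary translations must be treated by direct estimation, using the explicit support bounds and vanishing-moment properties from \cite{Cohen93}; these produce the same scaling as the interior estimates, at the cost of $\nu$-dependent constants absorbed into the $\lesssim$.
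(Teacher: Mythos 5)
Your reduction to individual inner products and your easy case $l \ge k$ are both fine, and the Rademacher factorization $w_n = r_a \cdot w_{n-2^{a-1}}$ is valid for the paper's sequency ordering (the bit $n_a=1$ contributes the factor $(-1)^{x_a}$, and the remaining factor depends only on $x_1,\ldots,x_{a-1}$). The genuine gap is in the hard case $l<k$: the key estimate as you state it, $\|P_{W_{a-1}}\phi^s_{j,m}\|_{2}\lesssim 2^{-(a-1-j)/2}$, is too weak, and your closing arithmetic does not produce the claimed bound. The three factors $2^{-(a-1)/2}\cdot 2^{(a-1-j)/2}\cdot 2^{-(a-1-j)/2}$ collapse to $2^{-(a-1)/2}$, which upon squaring gives only $\mu_{k,l}\lesssim 2^{-(J_0+k)}$ --- the off-diagonal decay $2^{-(k-l)}$ has vanished, and this weaker bound would destroy the $\sum_l 2^{-|k-l|}s_l$ structure in Theorem 4.7. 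What you need is a full power: $\|P_{W_{a-1}}\phi^s_{j,m}\|_{2}\lesssim 2^{-(a-1-j)}$. This does hold for $\nu\ge 3$, since then the Lipschitz regularity satisfies $\alpha\ge 1.08>1$, so $\|(\phi^s_{j,m})'\|_{\infty}\lesssim 2^{3j/2}$; subtracting the midpoint value on each dyadic interval $I$ of length $2^{-(a-1)}$ gives $|\langle \phi^s_{j,m},h_I\rangle|\lesssim 2^{3j/2}2^{-3(a-1)/2}$, and only $O(2^{a-1-j})$ such coefficients are nonzero. With this correction you can in fact drop Cauchy--Schwarz and sum directly: $|\langle \phi^s_{j,m},w_n\rangle|\lesssim 2^{-(a-1)/2}\cdot 2^{a-1-j}\cdot 2^{3j/2}2^{-3(a-1)/2} = 2^{j/2}2^{-(a-1)}$, which squares to $\lesssim 2^{-J_0-k}2^{-(k-l)}$ as required. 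Note also that the single vanishing moment being exploited here is that of the Haar function $h_I$, not the vanishing moments of $\phi^1$: the paper excludes $\nu=2$ precisely because $\alpha\approx 0.55<1$ for that wavelet, even though it has two vanishing moments, so attributing the decay to vanishing moments plus regularity misidentifies the mechanism --- regularity alone is what drives it, in both your argument and the paper's.

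For comparison, the paper's route reaches the same decay through a different factorization: it uses the dyadic translation identity $w_n(x\oplus y)=w_n(x)w_n(y)$ to peel off the factor $w_n((l+k)/2^j)$ and reduce $|\langle \phi^s_{j,k}, w_n\rangle|$ to $2^{-j/2}$ times the continuous Walsh transform $|\Ws \phi^s(\cdot+l)|_{[0,1)}(\lfloor n/2^j\rfloor)|$ (Lemma 6.5), and then invokes the Walsh analogue of Riemann--Lebesgue decay for uniformly Lipschitz-$\alpha$ functions, $|\Ws f(n)|\lesssim (n+1)^{-\alpha}$, citing Butzer (Lemma 6.6). Your corrected Haar-projection argument essentially reproves that decay lemma in the case $\alpha=1$ by hand, which is self-contained but yields nothing extra; the paper's version has the advantage of handling fractional $\alpha$ uniformly and of isolating the wavelet-independent part (the transform decay) from the wavelet-dependent bookkeeping over the $2\nu-1$ unit translates and the periodic/CDV boundary cases. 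One last small omission: for $k=r$ the sampling band extends to $2^{J_0+r+q}$, which your case analysis does not mention; nothing breaks, since your bound only improves as the Walsh scale $a$ grows, but a complete write-up should note it, as the paper's proof of the proposition does implicitly through the monotonicity of the bound in Theorem 6.7.
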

\begin{proposition}
\label{prop:norm12}
Let $U = [\Bwh, \Bwave^{J_0, \nu}]$ and let $\M,\Nm \in \N^r$ be sparsity and sampling
levels. Let $\Om = \Om_{\m,\N}$ be a multilevel random sampling scheme, and let $H$ 
be as in \eqref{eq:H_mat}. Then 
\[\|HP_{K}^{\perp}\|_{1\to 2} \lesssim \sqrt{\frac{N}{K}}.\]  
\end{proposition}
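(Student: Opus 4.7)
The plan is to reduce the estimate to a column-wise bound and then invoke the Walsh--wavelet local coherence estimate from Proposition \ref{lem:1d_coher_bound}.

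First I observe that the $(1 \to 2)$-operator norm of any matrix equals the maximum $\ell^2$-norm of its columns, so
\[
\|HP_{K}^{\perp}\|_{1\to 2} = \sup_{j > K} \|H e_j\|_2,
\]
where $e_j$ is the $j$-th canonical vector. From the definition \eqref{eq:H_mat},
\[
\|H e_j\|_2^2 = \sum_{k=1}^{r} \frac{1}{p_k} \|P_{\Om_k} U e_j\|_2^2.
\]
Because $\Om_k$ has cardinality $m_k$ and its entries are drawn from $\{N_{k-1}+1,\ldots,N_k\}$, we have the deterministic (and pathwise) bound
\[
\|P_{\Om_k} U e_j\|_2^2 \;\leq\; m_k \cdot \!\!\max_{N_{k-1} < i \leq N_k}\!\! |U_{ij}|^2.
\]
Substituting $m_k/p_k = N_k - N_{k-1}$ gives
\[
\|H e_j\|_2^2 \;\leq\; \sum_{k=1}^{r} (N_k - N_{k-1}) \cdot \mu_{k,j},
\qquad \text{with } \mu_{k,j} := \max_{N_{k-1} < i \leq N_k} |U_{ij}|^2.
\]

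Next I invoke the local coherence estimate. For $j > K \geq M = 2^{J_0+r}$, the index $j$ falls in some wavelet level $l$ with $M_{l-1} < j \leq M_l$, where $M_l = 2^{J_0+l}$ and $l > r \geq k$. The wavelet level bound from Proposition \ref{lem:1d_coher_bound} (which applies at any sampling/sparsity level since the underlying Walsh--wavelet interaction is the same) gives
\[
\mu_{k,j} \;\lesssim\; 2^{-J_0 - k}\, 2^{-|l-k|} \;=\; 2^{-J_0 - l},
\]
where the last equality uses $l > k$. Summing over $k$,
\[
\|H e_j\|_2^2 \;\lesssim\; 2^{-J_0 - l} \sum_{k=1}^{r}(N_k - N_{k-1}) \;=\; N \cdot 2^{-J_0-l}.
\]
Because $j \leq 2^{J_0+l}$, we have $2^{-J_0-l} \leq 1/j \leq 1/K$, so $\|H e_j\|_2^2 \lesssim N/K$, uniformly in $j > K$, and taking square roots yields the claim.

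The main obstacle is bookkeeping rather than anything deep: one must verify that the local coherence bound in Proposition \ref{lem:1d_coher_bound} is genuinely valid for \emph{all} wavelet scales (in particular for scales $l > r$ and for boundary-corrected wavelets whose supports touch $0$ or $1$), and that the cases $\nu = 1, 2$ not covered by Proposition \ref{lem:1d_coher_bound} can still be handled, either by a direct estimate for Haar (where $|U_{ij}|$ is computable in closed form) or by a separate small-$\nu$ coherence estimate. Once this uniformity of the coherence bound is in hand, the sum over $k$ telescopes to $N$ and the column-wise decay $2^{-J_0-l} \leq 1/K$ delivers the stated $\sqrt{N/K}$ rate.
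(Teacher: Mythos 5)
Your argument is correct and takes essentially the same route as the paper: both proofs rest on the entrywise Walsh--wavelet decay, which gives $\mu(P_N U P_{K}^{\perp}) \lesssim K^{-1}$, combined with $\sum_{k=1}^{r}(N_k - N_{k-1}) = N$, and your reduction via $\|HP_{K}^{\perp}\|_{1\to 2} = \sup_{j>K}\|He_j\|_2$ is just a repackaging of the paper's direct estimate $\|HP_{K}^{\perp}x\|_{2}^{2} \leq N\,\mu(P_N U P_K^{\perp})\,\|x\|_1^2$. The loose end you flag yourself --- validity of the coherence bound at scales $l > r$ and for boundary-corrected wavelets --- is precisely why the paper invokes the entrywise bound of Theorem \ref{thm:bound_ip} (valid for all scales $l \geq J_0$, all $t \in \La_l$, and $\nu \geq 3$, with Haar covered by Theorem \ref{thm:bound_ip_haar}) rather than the level-indexed Proposition \ref{lem:1d_coher_bound}; replacing your citation accordingly closes the gap, removes your implicit reliance on the dyadic level structure (the proposition is stated for general $\M, \Nm \in \N^r$), and matches the paper's proof.
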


We can now present the two main theorems in this section. We point out that 
these are only valid for $\nu \geq 3$ vanishing moments. For $\nu=1$, the 
corresponding wavelet is the Haar wavelet, and will be considered in the next
subsection.  For $\nu=2$, the coherence of $U = [\Bwh, \Bwave^{J_0, 2}]$ does
not decay as fast as for the other wavelets. Whether this is because our
coherence bounds are not sharp enough for this wavelet or if it is because 
the coherence of $U = [\Bwh, \Bwave^{J_0,2}]$ actually decays more slowly is not known. 
We do, however, present some numerics in Section \ref{subsec:sharpness_coher} 
which indicate that it is potentially the latter. 

\begin{theorem}
\label{thm:1d_GRIPL}
Let $U = [\Bwh, \Bwave^{J_0, \nu}]$ with $\nu\geq 3$ and let 
\[
\M = [2^{J_0+1}, \ldots, 2^{J_0+r}] \text{ and }~ \Nm = [2^{J_0+1},\ldots,
2^{J_0+r-1}, 2^{J_0+r+q}]  \text{ with } q\geq 0, 
\]
be sparsity and sampling levels, respectively. Let $\s \in \N^r$ be local
sparsities. Suppose $q$ is chosen so that $U$ satisfies the balancing property 
with constant $0 < \theta< 1$ and set $G = \sqrt{P_{M}U^*P_N UP_M}$. 
Let $\eps,\delta \in (0,1)$ and let $0 \leq r_0 \leq
r$, with $\tilde{m}=m_{r_0+1} + \cdots + m_r$. Let $s = s_1 + \cdots + s_r$ and 
$L = \Llog$. 
If  
\begin{equation}
\label{eq:RIPL1_cond2}
m_k = N_k - N_{k-1},\quad k=1,\ldots,r_0,
\end{equation}
and
\begin{equation*}
m_k \gtrsim \delta^{-2} \cdot \theta^{-1} \cdot 2^{q \max\{k+1-r,0\}}\cdot \bigg (
\sum^{r}_{l=1} 2^{-|k-l|} s_l \bigg ) \cdot L
\end{equation*}
for $k=r_0+1,\ldots,r$, 
then with probability at least $1-\epsilon$, the matrix
in \eqref{def:A_infinite}
satisfies the G-RIPL of order $(\mb{s},\mb{M})$
with constant $\delta_{\mb{s},\mb{M}} \leq \delta$.
\end{theorem}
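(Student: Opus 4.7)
The plan is to deduce this theorem as a direct specialization of Theorem \ref{t:RIPlevels1} to the Walsh--wavelet pair $U = [\Bwh, \Bwave^{J_0,\nu}]$, with the problem-specific ingredients supplied by Propositions \ref{lem:Terhaar_1d} and \ref{lem:1d_coher_bound}. So the skeleton of the argument will just be: invoke Theorem \ref{t:RIPlevels1} with the given $\s$, $\M$, $\Nm$, $\m$, $r_0$, $\delta$, $\eps$, and then rewrite the abstract sampling condition \eqref{eq:GRIPL_sampdens} in the concrete form stated in Theorem \ref{thm:1d_GRIPL}.

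First I would handle the factor $\|G^{-1}\|_2^2$. By hypothesis $q$ is chosen so that the balancing property of order $\theta$ holds for $(N,M) = (2^{J_0+r+q}, 2^{J_0+r})$; Proposition \ref{lem:Terhaar_1d} guarantees such a $q$ exists, and Lemma \ref{lem:bal_prop} then gives $\|G^{-1}\|_2^2 \leq 1/\theta$. Next I would plug in the local coherence bound from Proposition \ref{lem:1d_coher_bound}, namely $\mu_{k,l} \lesssim 2^{-J_0-k} 2^{-|k-l|}$, which is valid precisely for the sparsity/sampling levels in the statement.

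The remaining step is a simple computation of $N_k - N_{k-1}$ for these levels. For $k \in \{1,\dots,r-1\}$ one has $N_k - N_{k-1} \leq 2^{J_0+k}$, while for $k = r$ one has $N_r - N_{r-1} = 2^{J_0+r+q} - 2^{J_0+r-1} \leq 2^{J_0+r+q}$. Combining,
\[
(N_k - N_{k-1}) \, \mu_{k,l} \;\lesssim\; 2^{q\max\{k+1-r,\,0\}} \cdot 2^{-|k-l|}.
\]
Multiplying by $\|G^{-1}\|_2^2 \leq \theta^{-1}$ and by $\delta^{-2} L$, the abstract condition \eqref{eq:GRIPL_sampdens} of Theorem \ref{t:RIPlevels1} reduces exactly to the condition stated in Theorem \ref{thm:1d_GRIPL}, and condition \eqref{eq:RIPL1_cond1} is identical to \eqref{eq:RIPL1_cond2}. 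Theorem \ref{t:RIPlevels1} then delivers the G-RIPL with constant $\delta_{\s,\M} \leq \delta$ with probability at least $1-\eps$.

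There is no real obstacle in this proof: all the heavy lifting has been packed into Proposition \ref{lem:Terhaar_1d} (balancing for Walsh--wavelets) and Proposition \ref{lem:1d_coher_bound} (dyadic coherence decay), whose proofs I would defer to the subsequent sections. The only place requiring minor care is the bookkeeping at the last level $k = r$, where the sampling bandwidth is inflated by $2^q$ compared to the dyadic pattern, producing the $2^{q\max\{k+1-r,0\}}$ factor; this is why the statement singles out $k=r$ rather than writing a uniform bound.
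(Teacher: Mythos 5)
Your proposal is correct and takes essentially the same approach as the paper's own proof: specialize Theorem \ref{t:RIPlevels1}, use the balancing property with Lemma \ref{lem:bal_prop} to bound $\|G^{-1}\|_2^2 \leq \theta^{-1}$, insert the coherence estimate $\mu_{k,l} \lesssim 2^{-J_0-k}2^{-|k-l|}$ from Proposition \ref{lem:1d_coher_bound}, and perform the same $N_k - N_{k-1}$ bookkeeping that produces the $2^{q\max\{k+1-r,0\}}$ factor at the final level. The only cosmetic difference is that you use the upper bounds $N_k - N_{k-1} \leq 2^{J_0+k}$ (for $k<r$) and $N_r - N_{r-1} \leq 2^{J_0+r+q}$ where the paper computes these gaps exactly, which is immaterial under the $\lesssim$ in the coherence bound.
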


With this in hand, we now present our main result:

\begin{theorem}
    \label{thm:1d_GRIPL_rec}
Let $U = [\Bwh, \Bwave^{J_0, \nu}]$ with $\nu\geq 3$
and let 
\[
\M = [2^{J_0+1}, \ldots, 2^{J_0+r}] \text{ and }~ \Nm = [2^{J_0+1},\ldots,
2^{J_0+r-1}, 2^{J_0+r+q}] , \text{ with } q\geq 0 
\]
be sparsity and sampling levels, respectively. Let $\s \in \N^r$ be local
sparsities, $\bom = (s^{-1/2}_{1},\ldots, s_{r}^{-1/2}, \om_{r+1})$ be weights
and let $\m \in \N^r$ be sampling densities.  Let $\eps \in (0,1)$ and let $0
\leq r_0 \leq r$. Let $m = m_1+\ldots + m_{r}$, $\tilde{m}= m_{r_0+1}+ \cdots+m_r$, $s = s_1 + \ldots
+ s_r$, and $L = \Llog$. 

Let $H \in \C^{m\times \infty}$ be as in
\eqref{eq:H_mat} and set $A = HP_{K}$. Let $x \in \ell^2(\N)$, $e_1
\in \C^m$ and $\eta > 0$. Set $e = HP_{K}^{\perp}x + e_1$ and   $\tilde{y} =
Ax + e$.
Suppose 
\begin{enumerate}[(i)]
    \item we choose $q=q_{\theta}$ as in Proposition \ref{lem:Terhaar_1d} so that $U$ satisfies the balancing property of order $0<\theta<1$,
    \item we choose $\eta \geq \|e_1\|$ and $K$ so that
    $\|HP_{K}^{\perp}x\|_{2} \leq \eta'$,  
    \item the weight $\om_{r+1}$ satisfies 
    \[ \om_{r+1}\geq \sqrt{r}\(\frac{1}{3(1+r^{1/4})} 
                   +2\sqrt{\frac{2}{\theta}}\|AP_{K}^{M}\|_{1 \to 2} \), 
 \]
\item the $m_k$'s satisfy $m_{k} = N_{k} - N_{k-1}$ for $k=1, \ldots, r_0$ and 
\begin{equation}
    m_{k} \gtrsim \theta^{-2} \cdot r \cdot 2^{q \max\{ k+1-r,0 \}}\bigg ( \sum^{r}_{l=1} 2^{-|k-l|} s_l \bigg) \cdot L \quad \text{ for } k = r_0+1,\ldots,r. 
\end{equation}
\end{enumerate}
Then with probability $1-\eps$ any solution $\hat{x}$ of the optimization problem
\[ \minimize_{z \in \C^{K}} \|z\|_{1,\bom} \quad\text{subject to} \quad 
   \|Az - \tilde{y}\|_2 \leq \eta + \eta'\]
satisfies
\begin{align} \label{eq:err_bd1}
\|P_{K}x-\hat{x}\|_{1,\bom} &\leq C \sigma_{\s, \M}(P_K x)_{1,\bom} 
                            + D \frac{\sqrt{r}}{\sqrt{\theta}}(\eta + \eta') \\
\|P_{K}x-\hat{x}\|_{2} &\leq (1+r^{1/4}) \( C \frac{\sigma_{\s,\M}(P_{K}x)_{1,\bom}}{\sqrt{r}} + D\frac{1}{\sqrt{\theta}}(\eta  + \eta')\)
\label{eq:err_bd2}
\end{align}
where $C = 2(2+\sqrt{3})/(2-\sqrt{3})$ and 
$D = 8\sqrt{2}/(2-\sqrt{3})$.
\end{theorem}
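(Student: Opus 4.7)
The plan is to deduce Theorem \ref{thm:1d_GRIPL_rec} by instantiating Corollary \ref{cor:overall} at the isometry $U = [\Bwh, \Bwave^{J_0,\nu}]$ and invoking the three Walsh--wavelet-specific estimates of Section \ref{sec:4} to both verify the corollary's hypotheses and rewrite its abstract sampling condition in the explicit dyadic form claimed here. Since Corollary \ref{cor:overall} already packages the G-RIPL (Theorem \ref{t:RIPlevels1}) together with the weighted $\ell^1$-recovery guarantee (Theorem \ref{thm:GRIPL_imply_uniform}), the remaining work is essentially to specialize its four hypotheses to the Walsh--wavelet setting.

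For hypothesis (i) of the corollary, Proposition \ref{lem:Terhaar_1d} supplies the balancing property with constant $\theta$ for $M = 2^{J_0+r}$ and $N = 2^{J_0+r+q_\theta}$; combined with Lemma \ref{lem:bal_prop} this yields $\|G^{-1}\|_2 \leq 1/\sqrt{\theta}$, and since $\|G\|_2 \leq 1$ (as $U$ is an isometry) we also obtain $\kappa(G) \leq 1/\sqrt{\theta}$. This is precisely what produces the $\theta^{-2}$ prefactor in the corollary's measurement condition. Hypotheses (ii) and (iii) of the theorem are transcribed verbatim from those of the corollary, so no additional verification is needed; the quantity $\|AP_K^M\|_{1\to 2}$ appearing in (iii) can if desired be further bounded using Proposition \ref{prop:norm12}, but the theorem is phrased so as to pass it through unchanged.

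The substantive step is translating hypothesis (iv). Inserting the local coherence bound $\mu_{k,l} \lesssim 2^{-J_0-k} \cdot 2^{-|k-l|}$ from Proposition \ref{lem:1d_coher_bound} into the corollary's condition
\[
m_k \gtrsim \theta^{-2} \cdot r \cdot (N_k - N_{k-1}) \cdot \sum_{l=1}^{r} \mu_{k,l}\, s_l \cdot L,
\]
the sampling-level prefactor reduces to $(N_k - N_{k-1}) \cdot 2^{-J_0-k}$ times $\sum_l 2^{-|k-l|} s_l$. A case split in $k$ then controls the front factor: for $k < r$ the dyadic spacing gives $N_k - N_{k-1} \leq 2^{J_0+k}$, so $(N_k - N_{k-1}) \cdot 2^{-J_0-k} \lesssim 1$; at $k = r$ the oversampled top band has $N_r - N_{r-1} \leq 2^{J_0+r+q}$, producing an extra factor of $2^q$. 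The two cases merge into the stated $2^{q \max\{k+1-r,\,0\}}$ factor.

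With all four hypotheses verified, the conclusion of Corollary \ref{cor:overall} delivers the error bounds \eqref{eq:err_bd1} and \eqref{eq:err_bd2} directly. I do not anticipate any serious obstacle here: the heavy lifting is done by Proposition \ref{lem:Terhaar_1d} (via the result of \cite{Terhaar17}) and by the local coherence estimate of Proposition \ref{lem:1d_coher_bound}, whose proof is deferred to Section \ref{sec:coher}. The only real calculation at this stage is the elementary dyadic bookkeeping that isolates the $k = r$ band from the rest, which follows immediately once the definitions of $\Nm$ and $\M$ are written out.
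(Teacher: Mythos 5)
Your proposal is correct and follows essentially the same route as the paper: the paper's own proof of Theorem \ref{thm:1d_GRIPL_rec} simply observes that it is Corollary \ref{cor:overall} with $\M$, $\Nm$ fixed, with hypothesis (i) supplied by Proposition \ref{lem:Terhaar_1d} and condition (iv) obtained by inserting the local coherences of Proposition \ref{lem:1d_coher_bound} -- the same dyadic bookkeeping $(N_k - N_{k-1})\cdot 2^{-J_0-k} \lesssim 2^{q\max\{k+1-r,0\}}$ that you carry out (and which the paper performs in the proof of Theorem \ref{thm:1d_GRIPL}). Your accounting of where the $\theta^{-2}$ factor comes from, via $\|G^{-1}\|_2 \leq 1/\sqrt{\theta}$ and $\kappa(G) \leq 1/\sqrt{\theta}$ through Lemma \ref{lem:bal_prop}, also matches the paper's proof of Corollary \ref{cor:overall}.
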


\begin{figure}
\begin{center}
 {\small
\begin{tabular}{@{\hspace{0pt}}c@{\hspace{6pt}}c@{\hspace{6pt}}c@{\hspace{0pt}}}
\includegraphics[height=5.5cm]{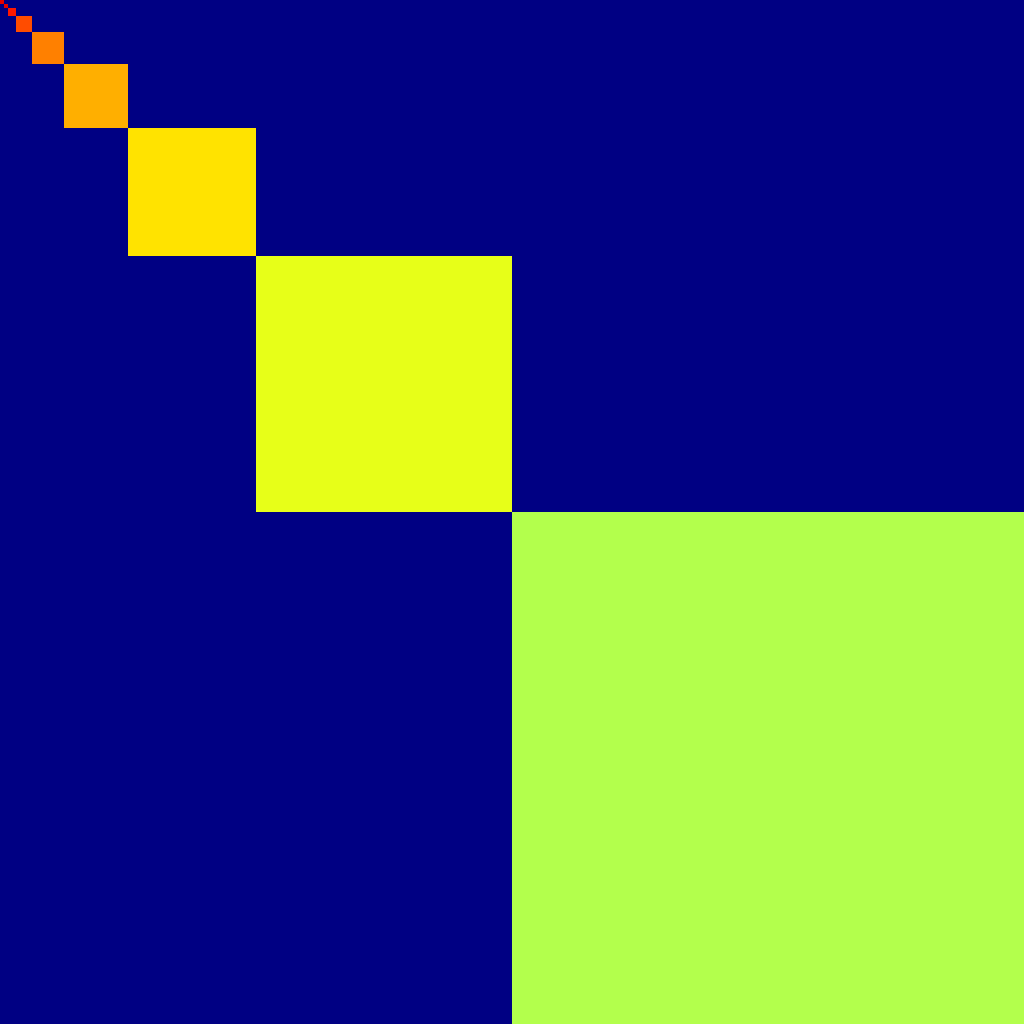}
&
\includegraphics[height=5.5cm]{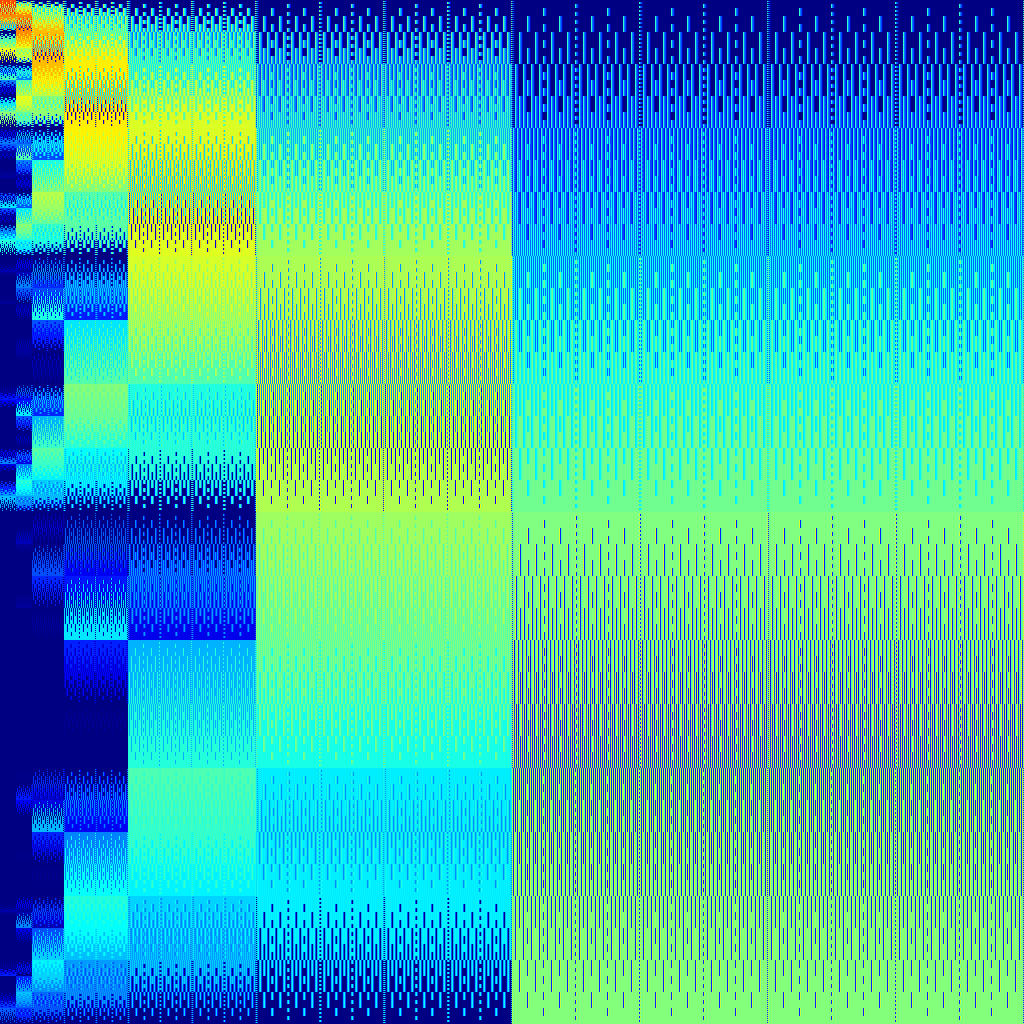} 
&
\includegraphics[height=5.5cm]{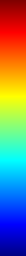}
\\
Haar (DB1)& DB4
\end{tabular}
}
\caption{The absolute values in log scale of the matrix $P_{M}UP_{M}$ for $U = [\Bwh, \Bwave^{J_0,\nu}]$, with $\nu = 1$ (left) and $\nu=4$ (middle). The rightmost image is the colorbar.}
\label{fig:U_Haar_DB4}
\end{center}
\end{figure}

\begin{remark}\label{r:K_issue_fudge}
Note that the second condition (ii) can be guaranteed using Proposition \ref{prop:norm12}.  Indeed, it suffices for $K$ to satisfy
\[
\frac{\nm{P^{\perp}_K x}_{1}}{\sqrt{K}} \lesssim \frac{\eta'}{\sqrt{N}}.
\]
Hence, given any \textit{a priori} estimates on the decay of the coefficients $x$ (such as in the case of wavelets), one can use this to determine a suitable $K$.
\end{remark}


\subsection{Uniform recovery for Haar wavelets}
\label{subsec:rec_Haar}

Below we shall see that for the Haar wavelet, $P_{N}UP_{N}$ will be an isometry
for $N = 2^r$ where $r \in \N$.  This can also be seen from Figure
\ref{fig:U_Haar_DB4}, where $U = [\Bwh, \Bwave^{J_0,\nu}]$ is perfectly
block diagonal for $\nu=1$.  This means that the G-RIPL, reduces to
the $I$-adjusted RIPL, or simply the RIPL, which we know from the finite
dimensional case.  Notice in particular that we also avoid any considerations
where $K > M=N$ as above, since $HP_{M}^{\perp} = 0$.

\begin{proposition}
    \label{lem:Haar_isometry}
Let $U = [\Bwh, \Bwave^{J_0, 1}]$ and let $N = 2^{k}$, for some $k\in \N$ with 
$k \geq J_0+1$. Then $P_{N}UP_{N}$ is an isometry on $\C^N$.   
\end{proposition}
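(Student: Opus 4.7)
The plan is to show that the first $N = 2^k$ elements of the sampling basis $\Bwh$ and of the sparsifying basis $\Bwave^{J_0,1}$ both span the same $N$-dimensional subspace of $L^2([0,1))$, namely
\[
V_k = \Span{\chi_{\Delta_{j,k}} : j = 0, \ldots, 2^k-1}, \qquad \Delta_{j,k} = [j 2^{-k},(j+1)2^{-k}).
\]
Once this is established, $P_N U P_N$ is literally the change-of-basis matrix between two orthonormal bases of the same $N$-dimensional space, and is therefore unitary on $\C^N$; in particular it is an isometry.

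For the Walsh side, I would observe that if $n < 2^k = N$ then the dyadic expansion of $n$ satisfies $n_j = 0$ for all $j > k$, so by Definition \ref{def:walsh_func} the exponent $\sum_{j \geq 1}(n_j+n_{j+1})x_j$ depends only on $x_1,\ldots,x_k$. Hence $w_n$ is constant on each $\Delta_{j,k}$, i.e.\ $w_n \in V_k$. Since $\{w_0,\ldots,w_{N-1}\}$ is a collection of $N$ orthonormal functions sitting inside the $N$-dimensional space $V_k$, it must be an orthonormal basis of $V_k$.

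For the Haar side, I would first note that $\nu=1$ gives $\supp(\phi) = \supp(\psi) = [0,1]$, so $\phi_{j,k}$ and $\psi_{j,k}$ already have support in $[0,1)$ for every admissible $k$ whenever $j \geq J_0$; the periodic/boundary modification in Definition \ref{def:Bwave} therefore collapses to the untouched wavelet. Using the ordering in Definition \ref{def:Bwave}, the first $N = 2^k$ elements of $\Bwave^{J_0,1}$ consist of the $2^{J_0}$ scaling functions at scale $J_0$ together with the wavelets at scales $J_0,J_0+1,\ldots,k-1$, a total of $2^{J_0}+\sum_{j=J_0}^{k-1}2^j = 2^k = N$ functions. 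The classical Haar multiresolution identity $V_{J_0} \oplus W_{J_0} \oplus \cdots \oplus W_{k-1} = V_k$ then exhibits this orthonormal family as a basis of $V_k$, which combined with the Walsh step completes the proof. I do not expect a genuine obstacle here beyond the (mild) bookkeeping of checking that the boundary correction trivialises for Haar and that the enumeration of the first $N$ wavelet basis elements matches the MRA decomposition of $V_k$.
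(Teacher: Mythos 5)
Your proof is correct, but it takes a genuinely different route from the paper's. The paper argues computationally: it invokes Theorem \ref{thm:bound_ip_haar} (the exact Walsh--Haar inner products, cited from \cite{Antun16} and \cite{Terhaar18}), which shows $\indi{\phi^{1}_{l,t},w_n}$ vanishes unless $2^{l}\leq n<2^{l+1}$ and $\indi{\phi^{0}_{J_0,t},w_n}$ vanishes unless $n<2^{J_0}$; this forces $P_{N}^{\perp}UP_{N}=0$, whence $(P_NUP_N)^*(P_NUP_N)=(UP_N)^*(UP_N)=P_NU^*UP_N=I$ since $U$ is an isometry. You instead prove the structural fact underlying those formulas: both truncated systems $\{w_0,\ldots,w_{N-1}\}$ and the first $N$ elements of $\Bwave^{J_0,1}$ are orthonormal bases of the same space $V_k$ of step functions on the dyadic grid of mesh $2^{-k}$ (the Walsh side via the dyadic expansion in Definition \ref{def:walsh_func}, the Haar side via the multiresolution identity $V_{J_0}\oplus W_{J_0}\oplus\cdots\oplus W_{k-1}=V_k$ together with the correct count $2^{J_0}+\sum_{j=J_0}^{k-1}2^j=2^k$, which is where the hypothesis $k\geq J_0+1$ enters), so that $P_NUP_N$ is the Gram matrix between two orthonormal bases of one $N$-dimensional space and hence unitary. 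Your argument is self-contained and conceptual, needing no prior coefficient computations, and it recovers the paper's block-diagonality for free: since $\{w_n\}_{n<N}$ spans $V_k$, one gets $w_n\perp V_k$ for $n\geq N$ and thus $P_N^{\perp}UP_N=0$. The paper's version is shorter in context only because Theorem \ref{thm:bound_ip_haar} is already needed for the local coherence estimates of Proposition \ref{lem:1d_coher_bound_haar}. Your treatment of the boundary issue is also sound: for $\nu=1$ the supports $\supp(\phi)=\supp(\psi)=[0,1]$ do not cross the boundary of $[0,1)$ except on a null set, so the periodic or boundary corrections in Definition \ref{def:Bwave} act trivially. Note finally that you prove the slightly stronger conclusion that $P_NUP_N$ is unitary, which of course implies the claimed isometry.
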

\begin{proposition}
    \label{lem:1d_coher_bound_haar}
    Let $U = [\Bwh, \Bwave^{J_0, 1}]$ and let 
    $ \M = \Nm = [2^{J_0+1}, \ldots, 2^{J_0+r}] $    be sparsity and sampling levels, respectively.
Then the local coherences of $U$ are 
\[  \mu_{kl} = \begin{cases} 2^{-J_0-k+1} &\text{if } k = l \\ 
0 &\text{if } k \neq l
\end{cases}    \]
\end{proposition}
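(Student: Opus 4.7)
My plan rests on a classical multiresolution identification: the first $2^j$ Walsh functions $\{w_0,\ldots,w_{2^j-1}\}$ form an orthonormal basis of the space $V_j\subset L^2([0,1))$ of functions constant on every dyadic interval of length $2^{-j}$, which is exactly the span of the Haar scaling functions at level $j$. Consequently $\{w_n : 2^j\leq n<2^{j+1}\}$ is an orthonormal basis of the Haar wavelet space $W_j = V_{j+1}\ominus V_j$. Since for $\nu=1$ no boundary modification is required, sparsity level $l=1$ consists of $\{\phi_{J_0,k},\psi_{J_0,k}\}_k$ (spanning $V_{J_0+1}$) and sparsity level $l\geq 2$ consists of $\{\psi_{J_0+l-1,k}\}_k$ (spanning $W_{J_0+l-1}$); similarly sampling level $k=1$ is $\{w_n : 0\leq n<2^{J_0+1}\}$ (spanning $V_{J_0+1}$) and sampling level $k\geq 2$ is $\{w_n : 2^{J_0+k-1}\leq n<2^{J_0+k}\}$ (spanning $W_{J_0+k-1}$). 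The perfect match of these subspaces is what drives the block-diagonal pattern.

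For off-diagonal vanishing, if $k<l$ (hence $l\geq 2$), every Walsh function in sampling level $k$ lies in $V_{J_0+k}\subseteq V_{J_0+l-1}$, which is orthogonal to $W_{J_0+l-1}$, where the sparsity-level-$l$ wavelets live. If $k>l$, every Walsh function in sampling level $k$ lies in $W_{J_0+k-1}$, which is orthogonal to $V_{J_0+k-1}\supseteq V_{J_0+l}$, and every sparsity-level-$l$ function lies in $V_{J_0+l}$. In both cases $U_{ij}=0$ throughout the block, so $\mu_{k,l}=0$.

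For the diagonal $k=l\geq 2$, the sets $\{w_n : 2^{J_0+l-1}\leq n<2^{J_0+l}\}$ and $\{\psi_{J_0+l-1,j}\}_j$ are two orthonormal bases of the same space $W_{J_0+l-1}$. On the support $I=[j\cdot 2^{-(J_0+l-1)},(j+1)\cdot 2^{-(J_0+l-1)})$ of $\psi_{J_0+l-1,j}$, the first $J_0+l-1$ dyadic bits of $x$ are fixed, while the $(J_0+l)$-th bit $x_{J_0+l}$ selects which half of $I$ contains $x$. Plugging this into the Walsh definition $w_n(x)=(-1)^{\sum_{i\geq 1}(n_i+n_{i+1})x_i}$, using $n_{J_0+l}=1$ and $n_i=0$ for $i>J_0+l$, one obtains $w_n(x)=\epsilon\cdot(-1)^{x_{J_0+l}}$ on $I$ for some sign $\epsilon=\epsilon(n,j)\in\{\pm 1\}$. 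A direct integration against $\psi_{J_0+l-1,j}=2^{(J_0+l-1)/2}(\chi_{\text{first half}}-\chi_{\text{second half}})$ then yields $|\langle w_n,\psi_{J_0+l-1,j}\rangle|^2=2^{-(J_0+l-1)}=2^{-J_0-k+1}$, uniformly in $n,j$.

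The case $k=l=1$ splits identically. For $n<2^{J_0}$, $w_n\in V_{J_0}$ is constant on the support of $\phi_{J_0,j}$, giving $\langle w_n,\psi_{J_0,j}\rangle=0$ and $|\langle w_n,\phi_{J_0,j}\rangle|^2=2^{-J_0}$; for $2^{J_0}\leq n<2^{J_0+1}$, $w_n\in W_{J_0}\perp\phi_{J_0,j}$, and the same opposite-sign argument yields $|\langle w_n,\psi_{J_0,j}\rangle|^2=2^{-J_0}$. Hence $\mu_{1,1}=2^{-J_0}=2^{-J_0-1+1}$, completing the proof. The only non-routine ingredient is the dyadic-bit analysis of $w_n$ on the wavelet support; everything else reduces to one-line computations.
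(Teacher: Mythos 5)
Your proof is correct, and it takes a genuinely more self-contained route than the paper's. The paper disposes of this proposition in one line: it invokes Theorem \ref{thm:bound_ip_haar}, which records the exact values $|\ind{\phi^0_{l,t},w_n}|^2 = 2^{-l}$ for $n<2^l$ and $|\ind{\phi^1_{l,t},w_n}|^2=2^{-l}$ for $2^l\le n<2^{l+1}$ (zero otherwise), and whose proof is itself only a citation to \cite{Antun16} and \cite{Terhaar18}; the coherences are then read off block by block. You instead prove those inner-product identities from first principles, starting from the observation that $\{w_n : 0\le n<2^j\}$ is an orthonormal basis of the Haar scaling space $V_j$ (for $n<2^j$ the exponent $\sum_i(n_i+n_{i+1})x_i$ involves only the first $j$ bits of $x$), so the Walsh band $2^j\le n<2^{j+1}$ is an orthonormal basis of $W_j=V_{j+1}\ominus V_j$. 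This makes the off-diagonal vanishing a pure MRA-orthogonality statement rather than a consequence of tabulated values, and it exposes the structural reason the Walsh--Haar matrix is perfectly block diagonal --- the same fact underlying Proposition \ref{lem:Haar_isometry}. Your diagonal computation is also sound: for $2^j\le n<2^{j+1}$ one has $n_{j+1}=1$ and $n_i=0$ for $i>j+1$, so the coefficient of $x_{j+1}$ in the exponent is $1$ and all later coefficients vanish, whence $w_n$ restricted to each support interval is a fixed sign times the normalized Haar pattern, giving $|\ind{w_n,\psi_{j,t}}|^2=2^{-j}$ uniformly in $n$ and $t$; this yields the exact equality the proposition asserts, not merely an upper bound, and your $k=l=1$ case correctly splits the first block into the $V_{J_0}$ and $W_{J_0}$ sub-bands. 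The trade-off is clear: the paper's argument is shorter given the literature, while yours is independent of \cite{Antun16} and \cite{Terhaar18} and arguably more illuminating about where the block structure comes from.
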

It is now straightforward to derive the following:

\begin{theorem}
    \label{thm:haar1}
    Let $U = [\Bwh, \Bwave^{J_0, 1}]$ and let 
    $\M = \Nm = [2^{J_0+1}, \ldots, 2^{J_0+r}] $
    be sparsity and sampling levels. Let $s \in \N^{r}$ be local sparsities 
    and $\m\in \N^r$ be local sampling densities. 
    Let $\eps,\delta \in (0,1)$ and $0\leq r_0 \leq r$. Let $\tilde{m} = m_{r_0+1} + \ldots + m_r$ and $s = s_1 + \ldots + s_r$. Suppose that the $m_k$'s satisfies $m_{k} =
    N_{k}-N_{k-1}$ for $k = 1, \ldots, r_0$ and 
    \begin{equation}
        m_{k} \gtrsim \delta^{-2} s_k \( r\log(2\tilde{m})\log(2N) \log^2(2s) +
        \log(\eps^{-1})\), \quad\text{ for } k = r_0+1,\ldots, r.
    \end{equation}
Then with 
probability $1-\eps$ the matrix \eqref{def:A_infinite} satisfies the RIPL 
with constant $\delta_{\s,\M}\leq \delta$.  
\end{theorem}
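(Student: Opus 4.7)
The plan is to apply Theorem \ref{t:RIPlevels1} directly, using the two structural facts established just before the statement: Proposition \ref{lem:Haar_isometry} (which says $P_N U P_N$ is an isometry on $\C^N$ whenever $N = 2^k$ with $k \geq J_0 + 1$) and Proposition \ref{lem:1d_coher_bound_haar} (which gives the diagonal local coherence structure). Because $\M = \Nm = [2^{J_0+1},\ldots,2^{J_0+r}]$, we have $M = N = 2^{J_0+r}$, so the truncation to $\Bs(\ell^2(\N))$ collapses and all of the subtleties of the infinite-dimensional framework disappear.

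First I would identify $G$. With $M = N$ and $P_N U P_N$ an isometry on $\C^N$ by Proposition \ref{lem:Haar_isometry}, we have
\[
P_M U^* P_N U P_M = (P_N U P_N)^* (P_N U P_N) = P_N,
\]
so $G = \sqrt{P_M U^* P_N U P_M} = P_M$ is the identity on the $M$-dimensional subspace. In particular $G$ is nonsingular with $\|G\|_2 = \|G^{-1}\|_2 = 1$, and the balancing property holds trivially with $\theta = 1$. Consequently the G-RIPL of Definition \ref{def:GRIP} reduces to the standard RIPL of Definition \ref{def:RIPL}, which is exactly what the theorem claims.

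Next I would plug the Haar local coherence estimates into the sampling bound \eqref{eq:GRIPL_sampdens}. Because Proposition \ref{lem:1d_coher_bound_haar} gives $\mu_{k,l} = 0$ for $k \neq l$, the sum $\sum_{l=1}^r \mu_{k,l}\, s_l$ collapses to a single term $\mu_{k,k}\, s_k = 2^{-J_0-k+1}\, s_k$. Multiplying by the level width gives
\[
(N_k - N_{k-1}) \cdot \mu_{k,k} = \begin{cases} 2^{J_0+1} \cdot 2^{-J_0} = 2, & k = 1, \\ 2^{J_0+k-1} \cdot 2^{-J_0-k+1} = 1, & k \geq 2,\end{cases}
\]
so in either case $(N_k - N_{k-1}) \mu_{k,k} \lesssim 1$. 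Using $\|G^{-1}\|_2 = 1$, condition \eqref{eq:GRIPL_sampdens} of Theorem \ref{t:RIPlevels1} therefore reduces to
\[
m_k \gtrsim \delta^{-2} \cdot s_k \cdot L, \qquad k = r_0 + 1, \ldots, r,
\]
which is precisely the hypothesis stated in the theorem; meanwhile the full-sampling condition $m_k = N_k - N_{k-1}$ for $k \le r_0$ matches \eqref{eq:RIPL1_cond1}.

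No substantive obstacle remains: every technical ingredient has been isolated into Propositions \ref{lem:Haar_isometry} and \ref{lem:1d_coher_bound_haar} and into Theorem \ref{t:RIPlevels1}. The only thing one must verify carefully is the bookkeeping at the lowest level $k = 1$, where $N_0 = 0$ makes $(N_1 - N_0) \mu_{1,1}$ equal to $2$ rather than $1$; this is absorbed harmlessly into the implicit constant in $\lesssim$. Applying Theorem \ref{t:RIPlevels1} then delivers the RIPL with constant $\delta_{\s,\M} \leq \delta$ with probability at least $1 - \eps$, completing the proof.
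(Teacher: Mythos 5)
Your proof is correct, and the computations check out: with $M = N = 2^{J_0+r}$, Proposition \ref{lem:Haar_isometry} indeed gives $P_M U^* P_N U P_M = (P_N U P_N)^*(P_N U P_N) = P_N$, so $G$ is the identity on $\C^M$, the G-RIPL collapses to the standard RIPL, and your coherence bookkeeping --- $(N_k - N_{k-1})\mu_{k,k} = 2$ for $k = 1$ and $= 1$ for $k \geq 2$ --- is exactly right. The only difference from the paper is which probabilistic theorem carries the argument. The paper's two-sentence proof feeds Propositions \ref{lem:Haar_isometry} and \ref{lem:1d_coher_bound_haar} into the \emph{finite-dimensional} Theorem \ref{thm:Li_RIPL_mk} of \cite{Li17}, using the isometry property to reduce the Haar case to the known finite-dimensional setting; you instead stay inside the paper's own infinite-dimensional machinery and apply Theorem \ref{t:RIPlevels1} with $G = I$, $\|G^{-1}\|_2 = 1$. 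Since condition \eqref{eq:GRIPL_sampdens} with $\|G^{-1}\|_2 = 1$ coincides with \eqref{eq:Li_RIPL_mk}, the two routes deliver identical measurement conditions and conclusions. Your route has the small advantage of addressing the matrix \eqref{def:A_infinite} --- the object actually named in the statement --- directly, without the implicit identification of it with the finite matrix \eqref{def:A_finite} built from the isometry $P_N U P_N$; the paper's route is marginally shorter and makes explicit that the Haar case is genuinely finite-dimensional. One cosmetic point: the balancing property is defined only for $\theta \in (0,1)$, so rather than saying it \enquote{holds trivially with $\theta = 1$} you should simply observe that $G = I$ is nonsingular, which is all that Theorem \ref{t:RIPlevels1} requires.
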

\begin{proof}
Using Proposition \ref{lem:Haar_isometry} 
we know that $P_{N}UP_{N}$ is an isometry. Thus inserting the  
local coherences from Proposition \ref{lem:1d_coher_bound_haar}
into \eqref{eq:Li_RIPL_mk} in Theorem \ref{thm:Li_RIPL_mk} gives to the result.
\end{proof}

\begin{theorem}
    \label{thm:haar2}
Let $U = [\Bwh, \Bwave^{J_0, 1}]$ and let $\M = \Nm = [2^{J_0+1}, \ldots, 2^{J_0+r}]$ 
be sparsity and sampling levels. Let $\s \in \N^{r}$ be local sparsities,
$\bom = (s_{1}^{1/2}, \ldots, s_{r}^{1/2})$ be weights  and $\m\in \N^r$ be
local sampling densities. Let $\eps \in (0,1)$ and let
$0 \leq r_0 \leq r$. Let $m=m_1 +\ldots + m_r$, $\tilde{m}= m_{r_0+1}+ \cdots+m_r$ and $s = s_1 +
\ldots + s_r$. Suppose we sample $m_{k} = N_{k} - N_{k-1}$ for $k=1,\ldots,
r_0$ and 
    \begin{equation*}
    m_k \gtrsim r \cdot s_k
     \cdot \left (  r\log(2 \tilde{m})
    \log(2N) \log^2(2s) + \log(\epsilon^{-1}) \right ),
    \end{equation*}
    for $k = r_0+1,\ldots, r$. Let $H \in \C^{m\times \infty}$ be as in
\eqref{eq:H_mat} with $A = HP_{M}$. Let $x \in \ell^2(\N)$ and $e \in \C^m$
with $\|e\|_2\leq \eta$ for some $\eta \geq 0$. Set $\tilde{y} = Ax + e$. Then
any solution $\hat{x}$ of the optimization problem
    \[ \minimize_{z \in \C^{M}} \|z\|_{1,\bom} \quad\text{subject to} \quad 
       \|Az - \tilde{y}\|_2 \leq \eta \]
    satisfies
    \begin{align*}
    \|P_{M}x-\hat{x}\|_{1,\bom} &\leq C \sigma_{\s, \M}(P_{M}x)_{1,\bom} + D \sqrt{r}
                \eta \\
    \|P_{M}x-\hat{x}\|_{2} &\leq (1+r^{1/4}) \( C \frac{\sigma_{\s,\M}(P_{M}x)_{1,\bom}}{\sqrt{r}} + D\eta \)
    \end{align*}
    with probability $1-\eps$, where $C = 2(2+\sqrt{3})/(2-\sqrt{3})$ and $D =
    8\sqrt{2}/(2-\sqrt{3})$.
\end{theorem}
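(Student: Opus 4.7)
The plan is to combine Theorem \ref{thm:haar1} with Theorem \ref{thm:GRIPL_imply_uniform}, exploiting two structural simplifications that the Haar case affords. First, by Proposition \ref{lem:Haar_isometry} the operator $P_N U P_N$ is an isometry on $\C^N$; since $N = M$, this means $G = \sqrt{P_M U^* P_N U P_M} = I$, so $\|G^{-1}\|_2 = \kappa(G) = 1$ and the G-RIPL reduces to the ordinary RIPL. Second, the block-diagonal structure implicit in Proposition \ref{lem:Haar_isometry} (see also Figure \ref{fig:U_Haar_DB4}) gives $H P_M^\perp = 0$. Hence we may take $K = M$, the truncation error is absent, and the weight $\om_{r+1}$ is irrelevant (so the weight vector is effectively $\bom = (s_1^{-1/2}, \ldots, s_r^{-1/2})$, up to the sign in the exponent).

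With these weights one immediately computes
\[
\Sboms = \sum_{l=1}^{r} \om_l^2 s_l = r, \qquad \zetas = \min_{l} \om_l^2 s_l = 1.
\]
To invoke Theorem \ref{thm:GRIPL_imply_uniform} with $\delta_{\t,\M} \leq 1/2$, we need the G-RIPL (i.e.\ RIPL) at the inflated order $\t$, where
\[
t_l = \min\left\{M_l - M_{l-1},\, 2 \ceil{4 \kappa(G)^2 \Sboms / \om_l^2}\right\} = \min\left\{M_l - M_{l-1},\, 2 \ceil{4 r s_l}\right\} \lesssim r s_l.
\]
We verify this RIPL by applying Theorem \ref{thm:haar1} with local sparsities $\t$ in place of $\s$ and $\delta = 1/2$. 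The resulting sampling condition $m_k \gtrsim t_k L \lesssim r s_k L$ coincides with the hypothesis of the present theorem; hence with probability at least $1 - \eps$ the matrix in \eqref{def:A_infinite} satisfies the RIPL of order $(\t,\M)$ with constant at most $1/2$.

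Finally, applying Theorem \ref{thm:GRIPL_imply_uniform} to $A = HP_M$, and substituting $G = I$, $\Sboms = r$, $\zetas = 1$, $\|G^{-1}\|_2 = 1$ and $\eta' = 0$ into the bounds \eqref{eq:error_bound1}--\eqref{eq:error_bound2}, yields exactly the stated $\ell^1_{\bom}$ and $\ell^2$ error estimates. I do not anticipate any genuine obstacle: once Proposition \ref{lem:Haar_isometry} has eliminated $G$ and the tail term, everything reduces to the bookkeeping of parameters through Theorems \ref{thm:haar1} and \ref{thm:GRIPL_imply_uniform}. The only mildly delicate point is verifying that the bound $t_l \lesssim r s_l$ propagates correctly through the sampling condition, which is immediate since $\mu_{k,l}$ vanishes off the diagonal and $\mu_{k,k}(N_k - N_{k-1}) = 1$ by Proposition \ref{lem:1d_coher_bound_haar}.
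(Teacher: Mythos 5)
Your proposal is correct and takes essentially the same route as the paper's own proof: Proposition \ref{lem:Haar_isometry} gives $G = I$ and $HP_{M}^{\perp} = 0$, after which Theorem \ref{thm:GRIPL_imply_uniform} (with $\Sboms = r$, $\zetas = 1$ and the inflated order $t_l = \min\{M_l - M_{l-1},\, 2\lceil 4 r s_l \rceil\}$) is combined with the RIPL guarantee of Theorem \ref{thm:haar1} exactly as the paper does. You also correctly noted that the weights in the statement should read $\om_l = s_l^{-1/2}$ (a sign typo in the exponent), which is what the computation $\Sboms = r$ implicitly assumes.
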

\begin{proof}
    Proposition \ref{lem:Haar_isometry} gives $G = \sqrt{P_{M}U^*P_N U P_M} = \sqrt{I}=I$. Next notice that $\Sboms =r$ and that $P_{M}x \in \{z \in \C^M : \|Az -
    \tilde{y}\|_2 \leq \eta\}$ since $\|HP_{M}^{\perp}\| = 0$. Using Theorem 
    \ref{thm:GRIPL_imply_uniform} we see that we can guarantee recovery of
$(\s,\M)$-sparse vectors, if $A$ satisfies the RIPL with constant
$\delta_{\t,\M} \leq 1/2$, where $t_l = \min \{M_{l}-M_{l-1}, 8rs_l\}$. Using Theorem 
\ref{thm:haar1} gives the result.    
\end{proof}

\section{Proof of results in Section \ref{sec:infin_dim}}
\label{sec:proof1}

When deriving uniform recovery guarantees via the RIP, it is typical to proceed
as follows.  First, one shows that the RIP implies the so-called \emph{robust
Null space Property (rNSP)} of order $s$ (see Def.\ 4.17 in \cite{Foucart13}).
Second, one the shows that the rNSP implies stable and robust recovery.
Thus the line of implications reads
\[ \text{(RIP) } \implies \text{ (rNSP)} \implies \text{(uniform recovery)}. \]
A similar line of implications holds for the RIPL and the corresponding \textit{robust Null Space
Property in levels (rNSPL)}; see Def.\ 3.6 in \cite{Bastounis17}).

Both of the recovery guarantees for matrices satisfying the rNSP and rNSPL
consider minimizers of the unweighed quadratically-constrained basis pursuit
(QCBP) optimization problem. In our setup we consider minimizers of the
weighted QCBP. We have therefore generalized the rNSPL to what we call the
weighted robust null space property in levels.

For the sufficient condition for the G-RIPL in Theorem
\ref{t:RIPlevels1}, the proof follows along similar lines as in \cite{Li17}. We
only sketch the main differences here.  

\subsection{The weighted rNSPL and norm bounds}

For a set $\Theta \subseteq \{1,\ldots,
M\}$ and a vector $x \in \C^{M}$ we let the vector $x_{\Theta}$ be given by
\[ (x_{\Theta})_i = \begin{cases} x_i & i \in \Theta \\ 0 & i \not\in\Theta
\end{cases}. \]
We also define 
\[
E_{\s,\M} = \{ \Theta \subseteq \{1,\ldots,M\} : |\Theta \cap \{M_{l-1}+1,\ldots, M_l\}| \leq s_l,\text{ for } l =1,\ldots,r\}.
\] 
\begin{definition}[weigthed rNSP in levels]
Let $\M,\s \in \N^r$ be sparsity levels and local sparsities, respectively.
For positive weights $\bom \in \R^{r+1}$, we say that $A \in \mathbb{C}^{m
\times M}$ satisfies the \emph{weighted robust Null Space Property in
Levels} (weighted rNSPL) of order $(\s, \M)$ with constants $0 < \rho <1$ and
$\gamma> 0$ if 
\begin{equation}
    \label{eq:omrNSPL}
    \|x_{\Theta}\|_2 
    \leq \frac{\rho \| x_{\Theta^{c}}\|_{1,\bom}}{\sqrt{\Sboms}}
    + \gamma \|A x\|_2
\end{equation}
for all $x \in \C^M$ and all $\Theta \in E_{\s,\M}$. 
\end{definition}

\begin{lemma}[weighted rNSPL implies $\ell^{(1,\bom)}$-distance bound]
\label{l:rNSPLdistance1w}
    Suppose that $A \in \C^{m\times M}$ satisfies the weighted rNSPL of
order $(\s,\M)$ with constants $0 < \rho < 1$ and $\gamma > 0$. Let $x,z \in
\C^{M}$. Then 
\begin{equation}
\| z-x\|_{1,\bom} \leq \frac{1+\rho}{1-\rho}\( 2\sigma_{\s,\M} (x)_{1,\bom} 
+ \|z\|_{1,\bom} - \|x\|_{1,\bom} \) + \frac{2\gamma}{1-\rho} \sqrt{\Sboms}
\|A(z-x)\|_{2}.
\end{equation}
\end{lemma}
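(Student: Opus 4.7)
The plan is to mimic the classical derivation that the (unweighted) robust null space property implies the $\ell^1$-distance bound (cf.\ Thm.\ 4.20 in \cite{Foucart13}) adapted to the level structure and the weighted norm. Let $v = z - x$ and let $\Theta \in E_{\s,\M}$ be the support of a best $(\s,\M)$-term approximation of $x$ in the $\|\cdot\|_{1,\bom}$ norm, so that $\|x_{\Theta^c}\|_{1,\bom} = \sigma_{\s,\M}(x)_{1,\bom}$. The target inequality splits $\|v\|_{1,\bom} = \|v_{\Theta}\|_{1,\bom} + \|v_{\Theta^c}\|_{1,\bom}$, so the work is to bound each piece in terms of $\|Av\|_2$ and $\|z\|_{1,\bom} - \|x\|_{1,\bom}$.

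The first step I expect to be the crucial one is upgrading the $\ell^2$-form of the weighted rNSPL in \eqref{eq:omrNSPL} to an $\ell^{1,\bom}$-form. Applied to any $w \in \C^M$ supported on a set $\Theta \in E_{\s,\M}$, a levelwise Cauchy--Schwarz gives
\[
\|w\|_{1,\bom} = \sum_{l=1}^{r} \om_l \|P_{M_l}^{M_{l-1}} w\|_1 \leq \sum_{l=1}^{r} \om_l \sqrt{s_l}\,\|P_{M_l}^{M_{l-1}} w\|_2 \leq \sqrt{\Sboms}\,\|w\|_2.
\]
Combining this with \eqref{eq:omrNSPL} applied to $v$ at support $\Theta$ yields the working inequality
\[
\|v_{\Theta}\|_{1,\bom} \leq \rho \|v_{\Theta^c}\|_{1,\bom} + \gamma \sqrt{\Sboms}\,\|Av\|_2.
\]

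Next I would bound $\|v_{\Theta^c}\|_{1,\bom}$. Using the disjoint-support additivity of $\|\cdot\|_{1,\bom}$ and the reverse triangle inequality applied to $z = x + v$ on $\Theta$ and $\Theta^c$,
\[
\|z\|_{1,\bom} \geq \|x_{\Theta}\|_{1,\bom} - \|v_{\Theta}\|_{1,\bom} + \|v_{\Theta^c}\|_{1,\bom} - \|x_{\Theta^c}\|_{1,\bom},
\]
and writing $\|x\|_{1,\bom} = \|x_{\Theta}\|_{1,\bom} + \|x_{\Theta^c}\|_{1,\bom}$ and $\|x_{\Theta^c}\|_{1,\bom} = \sigma_{\s,\M}(x)_{1,\bom}$, I obtain
\[
\|v_{\Theta^c}\|_{1,\bom} \leq \|z\|_{1,\bom} - \|x\|_{1,\bom} + 2\sigma_{\s,\M}(x)_{1,\bom} + \|v_{\Theta}\|_{1,\bom}.
\]
Substituting the rNSPL bound for $\|v_{\Theta}\|_{1,\bom}$ and absorbing the $\rho \|v_{\Theta^c}\|_{1,\bom}$ term on the left gives, after dividing by $1-\rho$,
\[
\|v_{\Theta^c}\|_{1,\bom} \leq \frac{1}{1-\rho}\bigl(2\sigma_{\s,\M}(x)_{1,\bom} + \|z\|_{1,\bom} - \|x\|_{1,\bom}\bigr) + \frac{\gamma \sqrt{\Sboms}}{1-\rho}\|Av\|_2.
\]

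Finally, summing $\|v\|_{1,\bom} = \|v_\Theta\|_{1,\bom} + \|v_{\Theta^c}\|_{1,\bom}$ using the rNSPL bound on the first piece yields $\|v\|_{1,\bom} \leq (1+\rho)\|v_{\Theta^c}\|_{1,\bom} + \gamma\sqrt{\Sboms}\|Av\|_2$, and plugging in the display above produces the claimed constants $\frac{1+\rho}{1-\rho}$ and $\frac{2\gamma}{1-\rho}$ after combining the two $\|Av\|_2$ terms via $\tfrac{(1+\rho)}{1-\rho} + 1 = \tfrac{2}{1-\rho}$. The main obstacle is really only conceptual: ensuring the Cauchy--Schwarz step producing the factor $\sqrt{\Sboms}$ is done levelwise, so that the weights $\om_l$ and local sparsities $s_l$ combine correctly into $\Sboms = \sum_l \om_l^2 s_l$; everything else is an arrangement of triangle inequalities standard from the unweighted setting.
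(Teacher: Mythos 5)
Your proof is correct and takes essentially the same route as the paper's: the same choice of $\Theta$ attaining $\sigma_{\s,\M}(x)_{1,\bom}$, the same levelwise Cauchy--Schwarz upgrade $\|v_{\Theta}\|_{1,\bom} \leq \sqrt{\Sboms}\|v_{\Theta}\|_{2}$ feeding the weighted rNSPL, and the same absorption of the $\rho$-term, with only the cosmetic difference that you solve the resulting inequality for $\|v_{\Theta^c}\|_{1,\bom}$ where the paper solves for $\|v_{\Theta}\|_{1,\bom}$. The constants $\frac{1+\rho}{1-\rho}$ and $\frac{2\gamma}{1-\rho}$ come out identically in both arrangements.
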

\begin{proof}
    Let $v=z-x$ and $\Theta \in E_{\s,\M}$ be such that $\|x_{\Theta^{c}}\|_{1,\bom} 
= \sigma_{\s,\M}(x)_{1,\bom}$. Then 
\begin{align*}
\nm{x}_{1,\bom} + \nm{v_{\Theta^c}}_{1,\bom}& \leq 2 \nm{x_{\Theta^c}}_{1,\bom} + \nm{x_{\Theta}}_{1,\bom} + \nm{z_{\Theta^c}}_{1,\bom}
\\
& = 2 \nm{x_{\Theta^c}}_{1,\bom} + \nm{x_{\Theta}}_{1,\bom} + \nm{z}_{1,\bom} - \nm{z_{\Theta}}_{1,\bom}
\\
& \leq 2 \sigma_{\mb{s},\mb{M}}(x)_{1,\bom} + \nm{v_{\Theta}}_{1,\bom} + \nm{z}_{1,\bom},
\end{align*}
which implies that
\be{
\label{frost}
\nm{v_{\Theta^c}}_{1,\bom} \leq 2 \sigma_{\mb{s},\mb{M}}(x)_{1,\bom}  + \nm{z}_{1,\bom} - \nm{x}_{1,\bom} + \nm{v_{\Theta}}_{1,\bom}.
}
Now consider $\nm{v_{\Theta}}_{1,\bom}$.  By the weighted rNSPL, we have 
\bes{
\nm{v_{\Theta}}_{1,\bom} \leq \sqrt{\Sboms} \nm{v_{\Theta}}_{2} \leq \rho \nm{v_{\Theta^c}}_{1,\bom} + \sqrt{\Sboms} \gamma \nm{A v }_{2}.
}
Hence \eqref{frost} gives
\bes{
\nm{v_{\Theta}}_{1,\bom} \leq \rho \left ( 2 \sigma_{\mb{s},\mb{M}}(x)_{1,\bom}  + \nm{z}_{1,\bom} - \nm{x}_{1,\bom} + \nm{v_{\Theta}}_{1,\bom} \right ) + \sqrt{\Sboms} \gamma \nm{A v}_{2},
}
and after rearranging we get
\bes{
\nm{v_{\Theta}}_{1,\bom} \leq \frac{\rho}{1-\rho} \left ( 2 \sigma_{\mb{s},\mb{M}}(x)_{1,\bom}  + \nm{z}_{1,\bom} - \nm{x}_{1,\bom}  \right ) + \frac{\gamma}{1-\rho} \sqrt{\Sboms} \nm{A v }_{2}.
}
Therefore, using this and \eqref{frost} once more, we deduce that
\eas{
\nm{z-x}_{1,\bom} &= \nm{v_{\Theta}}_{1,\bom} + \nm{v_{\Theta^c}}_{1,\bom}
\\
& \leq  2 \nm{v_{\Theta}}_{1,\bom} + \left ( 2 \sigma_{\mb{s},\mb{M}}(x)_{1,\bom}  + \nm{z}_{1,\bom} - \nm{x}_{1,\bom}  \right )
\\
& \leq \frac{1+\rho}{1-\rho}  \left ( 2 \sigma_{\mb{s},\mb{M}}(x)_{1,\bom}  + \nm{z}_{1,\bom} - \nm{x}_{1,\bom}  \right ) + \frac{2\gamma}{1-\rho} \sqrt{\Sboms} \nm{A (z-x)}_{2},
}
which gives the result.
\end{proof}

\begin{lemma}[weighted rNSPL implies $\ell^2$ distance bound]
\label{l:rNSPLdistance2}
Suppose that $A \in \C^{m\times M}$ satisfies the weighted rNSPL of order $(\mb{s},\mb{M})$ with constants $0 < \rho < 1$ and $\gamma > 0$.  Let $x,z \in \C^M$.  Then
\begin{equation}
\nm{z-x}_{2} \leq \left ( \rho + (1+\rho)(\Sboms/\zetas)^{1/4}/2\right) \frac{\nm{z-x}_{1,\bom}}{\sqrt{\Sboms}} + \left ( 1 + (\Sboms/\zetas)^{1/4}/2\right) \gamma \nm{A (z-x)}_{2}.
\end{equation}
\end{lemma}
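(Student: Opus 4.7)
The plan is to adapt the classical route from a null-space-type property to an $\ell^2$ distance bound, respecting both the level structure and the weighting. Set $v = z - x$. Within each level $\{M_{l-1}+1,\ldots,M_l\}$, let $T_l$ be the indices of the $s_l$ largest entries of $v$ in magnitude, and put $T = T_1 \cup \cdots \cup T_r \in E_{\s,\M}$. Since $v_T$ and $v_{T^c}$ have disjoint supports,
\[
\|v\|_2 \leq \|v_T\|_2 + \|v_{T^c}\|_2,
\]
and the weighted rNSPL immediately gives $\|v_T\|_2 \leq \rho \|v_{T^c}\|_{1,\bom}/\sqrt{\Sboms} + \gamma \|Av\|_2$.

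The real work is the tail bound on $\|v_{T^c}\|_2$. Writing $v^{(l)} = P_{M_l}^{M_{l-1}} v$ and using the familiar pointwise estimate that every entry of $v^{(l)}_{T_l^c}$ is dominated by the $(s_l+1)$-th largest entry of $v^{(l)}$, which in turn is at most $\|v^{(l)}_{T_l}\|_2/\sqrt{s_l}$, I obtain
\[
\|v^{(l)}_{T_l^c}\|_2^2 \leq \frac{\|v^{(l)}_{T_l}\|_2}{\sqrt{s_l}} \|v^{(l)}_{T_l^c}\|_1.
\]
Summing over $l$, applying Cauchy--Schwarz with the split $\|v^{(l)}_{T_l}\|_2$ against $\|v^{(l)}_{T_l^c}\|_1/\sqrt{s_l}$, and then inserting the weights via
\[
\frac{\|v^{(l)}_{T_l^c}\|_1^2}{s_l} = \frac{(\om_l \|v^{(l)}_{T_l^c}\|_1)^2}{\om_l^2 s_l} \leq \frac{(\om_l \|v^{(l)}_{T_l^c}\|_1)^2}{\zetas},
\]
together with the elementary $\sum_l a_l^2 \leq (\sum_l a_l)^2$ for $a_l \geq 0$, produces the key tail inequality
\[
\|v_{T^c}\|_2^2 \leq \frac{\|v_T\|_2 \, \|v_{T^c}\|_{1,\bom}}{\sqrt{\zetas}}.
\]

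The last step converts this square-root estimate into a linear one via $\sqrt{ab} \leq (\lambda a + b/\lambda)/2$. Choosing $\lambda = (\Sboms/\zetas)^{1/4}$ to balance against the target normalisation $1/\sqrt{\Sboms}$ yields
\[
\|v_{T^c}\|_2 \leq \frac{(\Sboms/\zetas)^{1/4}}{2} \|v_T\|_2 + \frac{(\Sboms/\zetas)^{1/4}}{2\sqrt{\Sboms}} \|v_{T^c}\|_{1,\bom}.
\]
Substituting the weighted rNSPL bound for $\|v_T\|_2$, collecting the coefficients of $\|v_{T^c}\|_{1,\bom}/\sqrt{\Sboms}$ and of $\gamma\|Av\|_2$, and finally enlarging $\|v_{T^c}\|_{1,\bom} \leq \|v\|_{1,\bom} = \|z-x\|_{1,\bom}$ recovers the stated inequality exactly. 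The only mildly delicate point is the exponent $1/4$: it arises because the tail estimate already costs one square root and the AM--GM balancing contributes a second, so pinning down the optimal $\lambda$ is essentially the sole nontrivial calibration in the argument.
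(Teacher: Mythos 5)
Your proof is correct and follows essentially the same route as the paper's: the same per-level choice of the largest-$s_l$ index sets, the same key tail inequality $\|v_{T^c}\|_2^2 \leq \|v_T\|_2\,\|v_{T^c}\|_{1,\bom}/\sqrt{\zetas}$, the same Young/AM--GM balancing with parameter $(\Sboms/\zetas)^{1/4}$, and the same final substitution of the weighted rNSPL. The only cosmetic difference is the middle step, where you use Cauchy--Schwarz together with $\sum_l a_l^2 \leq (\sum_l a_l)^2$ while the paper bounds $\|v_{T_l}\|_2 \leq \|v_T\|_2$ and pulls out $\max_l \{1/(\om_l\sqrt{s_l})\}$ --- these are trivially equivalent ways to reach the identical estimate.
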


\begin{proof}
Let $v = z-x$ and $\Theta = \Theta_1\cup \cdots \cup \Theta_r$, where $\Theta_l \subseteq \{M_{l-1}+1,\ldots,M_l\}$, $| \Theta_l | = s_l$ is the index set of the largest $s_l$ coefficients of $P^{M_{l-1}}_{M_l} v$ in absolute value.  Then
\bes{
\nm{v_{\Theta_l}}_{2} = \sqrt{\sum_{i \in \Theta_l} |v_i |^2} \geq \sqrt{s_l} \min_{i \in \Theta_l} |v_i | \geq \sqrt{s_l} \max_{\substack{M_{l-1} < i \leq M_l \\ i \notin \Theta_l}} |v_i |,\quad l=1,\ldots,r,
}
which gives
\eas{
\nm{v_{\Theta^c}}^2_{2} &= \sum^{r}_{l=1} \sum_{\substack{M_{l-1} < i \leq M_l \\ i \notin \Theta_l}} |v_i|^2  \leq \sum^{r}_{l=1} \max_{\substack{M_{l-1} < i \leq M_l \\ i \notin \Theta_l}} |v_i | \sum_{\substack{M_{l-1} < i \leq M_l \\ i \notin \Theta_l}} |v_i|
\\
& \leq \sum^{r}_{l=1} \frac{\nm{v_{\Theta_l}}_{2}}{\sqrt{s_l}} \sum_{\substack{M_{l-1} < i \leq M_l \\ i \notin \Theta_l}} |v_i|
\leq \max_{l=1,\ldots,r} \left \{\frac{\nm{v_{\Theta_l}}_{2}}{\om_{l}\sqrt{s_l}} \right \} \sum^{r}_{l=1} \om_{l} \sum_{\substack{M_{l-1} < i \leq M_l \\ i \notin \Theta_l}} |v_i|
\\
& \leq \max_{l=1,\ldots,r} \left \{\frac{\nm{v_{\Theta_l}}_{2}}{\om_{l}\sqrt{s_l}} \right \} \nm{v_{\Theta^c} }_{1,\bom} 
}
Since $\nm{v_{\Theta_l}}_{2} \leq \nm{v_{\Theta}}_{2}$ we deduce that
\bes{
\nm{v_{\Theta^c}}_{2} \leq \sqrt{\frac{\nm{v_{\Theta}}_{2}\nm{v_{\Theta^c} }_{1,\bom}}{\min_{l=1,\ldots,r} \{ \om_{l} \sqrt{s_l} \} } } = \sqrt{\frac{\nm{v_{\Theta}}_{2}\nm{v_{\Theta^c} }_{1,\bom}}{\sqrt{\zetas}}}.
}
Applying Young's inequality $a b \leq \frac12 a^2 + \frac12 b^2$, we obtain
\bes{
\nm{v_{\Theta^c}}_{2} \leq \frac{(\Sboms / \zetas)^{1/4} }{2} \frac{\nm{v_{\Theta^c}}_{1,\bom}}{\sqrt{\Sboms}}+ \frac{(\Sboms / \zetas)^{1/4}}{2} \nm{v_{\Theta}}_{2}.
}
Hence
\bes{
\nm{v}_{2} \leq \nm{v_{\Theta}}_{2} + \nm{v_{\Theta^c}}_{2} \leq \left ( 1 + (\Sboms / \zetas)^{1/4}/2\right)  \nm{v_{\Theta}}_{2} +  \frac{(\Sboms / \zetas)^{1/4} }{2} \frac{\nm{v_{\Theta^c}}_{1,\bom}}{\sqrt{\Sboms}}.
}
We now use the weighted rNSPL to get
\bes{
\nm{v}_{2} \leq  \left ( \rho + (1+\rho)(\Sboms / \zetas)^{1/4}/2\right) \frac{\nm{v_{\Theta^c}}_{1,\bom}}{\sqrt{\Sboms}} + \left ( 1 + (\Sboms / \zetas)^{1/4}/2\right) \gamma \nm{A v}_{2}.
}
To complete the proof, we use the inequality $\nm{v_{\Theta^c}}_{1,\bom} \leq \nm{v}_{1,\bom}$.
\end{proof}

\subsection{Weighted rNSPL implies uniform recovery}

\begin{theorem} \label{lem:min_om_bound}
Let $\M,\s \in \N^r$ be sparsity levels and local sparsities, respectively, 
and let $\bom \in \R^{r+1}$ be positive weights.
Let $x \in \C^{K}$, with $K>M$ and $e \in \C^m$ with
$\|e\|_{2}\leq \eta$.  Set $y = Ax+e$. Let $A \in \C^{m \times K}$
and suppose that $AP_{M}$ satisfies the weighted rNSP in levels of order
$(\s,\M)$ with constants $\rho = \sqrt{3}/2$ and $\gamma > 0$.
If  
\begin{equation}
    \label{eq:om_cond}
 \om_{r+1} \geq \sqrt{\Sboms}\( \frac{1}{3(1+(\Sboms/\zetas)^{1/4})} 
                + 2 \gamma \|AP_{K}^{M}\|_{1\to 2}  \)
\end{equation}
then any solution $\hat{x}$ of the optimization problem
     \begin{align}
        \label{eq:QCBP123}
        \minimize_{z \in \C^K} \|z\|_{1, \bs{\om}} \quad \text{subject to} \quad
        \|Az- y\|_2 \leq \eta 
     \end{align}
satisfies
\begin{align*}
  \|x-\hat{x}\|_{1,\bs{\om}} \leq& C
  \sigma_{\s,\M}(x)_{1,\bom} + D \gamma  \sqrt{\Sboms} \eta\\
  \|x-\hat{x}\|_{2}  \leq& \(1+ (\Sboms/\zetas)^{1/4} \) \( C
\frac{\sigma_{\s,\M}(x)_{1,\bom}}{\sqrt{\Sboms}} +D \gamma\eta \),
\end{align*}
where $C = 2(2+\sqrt{3})/(2-\sqrt{3})$ and $D = 8/(2-\sqrt{3})$.
\end{theorem}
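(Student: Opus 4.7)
The plan is to reduce the conclusion to the finite-dimensional bounds provided by Lemmas~\ref{l:rNSPLdistance1w} and \ref{l:rNSPLdistance2}, which apply to $AP_{M}$ since that matrix satisfies the weighted rNSPL by assumption, and then use the hypothesis on $\om_{r+1}$ to absorb the interactions with the tail $P^{M}_{K}x$. Throughout I decompose $v \coloneqq \hat{x} - x$ as $v_1 \coloneqq P_M v$ and $v_2 \coloneqq P^{M}_{K} v$, and abbreviate $\alpha \coloneqq \|AP^{M}_{K}\|_{1\to 2}$, $c_0 \coloneqq (1+\rho)/(1-\rho)$, and $c_1 \coloneqq 2\gamma\sqrt{\Sboms}/(1-\rho)$, so that the target constants satisfy $C = 2c_0$ and $D\gamma\sqrt{\Sboms} = 2c_1$.

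For the $\ell^{1,\bom}$-bound I apply Lemma~\ref{l:rNSPLdistance1w} to $AP_M$ with the pair $P_M x, P_M\hat{x}\in\C^{M}$. The key manipulation combines the optimality $\|\hat{x}\|_{1,\bom}\leq\|x\|_{1,\bom}$ (valid because $x$ is feasible, since $\|Ax-y\|_2 = \|e\|_2\le\eta$) with the reverse triangle inequality $\|P^{M}_{K}\hat{x}\|_1 \geq \|v_2\|_1 - \|P^{M}_{K} x\|_1$, to give
\[
\|P_M\hat{x}\|_{1,\bom} - \|P_M x\|_{1,\bom} \leq \om_{r+1}\bigl(2\|P^{M}_{K} x\|_1 - \|v_2\|_1\bigr).
\]
Together with the splitting $\|AP_M v_1\|_2 \leq \|Av\|_2 + \alpha\|v_2\|_1 \leq 2\eta + \alpha\|v_2\|_1$ and the identity $\sigma_{\s,\M}(P_M x)_{1,\bom} + \om_{r+1}\|P^{M}_{K} x\|_1 = \sigma_{\s,\M}(x)_{1,\bom}$, a rearrangement yields
\[
\|v_1\|_{1,\bom} + (c_0\om_{r+1}-c_1\alpha)\|v_2\|_1 \leq 2c_0\,\sigma_{\s,\M}(x)_{1,\bom} + 2c_1\eta.
\]
The second summand in \eqref{eq:om_cond} gives $\om_{r+1}\geq 2\gamma\sqrt{\Sboms}\alpha$, which is exactly what is needed to force $c_0\om_{r+1}-c_1\alpha \geq \om_{r+1}$; therefore $\|v\|_{1,\bom} = \|v_1\|_{1,\bom} + \om_{r+1}\|v_2\|_1 \leq 2c_0\sigma_{\s,\M}(x)_{1,\bom} + 2c_1\eta$, which is the first stated bound.

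For the $\ell^2$-bound I apply Lemma~\ref{l:rNSPLdistance2} to $AP_M$ with the same pair to estimate $\|v_1\|_2$ in terms of $\|v_1\|_{1,\bom}/\sqrt{\Sboms}$ and $\|AP_M v_1\|_2$, use $\|v\|_2 \leq \|v_1\|_2 + \|v_2\|_2 \leq \|v_1\|_2 + \|v_2\|_1$, and plug in the bounds already derived for $\|v_1\|_{1,\bom}$, $\|AP_M v_1\|_2$ and $\om_{r+1}\|v_2\|_1$. The first summand in \eqref{eq:om_cond}, $\om_{r+1}\geq \sqrt{\Sboms}/\bigl(3(1+(\Sboms/\zetas)^{1/4})\bigr)$, is precisely what allows the residual $\|v_2\|_1$ contribution to be absorbed into the prefactor $(1+(\Sboms/\zetas)^{1/4})$ of the target estimate.

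The main obstacle is the careful constant-tracking: the two summands of \eqref{eq:om_cond} are tuned so that the $\alpha$-term closes the $\ell^{1,\bom}$ step (preventing $\|v_2\|_1$ from swamping the head contribution) while the first summand closes the $\ell^2$ step (forcing $\om_{r+1}\|v_2\|_1 \leq$ something controlled by $\sqrt{\Sboms}$). Beyond keeping the slack bookkeeping straight, the remaining work is routine triangle-inequality algebra; no further structural input beyond the two rNSPL lemmas is needed.
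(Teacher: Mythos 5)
Your overall architecture is the same as the paper's: apply Lemmas~\ref{l:rNSPLdistance1w} and \ref{l:rNSPLdistance2} to $AP_M$ with the pair $(P_M x, P_M \hat{x})$, use feasibility of both $x$ and $\hat{x}$ to get $\|AP_M v_1\|_2 \leq 2\eta + \alpha \|v_2\|_1$, and spend the two summands of \eqref{eq:om_cond} on the tail. Your $\ell^{1,\bom}$ half is correct: the intermediate inequality $\|v_1\|_{1,\bom} + (c_0\om_{r+1}-c_1\alpha)\|v_2\|_1 \leq 2c_0\sigma_{\s,\M}(x)_{1,\bom} + 2c_1\eta$ checks out (your splitting identity $\sigma_{\s,\M}(P_Mx)_{1,\bom} + \om_{r+1}\|P_{K}^{M}x\|_1 = \sigma_{\s,\M}(x)_{1,\bom}$ holds because every $z \in \Sigma_{\s,\M}$ is supported in the first $M$ coordinates), and since $c_0 - 1 = 2\rho/(1-\rho)$, forcing $c_0\om_{r+1}-c_1\alpha \geq \om_{r+1}$ needs only $\om_{r+1} \geq \gamma\sqrt{\Sboms}\,\alpha/\rho$, which the second summand supplies. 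This is the paper's computation \eqref{eq:cond_om2} with the $\|v_2\|_1$ coefficient tracked explicitly rather than handled by the triangle inequality — a harmless variation.

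The $\ell^2$ half, however, has a genuine quantitative gap. Write $\beta = (\Sboms/\zetas)^{1/4}$, $S=\sqrt{\Sboms}$, $T = C\sigma_{\s,\M}(x)_{1,\bom} + D\gamma S \eta$, $g = \rho + (1+\rho)\beta/2$ and $h = 1+\beta/2$. Your chain gives $\|v\|_2 \leq \tfrac{g}{S}\|v_1\|_{1,\bom} + (h\gamma\alpha+1)\|v_2\|_1 + 2h\gamma\eta$, and you propose to close it by plugging in the "already derived" bounds, i.e.\ $\|v_1\|_{1,\bom} + \om_{r+1}\|v_2\|_1 \leq T$. But \eqref{eq:om_cond} permits $\om_{r+1}$ as small as $S/(3(1+\beta))$ when $\alpha = 0$, so the best you can extract is $(h\gamma\alpha+1)\|v_2\|_1 \leq \tfrac{3(1+\beta)}{S}\,\om_{r+1}\|v_2\|_1$, and the chain yields only $\|v\|_2 \leq 3(1+\beta)T/S + 2h\gamma\eta$ — the right shape but roughly a factor $3$ worse than the stated bound, so the first summand of \eqref{eq:om_cond} does \emph{not} by itself absorb the residual into the prefactor $(1+\beta)$ as you claim. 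Two repairs exist. (a) Keep your own sharper coefficient: under \eqref{eq:om_cond} one has $c_1\alpha \leq \om_{r+1}/(1-\rho) = 2(2+\sqrt{3})\om_{r+1}$, hence $c_0\om_{r+1}-c_1\alpha \geq (3+2\sqrt{3})\om_{r+1}$, so $\om_{r+1}\|v_2\|_1 \leq T/(3+2\sqrt{3})$; the residual then shrinks to $\tfrac{3(1+\beta)}{(3+2\sqrt{3})S}(T - \|v_1\|_{1,\bom}) < \tfrac{(1+\beta)}{2S}(T-\|v_1\|_{1,\bom})$, the linear functional is maximized at $\|v_1\|_{1,\bom}=T$, and $g \leq 1+\beta$ together with $gD + 2h \leq (1+\beta)D$ (the paper's inequality \eqref{eq:long_ineq}) closes the argument. (b) Follow the paper, which never isolates $\|v_2\|_1$: it bounds $\|P_{K}^{M}v\|_{1,\bom} \leq \|P_{K}^{M}x\|_{1,\bom} + \|P_{K}^{M}\hat{x}\|_{1,\bom}$ and, via the rewritten weight condition \eqref{eq:om2_bound_rewritten}, merges these tail terms with $\|P_M\hat{x}\|_{1,\bom} - \|P_Mx\|_{1,\bom}$ and $\sigma_{\s,\M}(P_Mx)_{1,\bom}$ into $2\sigma_{\s,\M}(x)_{1,\bom} + \|\hat{x}\|_{1,\bom} - \|x\|_{1,\bom}$, killed by minimality; there the factor $\tfrac{1}{3}$ only has to dominate $2/C$ with $C = 2(2+\sqrt{3})/(2-\sqrt{3}) \approx 27.9$, so no constant is lost.
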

{\allowdisplaybreaks
\begin{proof}
Recall that $\rho = \sqrt{3}/2$, and notice that this gives $C/2 = (1+\rho)/(1-\rho)$ and $D/2 = 2/(1-\rho)$. Next we consider the bound \eqref{eq:om_cond}, and  note that  
this bound implies 
\begin{align}
    \om_{r+1} &\geq \gamma \sqrt{\Sboms} \|AP_{K}^{M}\|_{1\to 2} / \rho \\ 
   1 + 2\rho &\geq 1 + 2\gamma \sqrt{\Sboms} \|AP_{K}^{M}\|_{1\to 2}/ \om_{r+1} \\
   1 + \rho &\geq 1-\rho + 2\gamma \sqrt{\Sboms} \|AP_{K}^{M}\|_{1\to 2}/ \om_{r+1} \\
   \frac{C}{2}&\geq 1 + \frac{D}{2} \gamma \sqrt{\Sboms}
            \|AP_{K}^{M}\|_{1\to 2} /\om_{r+1}.
    \label{eq:cond_om2}
\end{align}

We also note that \eqref{eq:om_cond} implies
\begin{align*}
    \om_{r+1} \geq&  \( \frac{1}{3(1+(\Sboms/\zetas)^{1/4})} 
                + 2 \gamma \|AP_{M}^{\perp}\|_{1\to 2}  \) \sqrt{\Sboms}\\
    \geq& \( \frac{2}{C(1+(\Sboms/\zetas)^{1/4})}
            +\frac{D}{C} \gamma \|AP_{K}^{M}\|_{1\to 2} \)\sqrt{\Sboms}
\end{align*}
which can be written as
\begin{align}
(1+(\Sboms/\zetas)^{1/4})(C/2)\frac{1}{\sqrt{\Sboms}} \geq \((D/2)(1+(\Sboms/\zetas)^{1/4}) \gamma \|AP_{K}^{M}\|_{1\to 2} + 1\)/\om_{r+1}.
\label{eq:om2_bound_rewritten}
\end{align}
Next set $v = x - \hat{x}$ and consider the $\ell^{(1,\bom)}$-bound. 
    First notice that since $AP_{M}$ satisfies the weighted rNSPL, Lemma
\ref{l:rNSPLdistance1w} gives  
\begin{equation}
    \label{eq:bound_vM}
    \begin{split}
        \|P_{M} v\|_{1,\bom} \leq& (C/2)\( 2\sigma_{\s,\M} (P_M x)_{1,\bom} 
+ \|P_{M}\hat{x}\|_{1,\bom} - \|P_{M}x\|_{1,\bom} \)  
+ (D/2)\gamma\sqrt{\Sboms}
\|AP_M v\|_{2} .
    \end{split}
\end{equation}
Here the last term can be bounded by 
\begin{align}
    \|AP_{M}v\|_2 &\leq \|Av+y-y\|_{2} + \|AP_{K}^{M} v\|_{2}
\leq 2\eta + \frac{\|AP_{K}^{M}\|_{1\to 2}}{\om_{r+1}} 
\|P_{K}^{M}v\|_{1,\bom} \\
    &\leq 2\eta + \frac{\|AP_{K}^{M}\|_{1\to 2}}{\om_{r+1}} 
\(\|P_{K}^{M}x\|_{1,\bom} + \|P_{K}^{M}\hat{x}\|_{1,\bom} \),
\label{eq:bound_AMv}
\end{align}
since both $x$ and $\hat{x}$ are feasible.
Combining \eqref{eq:cond_om2}, \eqref{eq:bound_vM} and \eqref{eq:bound_AMv} gives
\begin{align*}
\|v\|_{1,\bom} \leq& \|P_{M} v\|_{1,\bom} + \|P_{K}^{M}x\|_{1,\bom} +
\|P_{K}^{M}\hat{x}\|_{1,\bom} \\
\leq& (C/2) \big( 2\sigma_{\s,\M} (P_M x)_{1,\bom} 
+ \|P_{M}\hat{x}\|_{1,\bom} - \|P_{M}x\|_{1,\bom} \big) + \|P_{K}^{M}x\|_{1,\bom} +
      \|P_{K}^{M}\hat{x}\|_{1,\bom} \\
       &+ (D/2)\gamma \sqrt{\Sboms} \|AP_{M}v\|_2 \\
\leq& (C/2) \big( 2\sigma_{\s,\M} (P_M x)_{1,\bom} 
+ \|P_{M}\hat{x}\|_{1,\bom} - \|P_{M}x\|_{1,\bom} \big)
+ D\gamma \sqrt{\Sboms} \eta  \\
   &+ \(1 + (D/2)\gamma \sqrt{\Sboms} 
  \frac{\|AP_{K}^{M}\|_{1\to 2}}{\om_{r+1}}\) \(\|P_{K}^{M}x\|_{1,\bom}
    + \|P_{K}^{M}\hat{x}\|_{1,\bom} \) 
    \\ 
\leq& (C/2) \big( 2\sigma_{\s,\M} ( x)_{1,\bom} 
+ \|\hat{x}\|_{1,\bom} - \|x\|_{1,\bom} \big) 
    + D\gamma \sqrt{\Sboms} \eta.  
\end{align*}
Using that $\hat{x}$ is a minimizer of \eqref{eq:QCBP123} gives the desired 
bound. 

We now consider the $\ell^2$-bound. First note that 
\begin{equation}
    \label{eq:vl2_simple}
    \|v\|_{2} \leq \|P_{M}v\|_{2} + \|P_{K}^{M}v\|_{2} \leq \|P_{M}v\|_{2} + 
    \frac{1}{\om_{r+1}}\|P_{K}^{M}v\|_{1,\bom}.
\end{equation}
We shall also need  
\begin{equation}
\begin{split}
    &(\rho +(1+\rho)(\Sboms/\zetas)^{1/4}/2)\frac{2}{1-\rho} + (1+(\Sboms/\zetas)^{1/4}/2)\\
=& (D/4)\big(2\rho +(1+\rho)(\Sboms/\zetas)^{1/4} + (1-\rho) 
+(1-\rho)(\Sboms/\zetas)^{1/4}/2\big) \\
=& (D/4)\big((1+\rho) + \tfrac{1}{2}(3+\rho) (\Sboms/\zetas)^{1/4} \big) \\
\leq& (D/2) \big( 1+ (\Sboms/\zetas)^{1/4}\big)  
\end{split}
\label{eq:long_ineq}
\end{equation}
Again, since $AP_{M}$ satisfies the 
weighted rNSPL we can apply Lemma \ref{l:rNSPLdistance2}, Lemma
\ref{l:rNSPLdistance1w} and inequality \eqref{eq:long_ineq} to obtain the bound 
\begin{equation}
\begin{split}
    \|P_{M}v\|_{2} \leq& 
\( \rho +(1+\rho)(\Sboms/\zetas)^{1/4}/2 \) \frac{\nm{P_{M}v}_{1,\bom}}{\sqrt{\Sboms}} 
+ \left (1 + (\Sboms/\zetas)^{1/4}/2\right) \gamma \nm{AP_{M}v}_{2} \\
\leq&
\( 1 + (\Sboms/\zetas)^{1/4} \) 
\(C/2\)
\frac{2\sigma_{\s,\M}(P_{M}x)_{1,\bom} 
      + \|P_{M}\hat{x}\|_{1,\bom} 
        - \|P_{M}x\|_{1,\bom}}{\sqrt{\Sboms}} \\
&+ \(\rho + (1+\rho)(\Sboms/\zetas)^{1/4}/2\) \frac{2\gamma}{1-\rho} \|AP_{M}v\|_{2} \\
&+ \left ( 1 + (\Sboms/\zetas)^{1/4}/2\right) \gamma \nm{AP_{M}v}_{2} \\ 
\leq&
\( 1 + (\Sboms/\zetas)^{1/4} \) 
\(C/2\)
\frac{2\sigma_{\s,\M}(P_{M}x)_{1,\bom} 
      + \|P_{M}\hat{x}\|_{1,\bom} 
        - \|P_{M}x\|_{1,\bom}}{\sqrt{\Sboms}} \\
+&
 (D/2)\(1+(\Sboms/\zetas)^{1/4} \)
\gamma \nm{AP_{M}v}_{2}.
\end{split}
\label{eq:P_Mv2}
\end{equation}
Combining \eqref{eq:om2_bound_rewritten}, \eqref{eq:bound_AMv}, \eqref{eq:vl2_simple}, \eqref{eq:P_Mv2} and  now gives 
\begin{align*}
\|v\|_{2} \leq&
\(1 + (\Sboms/\zetas)^{1/4} \)\(C/2\) 
\frac{2\sigma_{\s,\M}(P_{M}x)_{1,\bom} 
      + \|P_{M}\hat{x}\|_{1,\bom} 
        - \|P_{M}x\|_{1,\bom}}{\sqrt{\Sboms}} \\
   &+(D/2)\(1+(\Sboms/\zetas)^{1/4} \)
\gamma \nm{AP_{M}v}_{2} + \frac{1}{\om_{r+1}} \|P_{K}^{M}v\|_{1,\bom} \\
\leq& 
\( 1 + (\Sboms/\zetas)^{1/4} \)\(C/2\) 
\frac{2\sigma_{\s,\M}(P_{M}x)_{1,\bom} 
      + \|P_{M}\hat{x}\|_{1,\bom} 
        - \|P_{M}x\|_{1,\bom}}{\sqrt{\Sboms}} \\
&+\((D/2)\(1+(\Sboms/\zetas)^{1/4} \) \gamma
\nm{AP_{K}^{M}}_{1 \to 2} +1\) \frac{\|P_{K}^{M}x\|_{1,\bom} +\|P_{K}^{M}\hat{x}\|_{1,\bom}}{\om_{r+1}}\\
   &+\(1+(\Sboms/\zetas)^{1/4} \) D\gamma\eta \\
\leq& 
\( 1 +(\Sboms/\zetas)^{1/4} \)\(C/2\) 
\frac{2\sigma_{\s,\M}(x)_{1,\bom} 
      + \|\hat{x}\|_{1,\bom} 
        - \|P_{K}x\|_{1,\bom}}{\sqrt{\Sboms}} \\
   &+\(1+(\Sboms/\zetas)^{1/4} \) D\gamma\eta
\end{align*}
Using that $\hat{x}$ is a minimizer of \eqref{eq:QCBP123} completes the proof. 
\end{proof} }

\subsection{G-RIPL implies weighted rNSPL}
\begin{theorem}
\label{thm:GRIPLimpliesrNSP}
Let $A \in \mathbb{C}^{m \times M}$ and let $G \in \mathbb{C}^{M \times
M}$be invertible. 
Let $\M\in \N^r$ be sparsity levels, $\s,\t\in \N^r$ be local sparsities and let $\bom \in \R^{r}$ be positive weights.   
Suppose that $A$ satisfies the G-RIPL of order
$(\t,\M)$ with constant $0 < \delta_{\t,\M} < 1$, where  
\begin{equation}
\label{eq:t_kcond}
t_{l} = \min\left\{ M_{l}-M_{l-1},  2
\ceil{\left( \frac{1+\delta}{1-\delta} \right)\frac{\kappa(G)^2}{\rho^2\om^{2}_{l}}  \Sboms}\right\}, 
\quad \text{for } l = 1, \ldots, r.
\end{equation}
Then $A$ satisfies the weighted rNSP in levels of order $(\s,\M)$ with constants
$0 < \rho < 1$ and $\gamma = \sqrt{2} \| G^{-1} \|_{2}$.
\end{theorem}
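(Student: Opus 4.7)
The plan is to adapt the classical ``RIP implies rNSP'' blueprint to the $G$-adjusted, weighted, leveled setting, using an augmented-support trick to avoid the usual difficulty of bounding the top tail block. Fix $x\in \C^{M}$ and $\Theta = \Theta_1 \cup \cdots \cup \Theta_r \in E_{\s,\M}$, and set $u_l = t_l - s_l$ for each $l = 1,\ldots,r$; note $u_l \geq s_l$ because $\Sboms \geq \om_l^2 s_l$ and $\kappa(G)\geq 1$ force $t_l \geq 2 s_l$. For each level $l$, sort the entries of $P^{M_{l-1}}_{M_l} x_{\Theta^c}$ in decreasing order of modulus and split the corresponding index set into consecutive blocks $T_{l,1}, T_{l,2}, \ldots$ of cardinality $u_l$ (the last possibly smaller). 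Because $|\Theta_l| + |T_{l,j}| \leq s_l + u_l = t_l$, we have $\Theta \cup T_{l,j} \in E_{\t,\M}$ for every $l,j$.

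The key step is to form the augmented support $S = \Theta \cup \bigcup_{l=1}^{r} T_{l,1} \in E_{\t,\M}$, so that $x_S \in \Sigma_{\t,\M}$. Since $\Theta \subseteq S$, we have $\nm{x_\Theta}_2 \leq \nm{x_S}_2 \leq \nm{G^{-1}}_{2}\nm{G x_S}_2$, reducing the problem to controlling $\nm{G x_S}_2$. Applying the G-RIPL lower bound to $x_S$ and splitting $x - x_S = \sum_{l}\sum_{j\geq 2} x_{T_{l,j}}$,
\[
\nm{G x_S}_2 \leq \frac{\nm{A x_S}_2}{\sqrt{1-\delta}} \leq \frac{1}{\sqrt{1-\delta}} \left( \nm{A x}_2 + \sum_{l=1}^{r}\sum_{j\geq 2} \nm{A x_{T_{l,j}}}_2 \right).
\]
Each $x_{T_{l,j}}$ lies in $\Sigma_{\t,\M}$ (supported in a single level $l$ with $|T_{l,j}|\leq u_l \leq t_l$), so the G-RIPL upper bound and $\nm{Gv}_2\leq\nm{G}_2\nm{v}_2$ give $\nm{A x_{T_{l,j}}}_2 \leq \sqrt{1+\delta}\,\nm{G}_2\,\nm{x_{T_{l,j}}}_2$.

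For the tail estimate, the sorting yields $\nm{x_{T_{l,j}}}_\infty \leq \nm{x_{T_{l,j-1}}}_1/u_l$ for $j\geq 2$ (each entry of $T_{l,j}$ is bounded by the minimum of $T_{l,j-1}$, which is at most its average $\nm{x_{T_{l,j-1}}}_1/u_l$), hence $\nm{x_{T_{l,j}}}_2 \leq \sqrt{u_l}\,\nm{x_{T_{l,j}}}_\infty \leq \nm{x_{T_{l,j-1}}}_1/\sqrt{u_l}$, and telescoping gives $\sum_{j\geq 2}\nm{x_{T_{l,j}}}_2 \leq \nm{P^{M_{l-1}}_{M_l}x_{\Theta^c}}_1/\sqrt{u_l}$. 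In any unsaturated level, condition \eqref{eq:t_kcond} enforces $u_l \geq \frac{1+\delta}{1-\delta}\cdot\frac{\kappa(G)^2\,\Sboms}{\rho^2\om_l^2}$, i.e.\ $1/\sqrt{u_l} \leq \sqrt{\frac{1-\delta}{1+\delta}}\cdot\frac{\rho\,\om_l}{\kappa(G)\sqrt{\Sboms}}$; saturated levels contribute no $j \geq 2$ blocks at all. Multiplying by $\sqrt{1+\delta}\nm{G}_2$, summing over $l$, and using $\nm{G}_2/\kappa(G) = 1/\nm{G^{-1}}_2$ collapses the tail to
\[
\sum_{l,j\geq 2}\nm{A x_{T_{l,j}}}_2 \leq \frac{\rho\,\sqrt{1-\delta}}{\nm{G^{-1}}_2\sqrt{\Sboms}}\,\nm{x_{\Theta^c}}_{1,\bom}.
\]
Substituting back yields $\nm{x_\Theta}_2 \leq \frac{\rho}{\sqrt{\Sboms}}\nm{x_{\Theta^c}}_{1,\bom} + \frac{\nm{G^{-1}}_2}{\sqrt{1-\delta}}\nm{Ax}_2$; specializing $\delta \leq 1/2$ as used by Theorem \ref{thm:GRIPL_imply_uniform} delivers $\gamma = \sqrt{2}\nm{G^{-1}}_2$.

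The main obstacle is precisely the augmented-support step. A direct estimate $\nm{A x_\Theta}_2 \leq \nm{Ax}_2 + \nm{A x_{\Theta^c}}_2$ would force one to bound $\nm{x_{T_{l,1}}}_2$, for which no tail estimate is available since $T_{l,1}$ is the \emph{top} block in its level. Absorbing each $T_{l,1}$ into $S$ circumvents this by shifting the analysis onto $\nm{G x_S}_2$, at the cost of enlarging the effective sparsity budget from $\s$ to $\t$ and requiring $\Theta \cup T_{l,1} \in E_{\t, \M}$; this is exactly what the condition $t_l \geq s_l + u_l$ hidden in \eqref{eq:t_kcond} secures. Everything else is bookkeeping of constants to match $\rho$ and $\gamma$ against the $\kappa(G)$, $\delta$ and $\om_l$ factors appearing in the lower bound on $u_l$.
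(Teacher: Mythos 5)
Your proposal is correct in substance and arrives at the same estimates as the paper, but by a somewhat different route, so a comparison is worthwhile. The paper sorts each level into blocks of uniform size $t_l/2$, takes $T_{\{0,1\}} = T_0 \cup T_1$ (so that $\Theta \subseteq T_0$ and $x_{T_{\{0,1\}}} \in \Sigma_{\t,\M}$), and bounds $\nm{A x_{T_{\{0,1\}}}}^2_2$ by expanding the quadratic form as $\ip{A x_{T_{\{0,1\}}}}{Ax} - \sum_{k \geq 2}\ip{A x_{T_{\{0,1\}}}}{A x_{T_k}}$ and dividing through. You instead augment $\Theta$ itself by one top complement block of size $u_l = t_l - s_l$ per level and use a plain triangle inequality $\nm{A x_S}_2 \leq \nm{Ax}_2 + \sum_{l,j\geq 2}\nm{A x_{T_{l,j}}}_2$. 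The inner-product expansion buys nothing here that the triangle inequality does not (the constants come out the same), and your per-level matching of $\om_l$ against $1/\sqrt{u_l}$ is in fact slightly tidier than the paper's aggregation of blocks across levels via $\min_{l}\{\om_l^2 t_l\}$. Your verification that \eqref{eq:t_kcond} forces $u_l \geq Q$ in unsaturated levels, where $Q = \frac{1+\delta}{1-\delta}\,\kappa(G)^2 \Sboms/(\rho^2 \om_l^2)$, is sound: $t_l = 2\ceil{Q}$ and $s_l \leq \Sboms/\om_l^2 \leq Q$ give $u_l = t_l - s_l \geq Q$, which is exactly the role the factor $2$ plays in the paper's half-block construction.

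Two caveats. First, your claim that saturated levels ($t_l = M_l - M_{l-1}$) contribute no $j \geq 2$ blocks is false for an arbitrary $\Theta \in E_{\s,\M}$ with $|\Theta_l| < s_l$: the level complement then has $M_l - M_{l-1} - |\Theta_l| > u_l$ entries, so $T_{l,2} \neq \emptyset$, while $u_l$ may be tiny (even $u_l = 1$ with $s_l$ large), and the $1/\sqrt{u_l}$ tail bound collapses there. The repair is the standard monotonicity reduction: it suffices to verify the rNSPL inequality for $\Theta$ consisting of the largest $s_l$ entries in each level, since that choice simultaneously maximizes $\nm{x_\Theta}_2$ and minimizes $\nm{x_{\Theta^c}}_{1,\bom}$; the paper makes this same reduction silently by defining $\Theta$ that way at the outset. (Relatedly, your side remark that $u_l \geq s_l$ can also fail in saturated levels, but as your argument shows it is never actually needed.) Second, you correctly observe that the argument yields $\gamma = \nm{G^{-1}}_2/\sqrt{1-\delta}$, and that $\gamma = \sqrt{2}\nm{G^{-1}}_2$ requires $\delta \leq 1/2$; the paper's own proof takes this last step without comment even though the theorem is stated for $0 < \delta_{\t,\M} < 1$, so your explicit flag is an improvement rather than a defect.
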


\begin{proof}
Let $x \in \mathbb{C}^{K}$ be such that $P_{M}^{\perp} x = 0$ and let 
$\Theta = \Theta_1 \cup \cdots \cup \Theta_r$, where $\Theta_{l}$ is the set of
the largest $s_l$ indices of $P^{M_{l-1}}_{M_l} x$ in absolute value.  
If $t_l = M_{l}-M_{l-1}$, let $T_{l,0} = \{M_{l-1}+1, \ldots, M_l\}$ and let 
$T_{l,k} = \emptyset$ for $k \geq 1$. For $t_l < M_{l}-M_{l-1}$ let 
$T_{l,0}$ be the
index set of the largest  $t_l/2$ values of $|P^{M_{l-1}}_{M_l} x|$, and let
$T_{l,1}$ be the index set of the next $t_l/2$ largest values and so forth.  In
the case where there are less than $t_l/2$ values left at iteration $k$, we let
$T_{l,k}$ be the remaining indices.  Let $T_{k} = T_{1,k} \cup \cdots \cup
T_{r,k}$ and let   $T_{\{0,1\}} = T_{0}\cup T_{1}$.  Since $\Theta \subseteq T_{\{0,1\}}$ we
have 
\begin{equation*}
\| x_{\Theta} \|^2_{2}\leq \| x_{T_{\{0,1\}}} \|^2_2 \leq \| G^{-1} \|^2_{2} \| G
x_{T_{\{0,1\}}} \|^2_{2}  \leq \frac{ \| G^{-1} \|^2_{2}}{1-\delta}  \| A x_{T_{\{0,1\}}}
\|^2_{2} 
\end{equation*}
where $\delta = \delta_{\t,\mb{M}}$.  Note that
\begin{equation*}
A x_{T_{\{0,1\}}} = A x - \sum_{k \geq 2} A x_{T_{k}},
\end{equation*}
Then
\begin{align*}
\| A x_{T_{\{0,1\}}} \|^2_{2} &= \ip{A x_{T_{\{0,1\}}}}{A x}
-  \sum_{k \geq 2} \ip{A
x_{T_{\{0,1\}}}}{A x_{T_{k}}} \\
& \leq \| A x_{T_{\{0,1\}}} \|_2 \| A x \|_{2} + \| A x_{T_{\{0,1\}}} \|_2
 \sum_{k \geq 2} \| A x_{T_{k}} \|_2 \\
& \leq \| A x_{T_{\{0,1\}}} \|_2 \| A x \|_{2} + \sqrt{1+\delta} 
\|A x_{T_{\{0,1\}}} \|_2  \sum_{k \geq 2} \| G x_{T_{k}} \|_2 \\
 &\leq \| A x_{T_{\{0,1\}}} \|_2 \| A x \|_{2} + \sqrt{1+\delta} \| G \|_{2} 
 \| A x_{T_{\{0,1\}}} \|_2  \sum_{k \geq 2} \| x_{T_{k}} \|_2
\end{align*}
Set $\Delta = \{ l\in \{1,\ldots,r\}: t_{l} < M_{l}-M_{l-1} \}$ and notice 
that $T_{l,k} = \emptyset$ for $l  \in \{1, \ldots, r\}\setminus \Delta$ and $k \geq 1$.
Thus for $k \geq 2$ we get 
\begin{align*}
\| x_{T_k} \|_{2}^{2} &= \sum_{l\in \Delta} \| x_{T_{l,k}} \|^2_2  \leq
\sum_{l\in \Delta} \frac{2\|x_{T_{l,k-1}}\|^2_{1}}{t_l}
= \sum_{l\in\Delta} \frac{2\|x_{T_{l,k-1}}\|^2_{1}\om_{l}^{2}}{t_l\om_{l}^{2}} \\ 
&\leq \frac{ \sum_{l\in\Delta} 2\| x_{T_{l,k-1}} \|_{1,\bom}^{2}
}{\min_{l\in \Delta} \{ \om_{l}^{2}t_l \} }  \leq \frac{2\|
x_{T_{k-1}} \|_{1,\bom}^{2}}{\min_{l\in \Delta} \{
\om_{l}^{2}t_l \} } .
\end{align*}
Therefore
\begin{align*}
\| A x_{T_{\{0,1\}}} \|_{2} & \leq \| A x \|_{2} + \frac{\sqrt{2(1+\delta)} \| G
\|_{2}}{\sqrt{\min_{l\in \Delta}\{\om_{l}^{2}t_l\}}} \sum_{k \geq 2} \| x_{T_{k-1}} \|_{1,\bom} \\
& \leq \| A x \|_{2} + \frac{\sqrt{2(1+\delta)} 
\| G \|_{2}}{\sqrt{\min_{l\in \Delta}\{\om_{l}^{2}t_l\}}} \|
x_{T^c_0} \|_{1,\bom} \\
& \leq \| A x \|_{2} + \frac{\sqrt{1+\delta} 
\| G \|_{2}}{\min_{l\in \Delta}\{\om_{l}\sqrt{t_l/2}\}}
\| x_{\Theta^{c}} \|_{1,\bom}.
\end{align*}
This results in
\begin{align*}
\| x_{\Theta} \|_{2} 
\leq& 
\sqrt{\frac{1+\delta}{1-\delta}} \| G \|_2 \| G^{-1}
\|_{2} \frac{\sqrt{\Sboms}}{\min_{l\in \Delta}
\{\om_{l} \sqrt{t_l/2} \}}  \frac{\| x_{\Theta^c}
\|_{1,\bom}}{\sqrt{\Sboms}} + \frac{\| G^{-1} \|_2}{\sqrt{1-\delta}}
\| A x \|_{2} \\
\leq& 
 \rho \frac{\| x_{\Theta^c}
\|_{1,\bom}}{\sqrt{\Sboms}} + \sqrt{2}\| G^{-1} \|_2
\| A x \|_{2} 
\end{align*}
which establishes the weighted rNSPL of order $(\s, \M)$ with $0 < \rho < 1$ and $\gamma =
\sqrt{2} \|G^{-1}\|_{2}$.
\end{proof}

\subsection{Proof of Theorem \ref{thm:GRIPL_imply_uniform}}
\begin{proof}[Proof of Theorem \ref{thm:GRIPL_imply_uniform}]
First notice that for $0 < \delta \leq 1/2$ we have 
\[ \frac{1+\delta}{1-\delta} \leq 3.\]
Hence using Theorem \ref{thm:GRIPLimpliesrNSP} with $0< \delta_{\t,\M} \leq \delta \leq 1/2$ and 
$\rho = \sqrt{3}/2$ we see that Equation 
\eqref{eq:t_kcond}, simplifies to
Equation \eqref{eq:min_tl}. This implies that 
$AP_M$ satisfies the weighted rNSPL of order $(\s,\M)$, with constants $\rho = \sqrt{3}/2$ and $\gamma =
\sqrt{2}\|G^{-1}\|_2$. 
Now since 
\[\om_{r+1}\geq
\sqrt{\Sboms}(\tfrac{1}{3}(1+(\Sboms/\zetas)^{1/4})^{-1} 
+2\sqrt{2}\|AP_{K}^{M}\|_{1\to 2} \|G^{-1}\|_2)\] we know
from Theorem \ref{lem:min_om_bound} that any solution $\hat{x}$ of 
\eqref{eq:QCBP_542} satisfies \eqref{eq:error_bound1} and \eqref{eq:error_bound2}.
\end{proof}

\subsection{Proof of Theorem \ref{t:RIPlevels1}}\label{s:proof}

\begin{proof}[{Proof of Theorem \ref{t:RIPlevels1}}]
    We recall that $U \in \Bs(\ell^2)$ is an isometry and that 
\begin{equation*}
A = \left [ \begin{array}{c} 1/\sqrt{p_1} P_{\Omega_1} U P_{M}\\ 1/\sqrt{p_2} P_{\Omega_2} U P_{M} \\ \vdots \\ 1/\sqrt{p_r} P_{\Omega_r} U P_{M} \end{array} \right ] \in \mathbb{C}^{m\times M},
\quad 
 \text{ where }\quad p_k = m_k/(N_k - N_{k-1}), 
\end{equation*}
and $m = m_1+\ldots + m_r$.
Note that
\begin{equation*}
\| A x \|^2 - \| G x \|^2 = \ip{(A^*A - G^* G) x}{x},
\end{equation*}
and therefore
\begin{equation*}
\delta = \sup_{\Theta \in E_{\mb{s},\mb{M}}} \| P_{\Theta} (A^* A - G^* G) P_{\Theta} \|_{2}.
\end{equation*}
Notice also that $p_k = 1$ and $\Omega_{k} = \{ N_{k-1}+1,\ldots,N_k\}$ for $k=1,\ldots,r_0$.  Next notice that the matrix $P_{\Omega_k}$ can be written as
\begin{equation*}
P_{\Omega_k} = \sum^{m_k}_{i=1} e_{t_{k,i}} e^*_{t_{k,i}},
\end{equation*}
where $\{ e_i \}^{\infty}_{i=1}$ is the standard basis on $\ell^2(\mathbb{N})$.  It now follows that
\begin{align*}
\label{AstarA_sum}
A^* A &= \sum^{r}_{k=1} \frac{1}{p_k} P_{M} U^* P_{\Omega_k} U P_{M}=
\sum^{r}_{k=1} \frac{1}{p_k} \sum^{m_k}_{i=1} P_{M} U^*e_{t_{k,i}}
e^*_{t_{k,i}} U P_{M} \\
&= P_{M} U^* P_{N_{r_0}} U P_{M}+ \sum^{r}_{k=r_0+1}
\sum^{m_k}_{i=1}  X_{k,i} X^*_{k,i},
\end{align*}
where $X_{k,i}$ are random vectors given by $X_{k,i} = \frac{1}{\sqrt{p_k}} P_{M} U^* e_{t_{k,i}}$.  Note that the $X_{k,i}$ are independent, and also that
\begin{align}
\mathbb{E}(A^*A) &= P_{M} U^* P_{N_{r_0}} U P_{M}+ \sum^{r}_{k=r_0+1} \sum^{m_k}_{i=1} \mathbb{E} \left ( X_{k,i} X^*_{k,i} \right ) \nn
\\
& = P_{M} U^* P_{N_{r_0}} U P_{M} + \sum^{r}_{k=r_0+1} \frac{m_k}{p_k(N_k-N_{k-1})} \sum^{N_k}_{j=N_{k-1}+1} P_{M} U^* e_j e^*_j U P_{M}\nn
\\
& = P_{M} U^* P_{N_{r_0}} U P_{M}+ P_{M} U^* P_{N_r}^{N_{r_0}} U P_{M}\nn
\\
& = P_{M} U^* P_{N} U P_{M} 
\\
& = G^2,
\label{AstarA_expectation}
\end{align}
where $G \in \mathbb{C}^{M \times M}$ is non-singular by assumption.
Let
\begin{equation*}
D_{\mb{s},\mb{M},G} = \left \{ \eta \in \C^{M} : \| G \eta \|_2 \leq 1, \ | \Supp(\eta) \cap \{ M_{k-1}+1,\ldots,M_k \} | \leq s_k,\ k=1,\ldots,r \right \}.
\end{equation*}
We now define the following seminorm on $\mathbb{C}^{M \times M}$:
\begin{equation*}
\tnm{B}_{\mb{s},\mb{M},G} := \sup_{z \in D_{\mb{s},\mb{M},G}} \left| \ip{B z}{z} \right|,
\end{equation*}
so that
\begin{equation*}
\delta_{\mb{s},\mb{M}} = \tnm{A^* A - G^* G }_{\mb{s},\mb{M}}.
\end{equation*}
Due to \eqref{AstarA_sum} and \eqref{AstarA_expectation}, we may rewrite this as
\begin{equation}
\label{delta_norm_equiv}
\delta_{\mb{s},\mb{M}}= \tnm{\sum^{r}_{k=r_{0}+1} \sum^{m_k}_{i=1} \left (
X_{k,i} X^*_{k,i} - \mathbb{E} (X_{k,i} X^*_{k,i}) \right ) }_{\mb{s},\mb{M}}.
\end{equation}
Having detailed the setup, the remainder of the proof now follows along very similar lines to that of \cite[Thm.\ 3.2]{Li17}.  Hence we only sketch the details.

The first step is to estimate $\mathbb{E} \left ( \delta_{\mb{s},\mb{M}} \right )$.  Using the standard techniques of symmetrization, Dudley's inequality, properties of covering numbers, and arguing as in \cite[Sec.\ 4.2]{Li17}, we deduce that
\begin{equation}
\label{delta_expectation_summary}
\mathbb{E} \left ( \delta_{\mb{s},\mb{M}} \right ) \leq D + D^2,\qquad D = C_1 \sqrt{\frac{r Q \nm{G^{-1}}^2_2 \log(2 \tilde{m}) \log(2M) \log(2s)}{m}}, 
\end{equation}
where $C_1 > 0$ is a universal constant, $\tilde{m} = \sum^{r}_{k=r_0+1} m_k$, and
\begin{equation}
\label{Q_def_summary}
Q = \max_{k=r_0+1,\ldots,r} \sum^{r}_{l=1} \frac{\mu_{k,l} s_l}{p_k}.
\end{equation}
In particular, 
\begin{equation*}
\mathbb{E} \left ( \delta_{\mb{s},\mb{M}} \right ) \leq \delta/2,
\end{equation*}
provided
\begin{equation}
\label{exp_bound_summary}
C_2 Q \nm{G^{-1}}^2_2 \delta^{-2} r \log(2 \tilde{m}) \log(2M) \log^2(2s) \leq 1,
\end{equation}
where $C_2> 0$ is a constant. 
Using this, Talagrand's theorem and using the fact that $\| P_N U P_M \|_2 \leq \| U \|_2 = 1$ (see \cite[Sec.\ 4.3]{Li17}) we deduce that 
\begin{equation*}
\mathbb{P}(\delta_{\mb{s},\mb{M}} \geq \delta ) \leq \exp \left ( -3 \delta^2 /
(8 (3+7\delta) Q \nm{G^{-1}}^2_2) \right ).
\end{equation*}
In particular,
\begin{equation*}
\mathbb{P}(\delta_{\mb{s},\mb{M}} \geq \delta ) \leq \epsilon,
\end{equation*}
provided
\begin{equation*}
\frac{80}{3}Q \nm{G^{-1}}^2_2 \delta^{-2} \log(\epsilon^{-1}) \leq 1.
\end{equation*}
Combining this with \eqref{Q_def_summary} and \eqref{exp_bound_summary}  now
completes the proof.

\end{proof}

\subsection{Proof of Corollary \ref{cor:overall} and Lemma \ref{lem:bal_prop}}

\begin{proof}[Proof of Corollary \ref{cor:overall}]

    We must ensure that all the conditions are met to be able to apply Theorem \ref{thm:GRIPL_imply_uniform}
    with $P_Kx$. 

    First notice that 
    for weights $\bom = (s_{1}^{-1/2}, \ldots, s_{r}^{-1/2}, \om_{r+1})$
    we have $\Sboms = r$ and $\zetas = 1$. Next we note that condition $(ii)$
    implies that $P_K x$ is a feasible point since $\|HP_K x - \tilde{y}\|_{2} 
    \leq \|HP_{K}^{\perp} x\|_{2} + \|e_1\|_{2} = \eta + \eta'$.

    Let $G = \sqrt{P_{M}U^*P_N UP_M}$. Combining condition $(i)$ and Lemma \ref{lem:bal_prop} gives 
    $\|G^{-1}\|_2 \leq 1/\sqrt{\theta}$ and since $\|G\|_{2} \leq 1$ we also have 
    $\kappa(G) = \|G\|_{2}\|G^{-1}\|_{2} \leq 1/\sqrt{\theta}$. Inserting the above 
    equalities and inequalities into the weight condition for $\om_{r+1}$ 
    in Theorem \ref{thm:GRIPL_imply_uniform} gives condition $(iii)$. 

    Next we must ensure that $AP_M$ satisfies the G-RIPL of order 
    $(\t,\M)$ with $\delta_{\t,\M} \leq 1/2$ where 
    \begin{equation}
        t_l = \min\left\{ M_{l}-M_{l-1}, 2\ceil{4\theta^{-1}rs_l} \right\}.
    \end{equation}
    According to Theorem \ref{t:RIPlevels1} this occurs if the $m_k$'s satisfies 
    condition $(iv)$. The error bounds \eqref{eq:error_bound1} and \eqref{eq:error_bound2}
    now follows directly from Theorem \ref{thm:GRIPL_imply_uniform}. 
\end{proof}

\begin{proof}[Proof of lemma \ref{lem:bal_prop}]
First notice that the balancing property is equivalent to requiring 
\begin{equation}
\sigma_M(P_N U P_M) \geq \sqrt{\theta}    
\label{eq:bl_prop_eq}
\end{equation}
where $\sigma_{M}(P_{N} UP_M)$ is the $M$th largest singular value of $P_NU
P_M$.  Indeed, since $U$ is an isometry, the matrix $P_M - P_M U^* P_N UP_M$ is
nonnegative definite, and therefore 
\begin{align}
    \|P_MU^*P_NU_PM -P_M\|_{2} 
&= \sup_{x \in \C^M, \|x\|_{2}\leq 1} \ind{(P_MU^*P_NU_PM -P_M)x,x} \\
&= \sup_{x \in \C^M, \|x\|_{2}\leq 1} \( \|P_Mx\|_2 - \|P_NUP_Mx\|_{2} \) \\
&= 1 - \inf_{x \in \C^M, \|x\|_{2}= 1} \|P_NUP_Mx\|_2
\label{eq:bal_prop2}
\end{align}
This gives \eqref{eq:bl_prop_eq}. Next let $G = \sqrt{P_M U^*P_N UP_M}$ and
notice that $\sigma_M(G) = \sigma_{M}(P_N UP_M)$. 
This gives $\|G^{-1}\|_{2} =
1/\sigma_{M}(G) \leq 1/\sqrt{\theta}$. 
\end{proof}

\section{Proof of results in Section \ref{sec:4}}
\label{sec:coher}
In Section \ref{sec:4} we found concrete recovery guarantees for the Walsh
sampling and wavelet reconstruction, using the theorems in Section
\ref{sec:infin_dim}. The key to deriving Walsh-wavelet recovery guarantees
boils down to estimating the quantities $\mu_{k,l}$, $||HP_{K}^{M}||_{1 \to 2}$ and
$||G^{-1}||_{2} \leq \frac{1}{\sqrt{\theta}}$. All of these quantities depend 
directly $U = [\Bwh, \Bwave^{J_0,\nu}]$, and to control them  we will have to 
estimate how the entries of $U$ changes for varying $n, j,k$ and $s$. We will 
therefore start this section by setting up notation for
wavelets on the interval and stating some useful properties of Walsh functions.
Then in Section \ref{subsec:mukl} and \ref{subsec:loc_coherence} we will estimate $\mu_{k,l}$,
followed by a discussion of the sharpness of this estimate for $\nu=2$ in
Section \ref{subsec:sharpness_coher}. We will then finish in Section \ref{sec:proof2}
by estimating $||HP_{K}^{M}||_{1\to 2}$, show how $\theta$ scales for varying
$M$ and $N$, and prove Theorem \ref{thm:1d_GRIPL} and \ref{thm:1d_GRIPL_rec}.

\subsection{Wavelets on the interval and regularity}
In section \ref{subsec:wavelets1} we introduced orthogonal wavelets on the 
real line, but we did not make any formal definitions of the wavelets we 
used at the boundaries of the interval $[0,1)$. Next we consider the two
boundary extensions, \emph{periodic} and \emph{boundary wavelets}. To simplify the 
exposition we define the following sets
\begin{align*}
    &\Lanul \coloneqq \{0, \ldots, \nu-1\}, 
    &&\Lanum \coloneqq \{\nu, \ldots, 2^{j}-\nu-1\}, \\
    &\Lanur \coloneqq \{ 2^{j}-\nu, \ldots, 2^{j} - 1 \}
    && \La_{j} = \Lanul \cup \Lanum \cup \Lanur
\end{align*}

At each scale $j \geq J_0$, the periodic wavelet basis consists of the usual
wavelets and scaling functions  $\psi_{j,k}$, $\phi_{j,k}$ for $k \in \Lanum$
and the periodic extended functions $\phi_{j,k}^{\text{per}}$ and
$\psi_{j,k}^{\text{per}}$ for $k \in \Lanul \cup \Lanur $. These are defined as
 \begin{align}
     \phi_{j,k}^{\text{per}} &\coloneqq \phi_{j,k}\lvert_{[0,1)} + \phi_{j,2^j + k}\lvert_{[0,1)} 
     && \text{for } k \in \Lanul \\
     \phi_{j,k}^{\text{per}} &\coloneqq \phi_{j,2^j-\nu-k}\lvert_{[0,1)} + \phi_{j,k}\lvert_{[0,1)} 
     && \text{for } k \in \Lanur 
\end{align}
and similarly for $\psi_{j,k}^{\text{per}}$. Strictly speaking we could have
defined these periodic extensions only for $k = 0, \ldots, \nu-2$ and $k =
2^{j}-\nu+1, \ldots, 2^j-1$, but to unify the notation for both boundary
extensions we have chosen the former. 

Next we have the boundary wavelet basis with $\nu$ vanishing
moments. This wavelet basis consists of the same interior wavelets as the
periodic basis, but with $2\nu$ boundary scaling and wavelet functions.
\[
    \phi_{k}^{\textnormal{left}},  \phi_{k}^{\textnormal{right}},
    \psi_{k}^{\textnormal{left}},  \psi_{k}^{\textnormal{right}}, 
    ~~~ \textnormal{ for } k = 0,\ldots,\nu-1.
\]
As for the interior functions we also define the scaled versions as 
\begin{equation}
    \label{eq:db_bw_scale}
    \begin{array}{ll}
\phi_{j,k}^{\textnormal{left}}(x) \coloneqq 2^{j/2}\phi_{k}^{\text{left}} (2x), 
&\phi_{j,k}^{\textnormal{right}}(x) \coloneqq 2^{j/2}\phi_{k}^{\text{right}}
    (2x), \\
\psi_{j,k}^{\textnormal{left}}(x) \coloneqq 2^{j/2}\psi_{k}^{\text{left}} (2x), 
&\psi_{j,k}^{\textnormal{right}}(x) \coloneqq 2^{j/2} \psi_{k}^{\text{right}}
        (2x).\\ 
    \end{array}
\end{equation}
The names 'left' and 'right' corresponds to the support of these functions. That is 
\begin{align*}
    \Supp \phi_{j,k}^{\text{left}}  = [0, 2^{-j}(\nu+k)] \\ 
    \Supp \phi_{j,k}^{\text{right}} = [2^{-j}(2^j -\nu - k),1]  
\end{align*}
for $k = 0, \ldots, \nu-1$.

In the following we shall see that all of our results holds for both periodic
and boundary wavelets, but their treatment in some of the proofs differs slightly.
To make the treatment as unified as possible we make the following definition.
\begin{definition}
    We say that $\phi_{j,k}^{s}$, $s \in \{0,1\}$ \textquote{originates from a periodic
    wavelet} if 
    \begin{align*}
        \phi_{j,k}^{0} &\coloneqq
        \begin{cases} 
            \phi_{j,k}^{\text{per}} & \text{for } k \in \Lanul \\
            \phi_{j,k} & \text{for } k \in \Lanum  \\
            \phi_{j,k}^{\text{per}} & \text{for } k \in \Lanur  \\
        \end{cases}, & 
        \phi_{j,k}^{1} &\coloneqq 
        \begin{cases} 
            \psi_{j,k}^{\text{per}} & \text{for } k \in \Lanul\\
            \psi_{j,k} & \text{for } k \in \Lanum \\
            \psi_{j,k}^{\text{per}} & \text{for } k \in \Lanur \\
        \end{cases}. 
    \end{align*}
    We say that $\phi_{j,k}^{s}$ \textquote{originates from a boundary wavelet} if 
    \begin{align*}
        \phi_{j,k}^{0} &\coloneqq
        \begin{cases} 
            \phi_{j,k}^{\text{left}} & \text{for } k \in \Lanul \\
            \phi_{j,k} & \text{for } k \in \Lanum  \\
            \phi_{j,2^j -1 - k}^{\text{right}} & \text{for } k \in \Lanur  \\
        \end{cases}, & 
        \phi_{j,k}^{1} &\coloneqq 
        \begin{cases} 
            \psi_{j,k}^{\text{left}} & \text{for } k \in \Lanul\\
            \psi_{j,k} & \text{for } k \in \Lanum \\
            \psi_{j,2^j -1 - k}^{\text{right}} & \text{for } k \in \Lanur \\
        \end{cases}. 
    \end{align*}
\end{definition}
With these functions defined now for both boundary extensions, the definition
of $\Bwave^{J_0, \nu}$ is also clear. Next we make a note on the
regularity of these orthogonal wavelets.

\begin{definition}
   Let $\alpha = k + \beta$, where $k \in \Z_{+}$ and $0< \beta < 1$. A function 
$f\colon \R \to \R$ is said to be uniformly Lipschitz $\alpha$ if $f$ is $k$-times continuously differentiable and for which
the $k^{\text{th}}$ derivative $f^{(k)}$ is H\"older continuous with exponent $\beta$, i.e.
\[ |f^{(k)}(x) - f^{(k)}(y)| < C |x-y|^{\beta}, \quad \forall x,y \in \R \]
for some constant $C > 0$.  
\end{definition}

In particular the Daubechies wavelet with 1 vanishing moment (i.e., the Haar
wavelet) is not uniformly Lipschitz as it is not continuous, whereas for $\nu
\geq 2$ we have the constants found in table \ref{table:lipschitz_reg} \cite[239]{Daubechies92}.
For large $\nu$, $\alpha$ grows as $0.2\nu$ \cite[294]{Mallat09}. Also note
that each of the boundary functions $\phi_{k}^{\text{left}},
\phi_{k}^{\text{right}}$ and $\psi^{\text{left}}_{k},\phi_{k}^{\text{right}}$
are constructed as finite linear combinations of the interior scaling function
$\phi$ and wavelet $\psi$. Thus all of these boundary functions has the same
regularity as $\phi$ and $\psi$. 
\begin{table}[htb]
    \begin{center}
    \begin{tabular}{cc}
        $\nu$ & $\alpha$ \\ \hline 
        2 & 0.55 \\ 
        3 & 1.08 \\
        4 & 1.61 
    \end{tabular}
    \end{center}
    \vspace*{-\baselineskip}
    \caption{The Lipschitz regularity of Daubechies wavelets with $\nu$ vanishing 
    moments.}
    \label{table:lipschitz_reg}
\end{table}

\subsection{Properties of Walsh functions}
\begin{definition}
    Let $x = \{x_i\}_{i=1}^{\infty}$ and $y = \{y_i\}_{i=1}^{\infty}$ be 
    sequences consisting of only binary numbers. That is $x_i, y_i \in \{0,1\}$
    for all $i \in \N$. The operation $\oplus$ applied to these sequences gives
    \begin{equation}
        x \oplus y \coloneqq \{ |x_i - y_i|\}_{i=1}^{\infty}. 
    \end{equation}
    For two binary numbers $x_i, y_i \in \{0,1\}$, we let $x_i \oplus y_i = 
    |x_i-y_i|$. 
\end{definition}

\begin{proposition}
    For $j, m,n \in \Z_+$ and $x,y \in [0,1)$, the Walsh function satisfies the 
    the following properties 
    \begin{align}
        \int_{0}^{1} w_n(x)w_m(x) \d x &= \begin{cases} 
                                            1 & \textnormal{if } m = n \\
                                            0 & \textnormal{otherwise}
                                         \end{cases} 
                                         \label{eq:ortho}\\
        w_{n}(x \oplus y) &= w_n(x)w_n(y) \label{eq:oplus1} \\ 
        w_n(2^{-j}x) &= w_{\floor{n/2^j}}(x) \label{eq:floor}
    \end{align}
\end{proposition}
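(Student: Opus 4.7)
The plan is to prove the three identities in the order (\ref{eq:floor}), (\ref{eq:oplus1}), (\ref{eq:ortho}), since the orthogonality in (\ref{eq:ortho}) follows most cleanly from a companion product formula combined with (\ref{eq:oplus1}). Each argument is a direct manipulation of the dyadic expansions from Definition \ref{def:walsh_func}; the only subtle point is the handling of dyadic rationals, which form a Lebesgue-null set and therefore do not affect the integral in (\ref{eq:ortho}).

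For (\ref{eq:floor}), if $x = \sum_{i \geq 1} x_i 2^{-i}$ then $2^{-j} x$ has dyadic digits $\tilde{x}_i = 0$ for $i \leq j$ and $\tilde{x}_i = x_{i-j}$ for $i > j$. Plugging this into Definition \ref{def:walsh_func} and reindexing with $k = i - j$, the exponent of $-1$ in $w_n(2^{-j}x)$ becomes $\sum_{k \geq 1} (n_{k+j} + n_{k+j+1}) x_k$. This matches the exponent in $w_{\lfloor n/2^j \rfloor}(x)$, because the $k$th dyadic bit of $\lfloor n/2^j \rfloor$ is precisely $n_{k+j}$. For (\ref{eq:oplus1}), the key algebraic observation is that for binary digits $x_i, y_i \in \{0,1\}$ one has $|x_i - y_i| \equiv x_i + y_i \pmod{2}$, hence $(-1)^{(n_i + n_{i+1})|x_i - y_i|} = (-1)^{(n_i + n_{i+1}) x_i}\, (-1)^{(n_i + n_{i+1}) y_i}$, and taking the product over $i$ yields $w_n(x \oplus y) = w_n(x) w_n(y)$.

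For (\ref{eq:ortho}), I first derive the companion identity $w_n(x) w_m(x) = w_{n \oplus m}(x)$, where $n \oplus m$ denotes bitwise XOR of the dyadic expansions of the integers $n$ and $m$; the proof is the same $(-1)^{a+b}$ manipulation as in (\ref{eq:oplus1}) with the roles of frequency and argument interchanged. It then suffices to show $\int_0^1 w_k(x) \d x = \delta_{k,0}$. The case $k = 0$ (i.e.\ $m=n$) is trivial because $w_0 \equiv 1$. For $k \geq 1$, I exploit that $x \mapsto x \oplus y$ is Lebesgue-measure preserving on $[0,1)$ (modulo the null set of dyadic rationals), so combined with (\ref{eq:oplus1}),
\[
\int_0^1 w_k(x) \d x = \int_0^1 w_k(x \oplus y) \d x = w_k(y) \int_0^1 w_k(x) \d x
\]
for every admissible $y$. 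Since $k \geq 1$, there is always some index $j^\ast$ with $k_{j^\ast} + k_{j^\ast + 1}$ odd (take either the smallest nonzero bit and the position below it, or the largest nonzero bit and the position above it), and choosing $y = 2^{-j^\ast}$ then gives $w_k(y) = -1$, forcing the integral to vanish. The only genuine obstacle here is justifying measure-preservation of $x \mapsto x \oplus y$; identifying $[0,1)$ with $\{0,1\}^{\N}$ via dyadic expansion reduces this to the routine fact that coordinatewise bit-flipping preserves the Bernoulli-$\tfrac{1}{2}$ product measure, and the dyadic-rational ambiguity sits inside a null set that does not affect either (\ref{eq:ortho}) or the change-of-variables step.
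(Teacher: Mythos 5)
Your proof is correct, but it takes a genuinely more self-contained route than the paper: the paper's proof of this proposition is essentially a citation, stating that the orthogonality relation and the product identity $w_n(x \oplus y) = w_n(x)w_n(y)$ can be found in any standard text on Walsh functions (e.g.\ Golubov et al.), and disposing of the scaling identity $w_n(2^{-j}x) = w_{\floor{n/2^j}}(x)$ with the one-line remark that it follows by inserting $j$ zeros in front of the dyadic expansion of $x$. Your treatment of the scaling identity is precisely the formalization of that remark (the reindexing $k = i - j$ together with the observation that the $k$th dyadic bit of $\floor{n/2^j}$ is $n_{k+j}$), so there you agree with the paper in substance. For the other two identities you supply what the paper outsources: the mod-$2$ computation $|x_i - y_i| \equiv x_i + y_i \pmod{2}$ yields the product identity directly from Definition \ref{def:walsh_func}, and your orthogonality argument is the standard character-theoretic one --- the dual product formula $w_n(x)w_m(x) = w_{n \oplus m}(x)$ reduces everything to $\int_0^1 w_k(x) \d x = \delta_{k,0}$, which you obtain from measure-preservation of $x \mapsto x \oplus y$ (correctly justified via the Bernoulli product-measure identification, with the dyadic rationals handled as a null set) and the existence of $y$ with $w_k(y) = -1$. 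What your approach buys is a complete verification from first principles, at the cost of length; what the paper's approach buys is brevity by deferring to the literature. One small slip that creates no gap: for odd $k$ your first suggested choice of $j^\ast$ (the smallest nonzero bit \emph{and the position below it}) does not exist, since the smallest nonzero bit is then $k_1$ and there is no lower position in the exponent sum; but your alternative choice, $j^\ast$ equal to the index of the largest nonzero bit, gives $k_{j^\ast} + k_{j^\ast+1} = 1 + 0$, which is odd, and this choice always works, so the argument is complete as stated.
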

\begin{proof}
    Equation \eqref{eq:oplus1} and \eqref{eq:ortho} can be
    found in any standard text on Walsh functions e.g., \cite{Golubov91},
    whereas the last follows by inserting $j$ zeros in front of $x$'s
    dyadic expansion. 
\end{proof}


\subsection{Bounding the inner product $|\indi{\phi_{j,k}^{s},w_n}|$}
\label{subsec:mukl}
The entries in $U = [\Bwh, \Bwave^{J_0, \nu}]$, consists of $\indi{\phi_{j,k}^{s},w_n}$
for different values of $j,k,s$ and $n$. Thus in order to determine the local 
coherences we need to find an upper bound of this inner product. Next we 
derive such an bound for $\nu \geq 2$ vanishing moments and discusses its
sharpness.  For $\nu=1$ we determine the magnitude of each matrix 
entry explicitly.

\begin{lemma}
    \label{lem:ind_wave_had_1}
     Let $w_n \in \Bwh$ and let $\phi_{j,k}^{s} \in \Bwave^{J_0, \nu}$ for $\nu
    \geq 2$. For $j \geq J_{0}$, $s\in \{0,1\}$ and $k \in \La_j$ we have
        \begin{align}
            \left|\ind{\phi_{j,k}^{s}, w_n} \right| \leq 2^{-j/2} 2\nu \max_{l
            \in \Gamma_k}
            \left\{ \left|\Ws \phi^s(\cdot + l)\big\lvert_{[0,1)}
            \( \floor{\frac{n}{2^{j}}}\) \right| \right\}
            \label{eq:firstInnerprod}
        \end{align}
        where
        \begin{align*}
            \Gamma_k = \begin{cases} 
             \{0, \ldots, \nu+k-1\}& \textnormal{for }
             k \in \Lanul;  \\
             \{-\nu+1, \ldots, \nu - 1\}& \textnormal{for } 
             k \in \Lanum; \\
             \{k-\nu+1, \ldots, 2^j - 1\}& \textnormal{for }   
                      k \in \Lanur. 
            \end{cases}
        \end{align*}
        if $\phi_{j,k}^{s}$ originates from a boundary wavelet and 
        \[ \Gamma_k = \{ -\nu+1, \ldots, \nu-1\}\]
        if $\phi_{j,k}^{s}$ originates from a periodic wavelet.
\end{lemma}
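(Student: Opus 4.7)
My plan centers on a single change-of-variables argument that works uniformly across the three regimes for $k$. For the interior case $k \in \Lanum$ we have $\phi_{j,k}^{s} = 2^{j/2}\phi^{s}(2^{j}\cdot - k)$ with support inside $(0,1)$, so substituting $y = 2^{j}x - k$ gives
\[
\ind{\phi_{j,k}^{s}, w_{n}} = 2^{-j/2}\int_{-\nu+1}^{\nu}\phi^{s}(y)\, w_{n}(2^{-j}(y+k))\,dy.
\]
I would then split this integral into unit intervals by writing $y = l' + z'$ with $l' \in \{-\nu+1,\ldots,\nu-1\}$ and $z' \in [0,1)$. On each piece, because the dyadic expansion of $2^{-j}(l'+k)$ terminates before position $j+1$ while that of $2^{-j}z'$ begins at position $j+1$, the addition $2^{-j}(l'+k) + 2^{-j}z'$ carries no bits, so \eqref{eq:oplus1} together with \eqref{eq:floor} gives
\[
w_{n}(2^{-j}(l'+k+z')) = w_{n}(2^{-j}(l'+k))\, w_{\lfloor n/2^{j}\rfloor}(z').
\]
Factoring the modulus-one factor $w_{n}(2^{-j}(l'+k))$ out of the $z'$-integral identifies what remains as $\Ws(\phi^{s}(\cdot + l')\lvert_{[0,1)})(\lfloor n/2^{j}\rfloor)$. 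Applying the triangle inequality over the $2\nu-1$ values of $l'$ yields the stated bound with $\Gamma_{k} = \{-\nu+1,\ldots,\nu-1\}$.

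For $\phi_{j,k}^{s}$ originating from a periodic wavelet with $k \in \Lanul \cup \Lanur$, I would apply the above scheme to each of the (at most two) translates in the definition of $\phi_{j,k}^{\mathrm{per}}$ separately. Because the integration window $[0,1)$ cuts each translate differently, the indices $l'$ arising from the two pieces are cyclic relabellings (mod $2^{j}$) of indices in $\{-\nu+1,\ldots,\nu-1\}$, so both pieces produce terms indexed by the same $\Gamma_{k} = \{-\nu+1,\ldots,\nu-1\}$ and the interior argument transfers. For boundary wavelets, I would apply the change of variables $y = 2^{j}x$ directly to $\phi_{j,k}^{\mathrm{left}}$ (respectively $\phi_{j,k}^{\mathrm{right}}$), obtaining an integral over $[0,\nu+k]$ (resp.\ $[2^{j}-\nu-k,2^{j}]$), which partitions into exactly $\nu+k$ (resp.\ $2^{j}-k+\nu-1$) unit intervals indexed by $l \in \{0,\ldots,\nu+k-1\}$ (resp.\ $\{k-\nu+1,\ldots,2^{j}-1\}$), matching the stated $\Gamma_{k}$.

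The main obstacle is reconciling the bound in the CDV boundary case: after the change of variables and integer splitting, each unit-interval integral involves the boundary-corrected function rather than the interior $\phi^{s}$ that appears on the right-hand side of the lemma. I would expand $\phi_{k}^{\mathrm{left}}$ (and $\phi_{k}^{\mathrm{right}}$) into its finite combination of translates $\phi^{s}(\cdot - l)\lvert_{[0,\infty)}$ provided by the Cohen--Daubechies--Vial construction and regroup by integer shift, verifying that every surviving term is indeed of the form $\Ws(\phi^{s}(\cdot + l)\lvert_{[0,1)})(\lfloor n/2^{j}\rfloor)$ for some $l \in \Gamma_{k}$ and that the explicit constant $2\nu$ suffices to absorb both the combinatorial count of shifts and the CDV combination coefficients. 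In every case $|\Gamma_{k}| \leq 2\nu - 1$, leaving one term of slack so the factor $2\nu$ on the right comfortably closes the estimate.
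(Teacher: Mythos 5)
Your treatment of the interior case $k \in \Lanum$, and of the periodic extension, is correct and is essentially the paper's own argument: the paper proves the lemma by the same change of variables, splitting into unit intervals, the no-carry identity $2^{-j}(l+k) + 2^{-j}x = 2^{-j}(l+k) \oplus 2^{-j}x$, the properties \eqref{eq:oplus1} and \eqref{eq:floor}, and a triangle inequality over at most $2\nu-1$ terms; it then dispatches the boundary cases with the remark that they are ``analogous'' and that both extensions have support of length less than $2\nu$. Your bookkeeping for the periodic case (the wrapped translate contributes the indices $l' < -k$, which are relabellings mod $2^j$, so the two pieces together produce each $l' \in \{-\nu+1,\ldots,\nu-1\}$ at most once) is a correct filling-in of that remark.

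The gap is in your third paragraph, on the Cohen--Daubechies--Vial boundary case, and it stems from a misreading of the statement. In the boundary case the lemma's $\phi^s$ on the right-hand side of \eqref{eq:firstInnerprod} denotes the boundary-corrected function itself, not the interior scaling function or wavelet: note that $\Gamma_k = \{0,\ldots,\nu+k-1\}$ is precisely the set of unit intervals covering $\Supp \phi_{k}^{\textnormal{left}} = [0,\nu+k]$ (and similarly on the right), and the paper's subsequent proof of Theorem \ref{thm:bound_ip} invokes the Lipschitz regularity of the \emph{boundary} functions when bounding exactly these terms, which would be unnecessary if only interior translates appeared. Under that reading no expansion is needed: the change of variables $y = 2^j x$, the splitting over $l \in \{0,\ldots,\nu+k-1\}$ (resp.\ $\{k-\nu+1,\ldots,2^j-1\}$), and the same no-carry factorization yield at most $2\nu-1$ terms of the form $\Ws \phi_{k}^{\textnormal{left}}(\cdot + l)\big\lvert_{[0,1)}\(\floor{n/2^j}\)$ directly. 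Under \emph{your} reading the argument breaks quantitatively: expanding $\phi_{k}^{\textnormal{left}}$ in the CDV combination of restricted translates of $\phi$ and applying the triangle inequality produces the $\ell^1$-norm of the expansion coefficients as a multiplicative factor, and there is no reason this is absorbed by $2\nu$ --- the pre-orthonormalization combination involves binomial coefficients up to order $\binom{2\nu-2}{\nu-1}$, which grow exponentially in $\nu$, and the Gram--Schmidt orthonormalization coefficients carry no universal bound either. Your ``one term of slack'' from $|\Gamma_k| \leq 2\nu-1$ cannot close this. (One point in your favour: since the CDV restriction cuts at the integer $0$, it is compatible with your unit-interval splitting, so the resulting terms do have the stated form $\Ws \phi^s(\cdot+l)\lvert_{[0,1)}$; it is only the constant that fails.) The fix is simply to adopt the paper's interpretation and run the interior argument verbatim on the boundary function over its own support.
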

\begin{proof}
First notice that for any $x \in [0,1)$ we have
    \begin{align*}
        \frac{x}{2^{j}} + \frac{k}{2^{j}} 
        &= \sum_{i = j}^{\infty} x_{i-j+1} 2^{-i-1}
        + \sum_{i=1}^{j} k_i 2^{-j-1+i}\\
        &= \sum_{i = j}^{\infty} x_{i-j+1} 2^{-i-1}
          \oplus \sum_{i=1}^{j} k_i 2^{-j-1+i}
        = \frac{x}{2^{j}} \oplus \frac{k}{2^{j}} .
    \end{align*}
Next, we only consider the interior wavelets $\phi_{j,k}^{s}$ i.e. $k\in
\Lanum$. For $k \in \Lanul \cup \Lanur$, we need to handle the two cases 
where $\phi_{j,k}^{s}$ orignates from a periodic and boundary wavelet
seperately. The arguments/calculations for the two different boundary extensions are analogous. Also, both of these extensions will have support less than $2\nu$. 
    
For $k \in \Lanum$, notice that $\supp (\phi_{j,k}^{s}) =
[2^{-j}(-\nu+1+k), 2^{-j}(\nu+k)]$. 
    
    { \allowdisplaybreaks
    \begin{align*}
        \ind{\phi^{s}_{j,k}, w_n} &= \int_{0}^{1} \phi_{j,k}^{s}(x)w_n(x)\d x\\ 
        &= \int_{2^{-j}(-\nu+1+k)}^{2^{-j}(\nu+k)} 2^{j/2} \phi^{s}(2^jx-k) w_n(x) \d x \\
        &=  2^{-j/2} \int_{-\nu+1}^{\nu} 
                  \phi^{s}\(x\) w_n\(\frac{x+k}{2^{j}}\) \d x \\
        &= 2^{-j/2} \sum_{l=-\nu+1}^{\nu-1} \int_{0}^{1} 
                  \phi^{s}\(x+l\) w_n\(\frac{x+l+k}{2^{j}}\) \d x \\
        &= 2^{-j/2} \sum_{l=-\nu+1}^{\nu-1} \int_{0}^{1} 
                  \phi^{s}\(x+l\) w_n\(\frac{x}{2^{j}} \oplus\frac{l+k}{2^{j}}\) \d x \\
        &= 2^{-j/2} \sum_{l=-\nu+1}^{\nu-1} w_n \(\frac{l+k}{2^{j}}\)  \int_{0}^{1} 
                  \phi^{s}\(x+l\) w_n\(\frac{x}{2^{j}}\) \d x \\
        &= 2^{-j/2} \sum_{l=-\nu+1}^{\nu-1} w_n \( \frac{l+k}{2^{j}}\)
          \Ws \phi^{s}_{0,-l}\big\lvert_{[0,1)} \(\floor{\frac{n}{2^{j}}}\)
            \label{eq:sum_wal_phi} \\
        &\leq 2^{-j/2} 2\nu \max_{l
            \in \Gamma_k}
            \left\{ \left|\Ws \phi^s(\cdot + l)\big\lvert_{[0,1)}
            \( \floor{\frac{n}{2^{j}}}\) \right| \right\}
    \end{align*}
    }
\end{proof}

\begin{lemma}[\cite{Butzer75}]
    \label{lem:regularity_bound}
    Let $f:[0,1) \to \R$ be uniformly Lipschitz $0 < \alpha \leq 1$ then 
    \[ |\Ws f(n) | = \left| \int_{0}^{1} f(x) w_n(x) \d x \right| \lesssim (n+1)^{-\alpha}  \]
    for $n\in
    \Z_+$. 
\end{lemma}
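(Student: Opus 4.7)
The plan is to exploit the standard dyadic block structure of the Walsh system together with the cancellation provided by the H\"older regularity of $f$. For $n=0$ the estimate is trivial since $|\Ws f(0)| \leq \|f\|_\infty$, so from here on assume $n \geq 1$ and let $j \geq 0$ be the unique integer with $2^j \leq n < 2^{j+1}$. The key fact about Walsh functions I will invoke is that for such $n$, $w_n$ has vanishing mean on every dyadic interval $I_{k} = [k 2^{-j}, (k+1)2^{-j})$ for $k = 0, \ldots, 2^j-1$. This follows from the product decomposition \eqref{eq:oplus1} and the scaling identity \eqref{eq:floor}: writing $n = 2^j + m$ with $0 \leq m < 2^j$, one has $w_n(x) = w_{2^j}(x) w_m(x)$ on $[0,1)$ up to dyadic rationals, $w_m$ is constant on each $I_k$, and $w_{2^j}$ has mean zero on each such $I_k$ since it is the Rademacher-like function changing sign exactly once on each $I_k$.

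Given this cancellation, I would decompose
\[
\Ws f(n) = \sum_{k=0}^{2^j - 1} \int_{I_k} f(x) w_n(x) \d x
         = \sum_{k=0}^{2^j - 1} \int_{I_k} \bigl(f(x) - f(x_k)\bigr) w_n(x) \d x,
\]
where $x_k \in I_k$ is any fixed point (e.g.\ the left endpoint). Since $f$ is uniformly Lipschitz $\alpha$ with $0<\alpha \leq 1$, there is a constant $C>0$ with $|f(x) - f(x_k)| \leq C|x - x_k|^\alpha \leq C \cdot 2^{-j\alpha}$ for $x \in I_k$. Using $|w_n| \equiv 1$ and $|I_k| = 2^{-j}$, each summand is bounded by $C \cdot 2^{-j(1+\alpha)}$, and summing $2^j$ such terms yields
\[
|\Ws f(n)| \leq C \cdot 2^{-j\alpha}.
\]

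Since $2^j \leq n < 2^{j+1}$ gives $2^{-j\alpha} \lesssim (n+1)^{-\alpha}$, the bound follows. The only real obstacle is pinning down the vanishing-mean property of $w_n$ on dyadic intervals of length $2^{-j}$ for $2^j \leq n < 2^{j+1}$ rigorously from Definition~\ref{def:walsh_func}; this is a classical identity in the Walsh literature \cite{Golubov91} and can be verified by writing both $n$ and $x$ in their dyadic expansions and tracking which bits of $x$ affect the sign of $w_n(x)$ on a given $I_k$. Once this cancellation is in hand, the rest is routine piecewise H\"older estimation. Since the statement is precisely the content of \cite[Thm.]{Butzer75}, one may alternatively simply cite that reference.
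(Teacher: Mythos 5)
Your proof is correct; note, however, that the paper offers no argument at all for this lemma --- it is stated with a bare citation to \cite{Butzer75} --- so what you have produced is a self-contained derivation of a fact the paper imports from the literature, not an alternative to a proof in the text. Two remarks on the details. First, the cancellation you isolate is exactly the right one, but your appeal to \eqref{eq:oplus1} is slightly off-target: that identity is multiplicativity in the \emph{argument}, $w_n(x \oplus y) = w_n(x) w_n(y)$, whereas $w_{2^j+m} = w_{2^j} w_m$ is multiplicativity in the \emph{index}. The latter does hold for the paper's sequency ordering (the map sending the bits $n_i$ to the exponent bits $(n_i + n_{i+1}) \bmod 2$ is linear over the binary field), but the cleanest route avoids it entirely: by Definition \ref{def:walsh_func} the coefficient of $x_i$ in the exponent of $w_n$ is $(n_i + n_{i+1}) \bmod 2$, so for $2^j \leq n < 2^{j+1}$ the coefficient of $x_{j+1}$ equals $1$ while the coefficients of $x_i$ for $i > j+1$ vanish; hence on each $I_k = [k2^{-j},(k+1)2^{-j})$ one has $w_n = \sigma_k (-1)^{x_{j+1}}$ for some fixed $\sigma_k \in \{\pm 1\}$, which integrates to zero over $I_k$. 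This closes the one step you flagged as an obstacle. Second, the classical proof behind the citation organizes the same cancellation via a dyadic shift rather than blocks: since $w_n(x \oplus 2^{-j-1}) = w_n(x) w_n(2^{-j-1}) = -w_n(x)$ for $2^j \leq n < 2^{j+1}$ (here \eqref{eq:oplus1} genuinely is the relevant identity), a change of variables gives $\Ws f(n) = \tfrac{1}{2} \int_0^1 \bigl( f(x) - f(x \oplus 2^{-j-1}) \bigr) w_n(x) \d x$, and the H\"older estimate yields the same bound $C 2^{-j\alpha} \lesssim (n+1)^{-\alpha}$; your blockwise version is equivalent in substance and equally rigorous. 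Two minor points: for $\alpha = 1$ the paper's definition of uniform Lipschitz regularity requires $0 < \beta < 1$, so the endpoint case should be read directly as $|f(x)-f(y)| \leq C|x-y|$, under which your argument goes through unchanged; and the finite-expansion convention at dyadic rationals affects only a null set, so the integrals are unaffected, as you observe.
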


\begin{theorem}
    \label{thm:bound_ip}    
Let $\phi_{l,t}^{s} \in \Bwave^{J_0, \nu}$ with $\nu\geq 3$ and let $w_n \in
\Bwh$. 
For $l\geq J_0$ and $2^{k}\leq n < 2^{k+1}$ with $k\in \Z_+$, we have 
    \[ |\ind{\phi_{l,t}^{s}, w_n}|^2 \lesssim 2^{-k}2^{-|l-k|}\]
    for all $t \in \La_l$ and $s\in \{0,1\}$. For $n=0$ the bound hold with
    $k=0$. 
\end{theorem}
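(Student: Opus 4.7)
The plan is to chain Lemma~\ref{lem:ind_wave_had_1} with Lemma~\ref{lem:regularity_bound}, splitting into cases by comparing $n$ and $2^{l}$. Concretely, Lemma~\ref{lem:ind_wave_had_1} already yields
\begin{equation*}
|\ind{\phi_{l,t}^{s},w_n}|\;\leq\;2^{-l/2}\cdot 2\nu\cdot \max_{i\in\Gamma_{t}}\left|\Ws\!\left(\phi^{s}(\cdot+i)\big\lvert_{[0,1)}\right)(\floor{n/2^{l}})\right|,
\end{equation*}
so the entire task reduces to controlling the Walsh coefficient of the restriction $g_{i}:=\phi^{s}(\cdot+i)\big\lvert_{[0,1)}$ at frequency $\floor{n/2^{l}}$, uniformly in the finitely many shifts $i\in\Gamma_{t}$.

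In the low-frequency regime $k<l$ (equivalently $n<2^{l}$) one has $\floor{n/2^{l}}=0$, so the relevant Walsh coefficient reduces to $\int_{0}^{1}g_{i}$, which is bounded by $\|g_{i}\|_{\infty}\lesssim 1$ uniformly in $i,l,s$. This gives $|\ind{\phi_{l,t}^{s},w_n}|^{2}\lesssim 2^{-l}$; since $|l-k|=l-k$ in this regime, the right-hand side of the claim also equals $2^{-l}$, closing this case. The degenerate case $n=0$ is handled identically with $k=0$.

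In the high-frequency regime $k\geq l$ one has $\floor{n/2^{l}}\geq 2^{k-l}\geq 1$, and here I would invoke Lemma~\ref{lem:regularity_bound} with $\alpha=1$. For $\nu\geq 3$, Table~\ref{table:lipschitz_reg} gives that $\phi,\psi$ are uniformly Lipschitz of order at least $1.08>1$, and the boundary-corrected functions inherit this regularity by their construction as finite linear combinations of $\phi$ and $\psi$; consequently each $g_{i}$ is uniformly Lipschitz of order $1$ on $[0,1)$. Lemma~\ref{lem:regularity_bound} therefore yields $|\Ws g_{i}(\floor{n/2^{l}})|\lesssim 2^{-(k-l)}$, and combining with the $2^{-l/2}$ prefactor and squaring gives
\begin{equation*}
|\ind{\phi_{l,t}^{s},w_n}|^{2}\lesssim 2^{-l}\cdot 2^{-2(k-l)}=2^{-k}\cdot 2^{-(k-l)}=2^{-k}\cdot 2^{-|l-k|},
\end{equation*}
as required.

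The main technical obstacle is justifying the Lipschitz hypothesis of Lemma~\ref{lem:regularity_bound} for each $g_{i}$, since at first glance the restriction to $[0,1)$ could introduce effective jumps at the endpoints; however, $g_{i}$ is simply the restriction to $[0,1)$ of a globally Lipschitz function on $\R$, and Lemma~\ref{lem:regularity_bound} as stated requires only uniform Lipschitz continuity of $f:[0,1)\to\R$ without any boundary vanishing, so no extra work is needed. This is also precisely where the hypothesis $\nu\geq 3$ enters: for $\nu=2$ one has only $\alpha=0.55<1$, and the off-diagonal decay one obtains by this route is strictly weaker than $2^{-|l-k|}$, matching the caveat flagged in Section~\ref{subsec:rec_1d}.
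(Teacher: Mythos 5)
Your proposal is correct and follows essentially the same route as the paper's proof: both chain Lemma~\ref{lem:ind_wave_had_1} with Lemma~\ref{lem:regularity_bound} (using $\alpha \geq 1$ for $\nu \geq 3$ and the fact that the boundary-corrected functions inherit the regularity of $\phi,\psi$ as finite linear combinations). The only cosmetic difference is that you split into the regimes $k<l$ and $k\geq l$, whereas the paper absorbs both into the single chain $2^{-l}(\floor{2^{k-l}}+1)^{-2} \leq 2^{-k}2^{-|l-k|}$.
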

\begin{proof}
    To obtain the bound above we will combine Lemma \ref{lem:ind_wave_had_1}
and Lemma \ref{lem:regularity_bound}. We start by arguing that $\phi_{l,t}^{s}$
have the same regularity regardless of boundary extension. 
Let $a \in \Gamma_t$ where 
$\Gamma_t$ is as in lemma \ref{lem:ind_wave_had_1}. 

If $\phi_{l,t}^{s}$ originates from a
periodic wavelet, $\phi^{s}_{0,-a}\lvert_{[0,1)}$,  will have Lipschitz regularity
$\al > 0$,  since both $\phi$ and $\psi$ have this regularity. Next if
$\phi_{l,t}^{s}$ originates from a boundary wavelet and $t\in \Lambda_{\nu,l,\text{mid}}$,
$\phi_{0,-a}^{s}\lvert_{[0,1)}$ will have Lipschitz regularity $\al$, by the
same argument as above.  If $t \in \Lambda_{\nu,l,\text{left}} \cup \Lambda_{\nu, l,\text{right}}$ we know from the
construction of the boundary functions \cite{Cohen93} that these are finite
linear combinations of $\phi_{l,t}$ and $\psi_{l,t}$. These function will
therefore posses the same regularity $\al$ as the interior function.
    
    Next notice from table \ref{table:lipschitz_reg} that for $\nu \geq 3$
    vanishing moments, we known that $\alpha\geq 1$. Applying Lemma \ref{lem:ind_wave_had_1}
    and Lemma \ref{lem:regularity_bound} then gives
        \begin{align}
            \left|\ind{\phi_{l,t}^{s}, w_n} \right|^2 &\leq 2^{-l} 4\nu^2 \max_{a
            \in \Gamma_t}
            \left\{ \left|\Ws \phi^s(\cdot + a)\big\lvert_{[0,1)}
            \( \floor{\frac{n}{2^{l}}}\) \right|^2 \right\} \\
            &\lesssim 2^{-l} \frac{1}{(\floor{\frac{n}{2^{l}}}+1)^2}
            \leq 2^{-l} \frac{1}{(\floor{2^{k-l}}+1)^2} \leq 2^{-k} 2^{-|l-k|}
        \end{align}
        where $\Gamma_t$ depends on the boundary extension.
\end{proof}

\begin{theorem}
    \label{thm:bound_ip_haar}
    Let $w_n \in \Bwh$
    and let $\phi_{l,t}^{s} \in \Bwave^{J_0, 1}$ for $l\geq 0$ and $t\in \La_l$. Then 
    \begin{align*}
        |\ind{\phi_{l,t}^{0}, w_n}|^2 &= 
    \begin{cases} 2^{-l} &\text{if } n < 2^{l}\\
                  0 &\text{otherwise} 
    \end{cases} \\
        |\ind{\phi_{l,t}^{1}, w_n}|^2 &= 
    \begin{cases} 2^{-l} &\text{if } 2^{l}\leq n < 2^{l+1}\\
                  0 &\text{otherwise} 
    \end{cases}.
    \end{align*}
\end{theorem}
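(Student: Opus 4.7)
The plan is to compute both inner products directly. Since $\nu=1$, the Haar scaling function and wavelet are $\phi = \chi_{[0,1)}$ and $\psi = \chi_{[0,1/2)} - \chi_{[1/2,1)}$, and crucially no boundary correction is needed: for every $t \in \Lambda_l = \{0,\ldots,2^l-1\}$ the support of $\phi_{l,t}$ and $\psi_{l,t}$ already lies inside $[0,1)$, so both the ``periodic'' and ``boundary'' conventions in Definition \ref{def:Bwave} reduce to the interior functions. Thus $\phi_{l,t}^0(x) = 2^{l/2}\chi_{[t/2^l,(t+1)/2^l)}(x)$ and $\phi_{l,t}^1$ is the analogous $\pm 2^{l/2}$ function on the two halves of that interval.

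The two Walsh identities I plan to use are $w_n(x \oplus y) = w_n(x) w_n(y)$ and $w_n(2^{-j} x) = w_{\lfloor n/2^j \rfloor}(x)$, together with orthogonality $\int_0^1 w_m(y)\,dy = \delta_{m,0}$. For the scaling function, the substitution $x = (y+t)/2^l$ together with $(y+t)/2^l = y/2^l \oplus t/2^l$ (as verified in the proof of Lemma \ref{lem:ind_wave_had_1}) gives
\[
\langle \phi_{l,t}^0, w_n\rangle = 2^{-l/2} w_n(t/2^l) \int_0^1 w_{\lfloor n/2^l\rfloor}(y)\, dy,
\]
which equals $2^{-l/2} w_n(t/2^l)$ if $n < 2^l$ and vanishes otherwise. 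Taking absolute values and using $|w_n| = 1$ yields the first formula.

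For the wavelet, I would split the integral over the two halves $[t/2^l,(2t+1)/2^{l+1})$ and $[(2t+1)/2^{l+1},(t+1)/2^l)$, apply the same substitution on each, and use $(2t+1)/2^{l+1} = t/2^l \oplus 2^{-l-1}$ to obtain
\[
\langle \phi_{l,t}^1, w_n\rangle = 2^{-l/2-1} w_n(t/2^l)\bigl(1 - w_n(2^{-l-1})\bigr)\int_0^1 w_{\lfloor n/2^{l+1}\rfloor}(y)\, dy.
\]
The integral is $1$ exactly when $n < 2^{l+1}$ and is $0$ otherwise. For such $n$ the definition of the Walsh function evaluated at the single-bit dyadic $2^{-l-1}$ gives $w_n(2^{-l-1}) = (-1)^{n_{l+1}+n_{l+2}}$, and since $n < 2^{l+1}$ forces $n_{l+2} = 0$, this equals $+1$ when $n<2^l$ (killing the bracket) and $-1$ when $2^l \le n < 2^{l+1}$ (making the bracket equal to $2$). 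In the surviving case the inner product is $\pm 2^{-l/2} w_n(t/2^l)$, whose squared modulus is $2^{-l}$.

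The only real bookkeeping is tracking the dyadic bits of $n$ and confirming that no boundary-correction terms appear when $\nu=1$; both are routine. In particular, there is no regularity step here (contrast Theorem \ref{thm:bound_ip}), because the indicator-function form of Haar makes every integral computable in closed form.
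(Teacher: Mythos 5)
Your proposal is correct, and it is worth noting that the paper does not actually prove this theorem: its ``proof'' is a one-line citation to \cite{Antun16} and \cite{Terhaar18}, so your computation is a genuinely self-contained alternative. Your route is essentially the exact-Haar specialization of the strategy behind Lemma \ref{lem:ind_wave_had_1}: the same substitution $x = (y+t)/2^l$, the same dyadic-addition-equals-$\oplus$ observation (which the paper verifies at the start of that lemma's proof), and the identities \eqref{eq:oplus1} and \eqref{eq:floor} --- but because $\phi$ and $\psi$ are indicator functions, every integral collapses to the orthogonality relation $\int_0^1 w_m(y)\,\mathrm{d}y = \delta_{m,0}$ and you get equalities rather than the decay bounds of Theorem \ref{thm:bound_ip}. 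All the delicate points check out: for $\nu = 1$ the supports $[t2^{-l},(t+1)2^{-l}]$ lie in $[0,1]$, so the periodic correction terms in Definition \ref{def:Bwave} vanish identically (e.g.\ $\phi_{j,2^j+k}\lvert_{[0,1)} = 0$) and both boundary conventions coincide with the interior functions; the bit bookkeeping is right under the paper's convention $n = n_1 2^0 + n_2 2^1 + \cdots$, since $x = 2^{-l-1}$ has $x_j = \delta_{j,l+1}$, giving $w_n(2^{-l-1}) = (-1)^{n_{l+1}+n_{l+2}}$, and $n < 2^{l+1}$ indeed forces $n_{l+2} = 0$, so the bracket $1 - w_n(2^{-l-1})$ is $0$ for $n < 2^l$ and $2$ for $2^l \leq n < 2^{l+1}$, yielding $|\ind{\phi_{l,t}^1, w_n}|^2 = 2^{-l}$ exactly as claimed. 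What your approach buys is transparency and independence from external references; what the citation buys the authors is brevity, since the identities are established in the cited works. If you write this up, the only point deserving an explicit sentence is the measure-zero caveat on dyadic rationals (where the $\oplus$ identity can fail for the nonterminating expansion), which is harmless under the integral given the paper's finite-expansion convention.
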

\begin{proof}
    These equalities can be found in either \cite{Antun16} or \cite{Terhaar18}.
\end{proof}

\subsection{Proof of Proposition \ref{lem:1d_coher_bound},
\ref{lem:Haar_isometry} and \ref{lem:1d_coher_bound_haar}}
\label{subsec:loc_coherence}
Using the above results we are now able to determine the local coherences of 
$U = [\Bwh, \Bwave^{J_0, \nu}]$. 

\begin{proof}[Proof of Proposition \ref{lem:1d_coher_bound}]
    We use the bound found in Theorem \ref{thm:bound_ip}. 
    Recall that $ \M = [2^{J_0+1}, \ldots, 2^{J_0+r}]$ and 
$\Nm = [2^{J_0+1}, \ldots, 2^{J_0-1+r}, 2^{J_0+r+q}]$. For fixed $l \in \{ 1,\ldots,r\}$ and $k \in \{2,\ldots, r\}$ we have 
    \begin{align*}
        \mu_{k,l} &= \max\left\{ |\ind{\phi_{J_0-1+l,t}^{s}, w_n}|^2 : 
\begin{subarray}{c}  N_{k-1} \leq n < N_{k} \\ 
t \in \La_{J_0-1+l}, s\in \{0,1\}, \text{ if } l=1,\\s = 1 \text{ if } l>1   
\end{subarray}    
\right\}
\\
&\lesssim
2^{-(J_0-1+k)} 2^{-|(J_0-1+l)-(J_0-1+k)|}
\lesssim 
2^{-J_0 -k} 2^{-|l-k|}
\end{align*}
For $l \in \{1,\ldots, r\}$ and $k = 1$ we have $N_0=0$. This gives 
\begin{align*}
        \mu_{k,l} &= \max\left\{ |\ind{\phi_{J_0-1+l,t}^{s}, w_n}|^2 : 
\begin{subarray}{c}  0 \leq n < N_{1} \\ 
t \in \La_{J_0-1+l}, s\in \{0,1\}, \text{ if } l=1,\\s = 1 \text{ if } l>1   
\end{subarray}    
\right\}
\\
&\lesssim
2^{-(J_0-1+l)}
\lesssim 
2^{-J_0 -k} 2^{-|l-k|}
\end{align*}
\end{proof}

\begin{proof}[Proof of Proposition \ref{lem:Haar_isometry}]
    Since both $\Bwave^{J_0, 1}$ and $\Bwh$ are orthonormal, $U = [\Bwh, \Bwave^{J_0, 1}]$
is an isometry on $\ell^2(\N)$ i.e. $U^*U = I \in \Bs(\ell^{2}(\N))$. Let
$N = 2^k$ for some $k \in \N$ with $k \geq J_0+1$. Using Theorem
\ref{thm:bound_ip_haar} we see that 
\begin{align*}
    P_{N}^{\perp}UP_{N} 
&= \left\{
\ind{\phi_{j, t}^{s}, w_n} : 
\begin{subarray}{c} n \geq 2^k \\
s=1, J_0 \leq j < k, t \in \La_j \\
s=0, j=J_0, t \in \La_{J_0}
\end{subarray}
\right\} = 0
\end{align*}
which means that 
\begin{align*}
    (P_NUP_N)^* (P_NUP_N) &= 
((P_N+P_{N}^{\perp}) U P_{N})^* ( (P_N+P_{N}^{\perp})U P_N) \\
&= (UP_N)^* (UP_N) = P_NU^*UP_N = I \in \C^{N\times N}.
\end{align*}
\end{proof}

\begin{proof}[Proof of Proposition \ref{lem:1d_coher_bound_haar}]
    We use the bound found in Theorem \ref{thm:bound_ip_haar}. 
    Recall that $ \M = \Nm = [2^{J_0+1}, \ldots, 2^{J_0+r}] $. For fixed $k,l \in \{ 1,\ldots,r\}$ we have that 
    \begin{align*}
        \mu_{k,l} &= \max\left\{ |\ind{\phi_{J_0-1+l,t}^{s}, w_n}|^2 : 
\begin{subarray}{c}  N_{k-1} \leq n < N_{k} \\ 
t \in \La_{J_0-1+l}, s\in \{0,1\}, \text{ if } l=1,\\s = 1 \text{ if } l>1   
\end{subarray}    
\right\}
\\
&= 
\begin{cases} 
2^{-J_0-l+1} & \text{if } l=k \\
0 & \text{otherwise}
\end{cases}.
\end{align*}
\end{proof}

\subsection{About the sharpness of the local coherence bounds}
\label{subsec:sharpness_coher}
As can be seen from Proposition \ref{lem:1d_coher_bound_haar}, the coherence bounds 
for $\nu=1$ are sharp. However, for $\nu \geq 2$, we have not discussed their 
sharpness. In fact, none of the results in this paper consider the case for $\nu=2$
vanishing moments. The reason for this is that these wavelet have a 
Lipschitz regularity $\alpha \approx  0.55$, which means that the bound in 
Theorem \ref{thm:bound_ip} would have less rapid decay if we had included these wavelets in the theorem. To simplify the presentation we have chosen to exclude them.  

We will argue that Theorem \ref{thm:bound_ip} does not seem to extend to
wavelets with $\nu=2$ vanishing moments. Let $\M = \Nm = [2^{J_0+1}, \ldots,
2^{J_0+r}]$ and $U = [\Bwh, \Bwave^{J_0, \nu}]$ for $\nu\geq 2$.  Notice that setting 
$\nu = 2$
does only affect the local coherence estimates $\mu_{k,l}$ for $k \geq l$. For
$k < l$, the local coherences are unaffected by the regularity of the wavelet.
This follows from Lemma \ref{lem:ind_wave_had_1}, by setting $|\Ws \phi^s(\cdot
+l)(0)| \approx 1$.  Next consider the case where $k \geq l$, then Theorem
\ref{thm:bound_ip} suggests that $\mu_{k,l}/\mu_{k+1} \approx 4$ for $\nu\geq
3$.  

We now consider table \ref{table:mu_kl} and notice that for $\nu =
2$, all of the 18 entries in table \ref{table:mu_kl} have values less than $4$. 
This suggest that the bound in Theorem \ref{thm:bound_ip} does not extend to
the case of $\nu=2$ vanishing moments. From the same table we also observe that
for $\nu = 4$, the bound in Theorem \ref{thm:bound_ip} seem to be quite sharp.
While there are a few entries that are less than $4$, most are very close,
if not larger than this value. 
\begin{table}[htb]
    \begin{center}
\begin{tabular}{cccc}
$\mu_{k,l}/\mu_{k+1,l}$ & $l=1$ & $l=2$ & $l=3$ \\
$k=2$  & 3.017 &        &         \\ 
$k=3$  & 2.532 & 1.854 &         \\
$k=4$  & 3.292 & 2.532 &  1.846 \\ 
$k=5$  & 3.653 & 3.293 &  2.534 \\ 
$k=6$  & 3.828 & 3.653 &  3.293 \\ 
$k=7$  & 3.914 & 3.828 &  3.654 \\ 
$k=8$  & 3.957 & 3.914 &  3.828  
\end{tabular}
\quad
\begin{tabular}{cccc}
    $\mu_{k,l}/\mu_{k+1,l}$ & $l=1$ & $l=2$ & $l=3$ \\
    $k=2$ & 4.342 &  & \\
    $k=3$ & 6.160 & 3.439 &   \\ 
    $k=4$ & 3.643 & 6.202 & 3.503 \\ 
    $k=5$ & 4.060 & 3.639 & 6.286 \\ 
    $k=6$ & 3.961 & 4.064 & 3.632 \\ 
    $k=7$ & 4.004 & 3.960 & 4.070 \\ 
    $k=8$ & 3.996 & 4.004 & 3.960  
\end{tabular}

    \end{center}
    \vspace*{-\baselineskip}
    \caption{
Left: Fraction between the local coherences for $U = [\Bwh, \Bwave^{3, 2}]$ and
$\M = \Nm = [2^4, \ldots, 2^{11}]$. Right: Fraction between the local coherences
for $U = [\Bwh, \Bwave^{4, 4}]$ and $\M = \Nm = [2^5, \ldots, 2^{12}]$. }
    \label{table:mu_kl}
\end{table}

\subsection{Proof of remaining results in Section \ref{sec:4}}
\label{sec:proof2}

\begin{proof}[Proof of Proposition \ref{lem:Terhaar_1d}]
 This proposition is a consequence of Theorem 1.1 in
\cite{Terhaar17}.  
Let $\mathcal{S}_{N} = \{w_n: n = 0,\ldots, N-1\}$ and $\mathcal{R}_{M}$ be the 
$M$ first function in $\Bwave^{J_0, \nu}$. The subspace cosine angle between $\mathcal{S}_N$
and $\mathcal{R}_{M}$ is defined as 
\[
\cos(\om(\mathcal{R}_M, \mathcal{S}_{N})) = \inf_{f \in \mathcal{R}_{M}, \|f\|= 1} \|P_{\mathcal{S}_{N}} f\| \quad \text{where} \quad \om(\mathcal{R}_M, \mathcal{S}_N) \in [0,\pi/2],
\]
and $P_{\mathcal{S}_N}$ is the projection operator onto $\mathcal{S}_N$. As both $\Bwh$ and $\Bwave^{J_0,\nu}$
are orthonormal bases, the synthesis and analysis operators are unitary. We therefore have 
\begin{equation}
\inf_{f \in \mathcal{R}_{M}, \|f\|= 1} \|P_{\mathcal{S}_{N}} f\| = \inf_{x \in
\C^M, \|x\|_{2}=1} \|P_N U P_M x\|_{2}
\label{eq:cos_ang1}
\end{equation}
Furthermore notice that by equation \eqref{eq:bal_prop2} and the
definition of the balancing property, we have 
\begin{equation}
 \cos(\om(\mathcal{R}_M, \mathcal{S}_{N})) = \inf_{x \in \C^M, \|x\|_{2}=1} \|P_N U P_M x\|_{2} \geq \theta. 
\end{equation}
Hence if $U$ satisfies the balancing property of order $\theta \in (0,1)$ for $N$
and $M$, then $1/\cos(\om(\mathcal{R}_M, \mathcal{S}_{N})) \leq 1/\theta$, where 
$1/\theta > 1$.
Next for $M \in \N$ and $\gamma>1$ we define the \emph{stable sampling rate} as
\[
\Theta(M, \gamma) = \min(N\in \N: 1/\cos(\om(\mathcal{R}_M, \mathcal{S}_N)) <
\gamma).
\]
Rearranging the terms we see that if $N$, $M$ satisfies the stable sampling
rate of order $\gamma = 1/\theta > 1$ then $U$ satisfies the balancing property
of order $\theta$ for $N$ and $M$.

Theorem 1.1 in \cite{Terhaar17} states that for $M =2^r$, $r \in \N$ and for 
all $\gamma>1$ there exists a constant $S_{\gamma}>1$ (dependent on $\gamma$),
such that whenever $N \geq S_{\gamma} M$, then $1/\cos(\om(\mathcal{R}_M,
\mathcal{S}_N)) < \gamma$. Moreover, we have the relation 
$\Theta(M,\gamma) \leq S_{\gamma} M = \mathcal{O}(M)$. 
Hence if $q = \ceil{\log_{2}S_{1/\theta}}$ we see that the proposition hold with 
$N = 2^{k + q} \geq S_{1/\theta} 2^{k} > 2^{k} = M$. 
\end{proof}

\begin{proof}[Proof of Proposition \ref{prop:norm12}]
Using Theorem \ref{thm:bound_ip}, we see that $\mu(P_N U P_{K}^{\perp})
\lesssim K^{-1}$.
This gives 
\begin{align*}
    \|HP_{K}^{\perp}x\|_{2}^{2} &= \sum_{k=1}^{r} \frac{N_{k}-N_{k-1}}{m_{k}}
    \sum_{i \in \Om_{k}} \bigg| \sum_{j>K} U_{ij} x_j \bigg|^2  \\
    &\leq  \sum_{k=1}^{r} \frac{N_{k}-N_{k-1}}{m_{k}}
    \sum_{i \in \Om_{k}} \bigg ( \sum_{j>K} \sqrt{\mu(P_{N}UP_{K}^{\perp})} |x_j| \bigg )^2  \\
    &\leq  \sum_{k=1}^{r} (N_{k}-N_{k-1}) \mu(P_{N}UP_{K}^{\perp}) 
    \bigg ( \sum_{j>K} |x_j| \bigg)^2  \\
    &\leq N \mu(P_{N}UP_{K}^{\perp})
    \bigg( \sum_{j>K}  |x_j| \bigg)^2  \lesssim \frac{N}{K} \|x\|_{1}^{2}.
\end{align*}

\end{proof}

\begin{proof}[Proof of Theorem \ref{thm:1d_GRIPL}]
First recall that $\M = [2^{J_0+1},\ldots, 2^{J_0+r}]$ and
  $\Nm = [2^{J_0+1}, \allowbreak
\ldots, 2^{J_0+r-1}, 2^{J_0+r+q}]$ where $q\geq 0$ is
chosen so that $G$ satisfies the balancing property of order $0<\theta<1$. From
Lemma \ref{lem:bal_prop} we therefore have $\|G^{-1}\|_{2} \leq
1/\sqrt{\theta}$.  

From Theorem
\ref{t:RIPlevels1} we know that the matrix $A$ in equation
\eqref{def:A_infinite} satisfies the G-RIPL with $\delta_{\s,\M}
\leq \delta$, provided the sample densities $\m \in \N^r$ satisfies 
$m_{k} = N_{k}-N_{k-1}$ for $k=1,\ldots, r_0$, and 
\begin{equation}
    \label{eq:samp_cond_431}
    m_k \gtrsim  \delta^{-2} \cdot \|G^{-1}\|_{2}^{2} \cdot (N_k - N_{k-1}) \cdot \bigg (
    \sum^{r}_{l=1} \mu_{k,l} \cdot s_l \bigg ) \cdot L, 
\end{equation}
for $k=r_0+1,\ldots, r$. Next notice that $N_{k} - N_{k-1} = 2^{J_0 +k-1}$ for 
$k=2,\ldots, r-1$, while $N_{r} - N_{r-1} = 2^{J_0 +r}(2^{q}-2^{-1})$ and
$N_1-N_0 = 2^{J_0+1}$. Using the local coherences $\mu_{k,l}$ from Proposition 
\ref{lem:1d_coher_bound} we obtain 
\begin{align*}
    (N_{k}-N_{k-1})\bigg( \sum_{l=1}^{r} \mu_{k,l}s_l\bigg) 
&\lesssim 2^{J_0+k} 2^{q \max\{k+1-r, 0\}}
\bigg( \sum_{l=1}^{r} 2^{-J_0-k}2^{-|l-k|}s_l \bigg) \\
&= 2^{q \max\{k+1-r, 0\}}
\bigg( \sum_{l=1}^{r} 2^{ -|k- l|} s_l \bigg). \\
\end{align*}
Inserting this and $\|G^{-1}\|_{2}^{2} \leq \theta^{-1}$ into
\eqref{eq:samp_cond_431} leads to the sampling condition in Theorem
\ref{thm:1d_GRIPL}. 
\end{proof}

\begin{proof}[Proof of Theorem \ref{thm:1d_GRIPL_rec}]
    The theorem is identical to Corollary \ref{cor:overall}, except that we have 
    fixed $\M$ and $\Nm$. The concrete values for these have been inserted
    in condition $(iv)$ together with the local coherences $\mu_{k,l}$.
    The computation of this can be found in the proof above.
\end{proof}


\section*{Acknowledgements}
The authors would like to thank Simone Brugiapaglia, Simon Foucart, Remi Gribonval, Øyvind Ryan and Laura Thesing for useful discussions and comments.
BA acknowledges support from the Natural Sciences and Engineering Research Council of Canada through grant 611675. ACH acknowledges support from the UK Engineering and Physical Sciences Research Council (EPSRC) grant EP/L003457/1, a Royal Society University Research Fellowship, and the Philip Leverhulme Prize (2017).

    \newpage
\bibliographystyle{abbrv}
\bibliography{references}

\end{document}